\def\@seccntformat#1{%
	\protect\textup{\protect\@secnumfont
		\ifnum\pdfstrcmp{subsection}{#1}=0 \bfseries\fi
		\ifnum\pdfstrcmp{subsubsection}{#1}=0 \itshape\fi
		\csname the#1\endcsname
		\protect\@secnumpunct
	}%
}
\renewcommand{\@upn}{}
\DeclareRobustCommand{\crefnosort}[1]{%
	\begingroup\@cref@sortfalse\cref{#1}\endgroup
}
\numberwithin{equation}{section}
\newtheorem{thm}{Theorem}[section]
\newtheorem{lem}[thm]{Lemma}
\newtheorem{prop}[thm]{Proposition}
\newtheorem{cor}[thm]{Corollary}
\theoremstyle{definition}
\newtheorem{defn}[thm]{Definition}
\renewcommand*{\thehyp}{\Alph{hyp}}
\theoremstyle{remark}
\newtheorem{rem}[thm]{Remark}
\newtheorem{ex}[thm]{Example}
\crefname{hyp}{Hypothesis}{Hypotheses}\Crefname{hyp}{Hypothesis}{Hypotheses}
\crefname{lem}{Lemma}{Lemmas}\Crefname{lem}{Lemma}{Lemmas}
\crefname{thm}{Theorem}{Theorems}\Crefname{thm}{Theorem}{Theorems}
\crefname{prop}{Proposition}{Propositions}\Crefname{prop}{Proposition}{Propositions}
\crefname{enumi}{}{}\Crefname{enumi}{}{}
\crefname{equation}{}{}\Crefname{equation}{}{}
\crefname{rem}{Remark}{Remarks}\Crefname{rem}{Remark}{Remarks}
\crefname{ex}{Example}{Examples}\Crefname{ex}{Example}{Examples}
\renewcommand{\@upn}{} 
\newlist{enumthm}{enumerate}{1} 
\setlist[enumthm]{label=\upshape(\roman*),ref=\thethm\,(\roman*)}  
\newlist{enumcor}{enumerate}{1}
\setlist[enumcor]{label=\upshape(\roman*),ref=\thecor\,(\roman*)}
\newlist{enumlem}{enumerate}{1}
\setlist[enumlem]{label=\upshape(\roman*),ref=\thelem\,(\roman*)}
\newlist{enumprop}{enumerate}{1}
\setlist[enumprop]{label=\upshape(\roman*),ref=\theprop\,(\roman*)}
\newlist{enumhyp}{enumerate}{1}
\setlist[enumhyp]{label=\upshape(\roman*),ref=\thehyp\,(\roman*)}
\newlist{enumproof}{enumerate*}{1}
\setlist[enumproof]{label=\upshape(\roman*)}
\newlist{enumdef}{enumerate}{1}
\setlist[enumdef]{label=\upshape(\roman*),ref=\thedefn\,(\roman*)}
\newcounter{subcreftmpcnt} %
\newcommand\romansubformat[1]{(\roman{#1})} 
\def\subcref{\@ifstar\@@subcref\@subcref}
\newcommand\@subcref[2][\romansubformat]{%
	\ifcsname r@#2@cref\endcsname
	\cref@getcounter {#2}{\mylabel}%
	\setcounter{subcreftmpcnt}{\mylabel}%
	\hyperref[#2]{\romansubformat{subcreftmpcnt}}%
	\else ?? \fi}   
\newcommand\@@subcref[2][\romansubformat]{%
	\ifcsname r@#2@cref\endcsname
	\cref@getcounter {#2}{\mylabel}%
	\setcounter{subcreftmpcnt}{\mylabel}%
	\romansubformat{subcreftmpcnt}%
	\else ?? \fi}   
\DeclareRobustCommand{\crefnosort}[1]{%
	\begingroup\@cref@sortfalse\cref{#1}\endgroup
}
\def\endstepsymbol{$\lozenge$}
\def\endclaimsymbol{$\lozenge$}
\newcounter{proofstep}
\crefname{proofstep}{Step}{Steps}
\Crefname{proofstep}{Step}{Steps}
\newcounter{proofclaim}
\crefname{proofclaim}{Claim}{Claims}
\Crefname{proofclaim}{Claim}{Claims}
\newcommand{\cB}{{\mathcal B}}
\newcommand{\cF}{{\mathcal F}}
\newcommand{\cH}{{\mathcal H}}
\newcommand{\cM}{{\mathcal M}}
\newcommand{\cQ}{{\mathcal Q}}
\newcommand{\fF}{{\mathfrak F}}
\newcommand{\fM}{{\mathfrak M}}
\newcommand{\fQ}{{\mathfrak Q}}
\newcommand{\fh}{{\mathfrak h}}
\newcommand{\BC}{{\mathbb C}}
\newcommand{\BN}{{\mathbb N}}
\newcommand{\BR}{{\mathbb R}}
\newcommand{\BZ}{{\mathbb Z}}
\newcommand{\dsone}{{\mathds 1}}
\newcommand{\sD}{{\mathscr D}}
\newcommand{\sM}{{\mathscr M}}
\newcommand{\sfd}{{\mathsf d}}\newcommand{\sff}{{\mathsf f}}
\newcommand{\sfs}{{\mathsf s}}
\newcommand{\rmd}{{\mathrm d}}
\newcommand{\IR}{\BR}\newcommand{\IC}{\BC}
\newcommand{\N}{\BN}\newcommand{\Z}{\BZ}\newcommand{\R}{\BR}\newcommand{\C}{\BC}
\newcommand{\hs}{\fh}\newcommand{\HS}{\cH}
\newcommand{\eps}{\varepsilon}\newcommand{\ph}{\varphi}
\newcommand{\Id}{\dsone} 
\newcommand{\supp}{\operatorname{supp}}
\DeclareFontFamily{U}{mathx}{\hyphenchar\font45}
\DeclareFontShape{U}{mathx}{m}{n}{
	<5> <6> <7> <8> <9> <10>
	<10.95> <12> <14.4> <17.28> <20.74> <24.88>
	mathx10
}{}
\DeclareSymbolFont{mathx}{U}{mathx}{m}{n}
\DeclareMathAccent{\widecheck}{0}{mathx}{"71}
\DeclareMathAccent{\wideparen}{0}{mathx}{"75}
\DeclareFontFamily{OMX}{MnSymbolE}{}
\DeclareFontShape{OMX}{MnSymbolE}{m}{n}{
	<-6>  MnSymbolE5
	<6-7>  MnSymbolE6
	<7-8>  MnSymbolE7
	<8-9>  MnSymbolE8
	<9-10> MnSymbolE9
	<10-12> MnSymbolE10
	<12->   MnSymbolE12}{}
\DeclareSymbolFont{mnlargesymbols}{OMX}{MnSymbolE}{m}{n}
\DeclareMathDelimiter{\llangle}{\mathopen}{mnlargesymbols}{'164}{mnlargesymbols}{'164}
\DeclareMathDelimiter{\rrangle}{\mathclose}{mnlargesymbols}{'171}{mnlargesymbols}{'171}
\DeclareMathDelimiter{\lsem}{\mathopen}{mnlargesymbols}{'102}{mnlargesymbols}{'102}
\DeclareMathDelimiter{\rsem}{\mathclose}{mnlargesymbols}{'107}{mnlargesymbols}{'107}
\DeclareMathDelimiter{\langlebar}{\mathopen}{mnlargesymbols}{'152}{mnlargesymbols}{'152}
\DeclareMathDelimiter{\ranglebar}{\mathclose}{mnlargesymbols}{'157}{mnlargesymbols}{'157}
\DeclareMathDelimiter{\lWavy}{\mathopen}{mnlargesymbols}{'137}{mnlargesymbols}{'137}
\DeclareMathDelimiter{\rWavy}{\mathopen}{mnlargesymbols}{'137}{mnlargesymbols}{'137}
\newcommand{\FGamma}{\Gamma}
\newcommand{\FS}{\cF}\newcommand{\dG}{\sfd\FGamma}
\title[Wiener-Type Theorems for Laplace Transform]{Wiener-Type Theorems for the Laplace Transform.\\ With Applications to Ground State Problems}
\author{Benjamin Hinrichs}
\address{B. Hinrichs, Universit\"at Paderborn, Institut f\"ur Mathematik, Institut f\"ur Photonische Quantensysteme, Warburger Str. 100, 33098 Paderborn, Germany}
\email{benjamin.hinrichs@math.upb.de}
\author{Steffen Polzer}
\address{S. Polzer, Université de Genève, Section de mathématiques, Rue du Conseil-Général 7-9, 1205 Genève, Switzerland}
\email{steffen.polzer@unige.ch}
\newcommand{\1}{\mathds{1}}
\begin{document}
	
	\begin{abstract} 
		\noindent
		We study the behavior of a probability measure near the bottom of its support in terms of time averaged quotients of its Laplace transform. We discuss how our results are connected to both rank-one perturbation theory as well as renewal theory. We further apply our results in order to derive criteria for the existence and non-existence of ground states for a finite dimensional quantum system coupled to a bosonic field.
	\end{abstract}
	
	\maketitle
	
	\vspace*{-1em}
	\section{Introduction}
	For a finite Borel measure $\mu$ on the real line $\IR$,
	Wiener's theorem \cite{Wiener.1933}, sometimes also referred to as Wiener's lemma, provides equality of
	the $\ell^2$-norm of its atoms $\sum_x |\mu(\{x\})|^2$
	and the $L^2$-ergodic average $\lim_{T \to \infty}\frac1{2T}\int_{-T}^T|\hat\mu(t)|^2\,\rmd t$ of its Fourier transform $\hat\mu$, see for example \cite[\S\,VI,~Thm.~2.12]{Katznelson.2004}.
	It has many applications in ergodic theory and
	is the main ingredient in the proof of the famous RAGE theorem, see for example \cite[\S\,XI.17]{ReedSimon.1979},
	a key statement in spectral and scattering theory. The latter determines, for a given selfadjoint operator $H$, e.g.\ the Hamiltonian of a quantum system,
 	the large time asymptotics of the solutions $t\mapsto \psi_t=e^{-i tH}\psi_0$ to Schr\"odinger's equation, and thus the dynamics of the quantum system, in terms of the spectral parts of $H$.
	However, the study of the time-dependence of $\psi_t$ usually does not provide explicit information on certain parts of the spectrum $\sigma(H)$,
	since it is somewhat hidden in the fluctuations.
	Thus, especially when interested in studying the low-energy regime close to $E=\inf\sigma(H)$, e.g., the question whether $E$ is an eigenvalue of $H$,
	it is useful to study the semigroup $(e^{-tH})_{t\geq 0}$ and the solutions $t\mapsto e^{-tH}\psi_0$ to the heat equation instead,
	since spectral parts above $E$ will be exponentially suppressed therein for large $t$.
	This is an especially appealing approach,
	since path integral representations of the semigroup provided by Feynman--Kac formulas allow to apply probabilistic techniques,
	see \cite{LorincziHiroshimaBetz.2011,DemuthvanCasteren.2000} for textbook treatises on the subject.
	
	Reformulated in terms of the spectral measure $\mu$ of $H$ taken with respect to a suitable test vector $\phi$,
	the question whether $E$ is an eigenvalue of $H$ is equivalent to asking if $\mu$ is has an atom in $E$. Suitable here means that one needs to ensure that $\phi$ would be non-orthogonal to a potentially existing ground state.
	If $t\mapsto Z_t = \langle \phi, e^{-tH} \phi \rangle$ denotes the Laplace transform of $\mu$, one then needs to check whether $\mu(\{E\}) = \lim_{t\to \infty}e^{Et} Z_t$ is positive or vanishes.
	However, doing this directly would require in particular a very good understanding of the precise value of $E$,
	which in general can not be expected.
	It has previously been noted that one one might circumvent this problem by studying the limit of the quotient $Z_t^2/Z_{2t}$ as $t\to \infty$ instead,
	where a (non-)zero limit of this quotient implies that $\mu(\{E\})$ is (non-)zero, see again \cite{LorincziHiroshimaBetz.2011}.
	A more general treatment of this approach is, however, unknown to the authors.
	
	We here fill this gap and study a probability measure near the bottom of its support in terms of time-averaged quotients of its Laplace transform.
	More precisely, we especially prove a novel formula expressing $\mu(\{E\})$ as an ergodic average over quotients of the form $Z_sZ_{t-s}/Z_t$.
	Furthermore, we express the moment $\int_{(E,\infty)} \frac{1}{x-E}\,\mu(\rmd x)$ in terms of such ergodic averages, at least under the additional assumption $\mu(\{E\})>0$. All these results are collected and proven in \cref{Section: Wiener Type Theorems}.
	
	In view of our above motivation we further provide three applications of our results: (1) We connect them to rank-one perturbation theory of selfadjoint operators, yielding a natural interpretation of our results from a functional analytic point of view (\cref{Section: Perturbation Theory}). (2) We relate them to renewal theory, thus providing a natural interpretation in terms of probabilistic notions (\cref{Section: Renewal theory}). (3) We extend known results on ground state existence and absence for so-called generalized spin boson models as an important application of our formulas (\cref{Section: Generalized Spin-Boson models}). 
	
	\subsection*{Acknowledgments} BH acknowledges support by the Ministry of Culture and Science of the
	State of North Rhine-Westphalia within the project `PhoQC' (Grant
	Nr. PROFILNRW-2020-067). SP acknowledges funding from the Swiss State
	Secretariat for Education, Research and Innovation (SERI) through the consolidator grant ProbQuant,
	and funding from the Swiss National Science Foundation through the NCCR SwissMAP grant.
	
	\section{Wiener-Type Theorems}
	\label{Section: Wiener Type Theorems}
	In the following, let $\mu$ be a Borel probability measure on $\R$ whose support is bounded from below,
	where we as usual define the support of a Borel measure as the set of all points of which every open neighborhood has positive measure.
	Now let $E \coloneqq \inf \operatorname{supp}(\mu)$
	and let $Z = (Z_t)_{t\geq 0}$ be the Laplace transform of $\mu$, i.e.,
	\begin{align}
		\label{def:Laplace}
		Z_t \coloneqq \int_{[E, \infty)} e^{-tx} \, \mu(\mathrm dx)\quad \text{ for } t \geq 0.
	\end{align}
	It is well-known that the value of $E$ can be studied using $Z_t$, e.g., by employing the formula
	\begin{align}
		\label{eq:infsupp}
		E = -\lim_{t \to \infty}\frac1t\log Z_t,
	\end{align}
	which in turn follows from the simple estimate $e^{-t(E+\eps)}\mu([E,E+\eps))\le Z_t \le e^{-tE}$ for arbitrary $\eps>0$.
	Thus, there is an inherent connection between the exponential behavior of the Laplace transform
	and the infimum of the support of $\mu$.
	
	Exploiting this exponential behavior further allows to study the size of a (possible) atom of $\mu$ at $E$ in terms of $Z_t$,
	which is the result of the following Wiener-type formula.
	Notably, the ratios studied therein allow us to characterize $\mu(\{E\})$ without any knowledge of the exact value of $E$,
	which makes it especially useful in settings where the Laplace transform of $\mu$ is tractable but the calculation of $E$ remains complicated.
	\begin{thm}
		\label{Theorem: Convergence of averaged partition function}
		For any $\kappa \in (0, 1)$
		\begin{equation}
			\label{Equation: atom in E as a limit}
			\mu(\{E\}) = \lim_{t\to \infty} \frac{Z_{\kappa t} Z_{(1-\kappa)t}}{Z_t}  = \lim_{t\to \infty} \frac{1}{t}\int_0^t\frac{Z_{s} Z_{t-s}}{Z_t} \, \mathrm ds.
		\end{equation}
	\end{thm}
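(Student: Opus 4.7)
My plan is to split into the two cases $\mu(\{E\}) > 0$ and $\mu(\{E\}) = 0$, reducing the second case to a single value of $\kappa$ via the log-convexity of the Laplace transform, and finally to deduce the time-averaged form by bounded convergence. Shifting by $E$ is harmless — both $Z_{\kappa t}Z_{(1-\kappa)t}/Z_t$ and $Z_sZ_{t-s}/Z_t$ are invariant under translation of $\mu$ — so I will assume $E = 0$; then $Z_t \to \mu(\{0\})$ by dominated convergence.

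The case $\mu(\{0\}) > 0$ is immediate: each of $Z_{\kappa t}, Z_{(1-\kappa)t}, Z_t$ tends to $\mu(\{0\})$, hence so does $r_\kappa(t) := Z_{\kappa t}Z_{(1-\kappa)t}/Z_t$. The main obstacle is the case $\mu(\{0\}) = 0$, where both numerator and denominator vanish. For this, I exploit that $g := \log Z$ is convex on $[0,\infty)$ (Cauchy--Schwarz applied to the defining integral, or equivalently $(\log Z)''(s) = \operatorname{Var}_{\mu_s}[X] \geq 0$ for the tilted measure $\mu_s(\mathrm dx) = e^{-sx}\mu(\mathrm dx)/Z_s$). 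This makes $\kappa \mapsto \log r_\kappa(t)$ convex on $[0,1]$, symmetric about $\kappa = 1/2$, and vanishing at the endpoints, so that
\[
\log r_\kappa(t) \leq 2\min(\kappa, 1-\kappa)\,\log r_{1/2}(t) \leq 0,
\]
reducing the task to showing $r_{1/2}(t) \to 0$, i.e., $h(t) := g(t) - 2g(t/2) \to +\infty$.

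Establishing $h(t) \to +\infty$ is the technical heart of the proof. Here $g$ is convex with $g(0) = 0$, $g(t) \to -\infty$, and (by \eqref{eq:infsupp}) $g(t)/t \to 0$, which forces $g' \leq 0$ on $(0,\infty)$ with $g'(t) \to 0$; in particular $|g'|$ is nonincreasing. Monotonicity of $|g'|$ then yields
\[
h(t) = \int_0^{t/2}\bigl(|g'(s)| - |g'(s+t/2)|\bigr)\,\mathrm ds \geq |g(t/2)| - (t/2)|g'(t/2)| = \int_0^{t/2} u\, g''(u)\,\mathrm du,
\]
the last equality by integration by parts. A second Fubini argument gives $\int_0^\infty u\, g''(u)\,\mathrm du = \int_0^\infty |g'(s)|\,\mathrm ds = -g(\infty) = +\infty$, where the final equality uses precisely $\mu(\{0\}) = 0$. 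Hence $h(t) \to +\infty$, and together with the convexity reduction the first equality in \eqref{Equation: atom in E as a limit} follows.

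For the second identity, the substitution $s = \lambda t$ yields $\tfrac{1}{t}\int_0^t Z_sZ_{t-s}/Z_t\,\mathrm ds = \int_0^1 r_\lambda(t)\,\mathrm d\lambda$; since $r_\lambda(t) \leq 1$ uniformly in $\lambda$ (from the convexity bound $\log r_\lambda \leq 0$) and $r_\lambda(t) \to \mu(\{E\})$ pointwise on $(0,1)$ by the first part, bounded convergence delivers the claim.
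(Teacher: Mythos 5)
Your proof is correct, and it takes a genuinely different route from the paper's. The key difference is the mechanism for the hard case $\mu(\{E\})=0$: the paper writes $\tfrac1t\int_0^t Z_sZ_{t-s}\,\mathrm ds$ via Fubini as $\iint f_t(x,y)\,\mu(\mathrm dx)\mu(\mathrm dy)$ with $f_t(x,y)=(e^{-tx}-e^{-ty})/(t(y-x))$ and estimates the contributions near and away from the diagonal, whereas you exploit the log-convexity of $Z$ to reduce an arbitrary $\kappa$ to $\kappa=\tfrac12$ and then show $g(t)-2g(t/2)\to+\infty$ via the bound $g(t)-2g(t/2)\ge\int_0^{t/2}u\,g''(u)\,\mathrm du$ together with the Tonelli identity $\int_0^\infty u\,g''(u)\,\mathrm du=-g(\infty)=+\infty$. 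The logical order is also reversed: the paper proves the time-averaged identity first and recovers the pointwise limit from it using the monotonicity lemma, \cref{Proposition: Monotonicty of fraction of partition functions}, while you prove the pointwise limit first and obtain the average by the substitution $s=\lambda t$ and bounded convergence, which circumvents the need for that lemma in this theorem (though it remains useful later, e.g., in \cref{Theorem: rank-one perturbation}). Both arguments are complete; yours is more self-contained and elementary for this particular statement, while the paper's Fubini computation (via $f_t$ and its sibling $g_t$) is what gets reused in proving the second-order result, \cref{Theorem: Second order term}.
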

	\begin{rem}
		At least in the context of field-matter interactions described by Feynman--Kac formulas,
		as we will discuss in more detail in \cref{Section: Application to Spin systems},
		the case $\kappa=\frac12$ of the first identity is well-known
		and has been applied in various articles, see for example \cite{LorincziHiroshimaBetz.2011}
		for a textbook version.
		However, neither the case $\kappa\ne\frac12$ nor the novel averaging formula on the right hand side
		have to the authors knowledge appeared in the literature before.
	\end{rem}
	
	While the proof of \cref{Theorem: Convergence of averaged partition function} is elementary,
	we will see in the next section that it has a natural interpretation in terms of rank one perturbation theory.
	This connection will also further motivate the following result, which can again be shown by elementary means
	and is hence presented here as well.
	\begin{thm}
		\label{Theorem: Second order term}
		Assume that $\mu({\{E\}}) > 0$ and that
		\begin{equation}
			\label{Equation: decay close to bottom of spectrum}
			\int_{(E, \infty)}\frac{\mu(\mathrm dx)}{x-E} < \infty.
		\end{equation}
		Then 
		\begin{equation}
			\label{Equation: Second order term}
			\int_{(E, \infty)} \cfrac{\mu(\mathrm dx)}{x-E} =\, \, \lim_{t\to \infty} \, \, \cfrac{2\bigg(\displaystyle\int_0^t \mathrm ds \int_0^s \mathrm dr \, \cfrac{Z_{t-s} Z_{s-r} Z_r}{Z_t}\bigg) - \bigg(\int_0^t \, \mathrm ds\cfrac{Z_{t-s} Z_s}{Z_t}  \bigg)^2}{2\displaystyle\int_0^t \, \mathrm ds\cfrac{Z_{t-s} Z_s}{Z_t}}.
		\end{equation}
	\end{thm}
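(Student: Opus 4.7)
The plan is to reduce the identity to an elementary asymptotic analysis of convolutions of the renormalised Laplace transform $g(t) := e^{Et} Z_t$. Decomposing $\mu = m\delta_E + \mu|_{(E,\infty)}$ with $m := \mu(\{E\})>0$ and pushing the restriction forward under $x\mapsto x-E$ to a measure $\tilde\nu$ on $(0,\infty)$, one has $g(t) = m + h(t)$ with $h(t) := \int e^{-ty}\,\mathrm d\tilde\nu(y)$. Dominated convergence gives $h(t)\to 0$, and Fubini combined with \eqref{Equation: decay close to bottom of spectrum} gives $\int_0^\infty h(t)\,\mathrm dt = I$, where $I$ denotes the integral appearing in \eqref{Equation: decay close to bottom of spectrum}. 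The exponential factors cancel from every quotient on the right-hand side of \eqref{Equation: Second order term}, and the change of variables $(r,s)\mapsto (r,s-r,t-s)$ identifies $\int_0^t\int_0^s g(t-s)g(s-r)g(r)\,\mathrm dr\,\mathrm ds$ with the triple convolution $(g*g*g)(t)$. After this bookkeeping, the right-hand side of \eqref{Equation: Second order term} becomes
\[
R(t) \;:=\; \frac{2(g*g*g)(t)\,g(t) - \bigl((g*g)(t)\bigr)^2}{2\,g(t)\,(g*g)(t)}.
\]

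I would then establish three asymptotic statements as $t\to\infty$, each by expanding $g=m+h$ and using that $h$ is completely monotone, hence non-increasing. First, $(g*g)(t) = m^2 t + 2 m I + o(1)$, using $H(t):=\int_0^t h(s)\,\mathrm ds\to I$ and the elementary bound $(h*h)(t)\le 2\,h(t/2)\,I\to 0$ obtained by splitting the convolution at $t/2$. Second, writing $(g*g*g)(t) = \tfrac12 m^3 t^2 + 3m^2 K(t) + 3m P(t) + Q(t)$ with $K(t):=\int_0^t H(s)\,\mathrm ds$, $P(t):=\int_0^t (h*h)(s)\,\mathrm ds$, and $Q(t):=(h*h*h)(t)$, one verifies $K(t)=It+o(t)$ (averaging of $H\to I$), $P(t)\to I^2$ (Fubini), and $Q(t)\to 0$ (the same splitting trick), yielding $(g*g*g)(t) = \tfrac12 m^3 t^2 + 3m^2 I\,t + o(t)$. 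Third, and crucially,
\[
t\,h(t) \;=\; \int_{(0,\infty)} t\,e^{-ty}\,\mathrm d\tilde\nu(y)\;\longrightarrow\;0,
\]
which follows from dominated convergence with the $\tilde\nu$-integrable majorant $t e^{-ty}\le 1/(ey)$; here the hypothesis \eqref{Equation: decay close to bottom of spectrum} enters in an essential way as the $\tilde\nu$-integrability of $1/y$.

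The main obstacle I anticipate is precisely this third asymptotic: without $t\,h(t)\to 0$, the cross term $h(t)\cdot(g*g*g)(t)$ appearing in the numerator of $R(t)$ produces a contribution $m^3 t^2 h(t)$ that need not be $o(t)$, so the expected $O(t^2)$-cancellation would fail to leave a clean $O(t)$ remainder. With the three asymptotics in hand, direct substitution yields $2g(t)(g*g*g)(t) = m^4 t^2 + 6m^3 I\,t + o(t)$ and $\bigl((g*g)(t)\bigr)^2 = m^4 t^2 + 4m^3 I\,t + o(t)$, so the numerator of $R(t)$ equals $2m^3 I\,t + o(t)$, while the denominator equals $2m^3 t + O(1)$. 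Dividing gives $R(t)\to I$, establishing \eqref{Equation: Second order term}.
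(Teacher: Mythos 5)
Your proof is correct, but it follows a genuinely different organizational route from the paper's. The paper first proves an auxiliary lemma (Prop.\ 2.5, ``Asymptotics of convolution'') stating, after normalizing $E=0$, that $\int_0^t Z_{t-s}Z_s\,\mathrm ds = t\mu(\{0\})Z_t + 2\mu(\{0\})\mathcal I + R(t)$ with $R(t)\to 0$ (proved by splitting the double integral $\int g_t\,\mathrm d(\mu\otimes\mu)$ into the sets $\{x=y=0\}$, $\{\min(x,y)>0\}$, $\{\min(x,y)=0,\ x\neq y\}$); the main proof then cancels the $O(t^2)$ parts \emph{algebraically}, by applying the lemma twice and using the symmetrization $\int_0^t s\,Z_{t-s}Z_s\,\mathrm ds = \tfrac t2 \int_0^t Z_{t-s}Z_s\,\mathrm ds$ so that the problematic term $T_1$ cancels by construction before anyone has to estimate it. You instead decompose $g = m + h$ and expand $(g*g)$ and $(g*g*g)$ multinomially, computing each piece ($K$, $P$, $Q$, $h*h$) asymptotically and verifying the $t^2$-cancellation head-on. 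What both approaches share — and you correctly identify as the crux — is the estimate $t\,h(t)\to 0$ via dominated convergence with majorant $1/y$, which is where hypothesis \eqref{Equation: decay close to bottom of spectrum} enters; in the paper this is Eq.\ (2.11). Your argument is arguably more elementary and self-contained (no measure-theoretic splitting of $\mu\otimes\mu$ needed), at the cost of more explicit bookkeeping; the paper's lemma-plus-telescoping is slicker and reuses the lemma as the bridge to rank-one perturbation theory in \cref{Section: Perturbation Theory}. One small point worth making explicit in a write-up: for $Q=(h*h*h)\to 0$ you need the splitting trick applied to $h*(h*h)$, using $(h*h)(u)\le 2h(u/2)I$ together with $\int_0^\infty (h*h)\le I^2<\infty$ to control both halves of the split; this works, but $h*h$ itself need not be monotone, so the two halves require slightly different bounds than the first-order splitting for $h*h$.
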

	In particular, if the right hand side of \cref{Equation: Second order term} is infinite then so is the left hand side. While it might be desirable to show that finiteness of the right hand side also implies finitness of the left hand side, i.e.,\ that \cref{Theorem: Second order term} remains true even without Assumption \cref{Equation: decay close to bottom of spectrum}, we will leave this for further research. It should be noted, however, that in \cref{Theorem: Renewal representation Theorem 1 and Theorem 2} below we will give a condition in terms of renewal theory that is both sufficient as well as necessary for \cref{Equation: decay close to bottom of spectrum} to hold.
	
	The remainder of this section is devoted to the proof of these main results and can be skipped by readers more interested in their applications in the subsequent sections.
	
	\subsection{Proof of \cref{Theorem: Convergence of averaged partition function}}
	Before proving our first result, let us note the following useful properties of the ratios of Laplace transforms.
	\begin{prop}
		\label{Proposition: Monotonicty of fraction of partition functions}
		The following holds:
		\begin{enumerate}
			\item For any $t\geq 0$ the function
			\begin{equation}
				\label{Equation: quotient of partition functions}
				[0, t] \to (0, \infty), \quad s \mapsto \frac{Z_s Z_{t-s}}{Z_t}
			\end{equation}
			is decreasing on $[0, t/2]$ and increasing on $[t/2, t]$. In particular,
			\begin{equation*}
				\max_{0\leq s \leq t} \frac{Z_s Z_{t-s}}{Z_t} = 1, \quad \min_{0\leq s \leq t}\frac{Z_s Z_{t-s}}{Z_t} = \frac{Z_{t/2}^2}{Z_t}.
			\end{equation*} 
			\item For any $s \geq 0$ the function
			\begin{equation}
				\label{Equation: quotient of partition functions 2}
				[s, \infty) \to (0, \infty), \quad t \mapsto \frac{Z_s Z_{t-s}}{Z_t}
			\end{equation}
			is decreasing.
			\item The function
			\begin{equation}
				\label{Equation: Average quotient of partition functions}
				(0, \infty) \to \R, \quad t \mapsto \frac{1}{t} \int_0^t \frac{Z_s Z_{t-s}}{Z_t} \, \mathrm ds 
			\end{equation}
			is decreasing.
		\end{enumerate}
	\end{prop}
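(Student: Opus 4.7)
The plan is to reduce all three statements to a single basic property of the Laplace transform: the map $t \mapsto \phi(t) := \log Z_t$ is convex on $[0,\infty)$, with $\phi(0) = 0$ (since $\mu$ is a probability measure). Convexity is the familiar Cauchy--Schwarz estimate
\begin{equation*}
  Z_{(s_1+s_2)/2}^2 = \Big(\int e^{-(s_1+s_2)x/2}\, \mu(\mathrm d x)\Big)^2 \leq Z_{s_1} Z_{s_2},
\end{equation*}
which I would record first. Equivalently, the (one-sided) derivative $\phi'$ is nondecreasing on $(0,\infty)$; this is essentially the only analytic input needed.

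For part (i), I would observe that $s \mapsto \phi(s) + \phi(t-s)$ is convex (a sum of two convex functions in $s$) and symmetric about $s = t/2$. A convex function on $[0,t]$ symmetric about its midpoint is nonincreasing on $[0,t/2]$ and nondecreasing on $[t/2,t]$, and exponentiating this monotonicity (then dividing by the constant $Z_t$) gives the claim for $Z_s Z_{t-s}/Z_t$. The extremal values are then immediate: the midpoint value equals $Z_{t/2}^2/Z_t$ by inspection, while the boundary value is $Z_0 Z_t/Z_t = 1$.

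For (ii), I would differentiate $\log(Z_s Z_{t-s}/Z_t) = \phi(s) + \phi(t-s) - \phi(t)$ in $t$ with $s$ fixed, obtaining $\phi'(t-s) - \phi'(t) \leq 0$ since $\phi'$ is nondecreasing and $t - s \leq t$. Exponentiating yields the claim.

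For (iii), the cleanest approach is to rescale via $s = tu$, which turns the $t$-dependent domain into a fixed one:
\begin{equation*}
  \frac{1}{t}\int_0^t \frac{Z_s Z_{t-s}}{Z_t}\, \mathrm d s = \int_0^1 \frac{Z_{tu}\, Z_{t(1-u)}}{Z_t}\, \mathrm d u.
\end{equation*}
It then suffices to show that for each $u \in (0,1)$ the integrand is decreasing in $t$ (at $u \in \{0,1\}$ it is identically $1$). Taking the $t$-derivative of its logarithm yields
\begin{equation*}
  u\,\phi'(tu) + (1-u)\,\phi'(t(1-u)) - \phi'(t),
\end{equation*}
and since $tu \leq t$ and $t(1-u) \leq t$, monotonicity of $\phi'$ bounds the first two terms by $u\phi'(t)$ and $(1-u)\phi'(t)$, respectively, giving a nonpositive sum. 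Differentiation under the integral is routine to justify. I expect the only mildly non-obvious move to be the rescaling step in (iii); with that in hand, all three parts collapse to the single observation that $\phi'$ is nondecreasing, and there is no real obstacle beyond some bookkeeping.
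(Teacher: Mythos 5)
Your proof is correct, and it repackages the paper's argument in a cleaner way. The paper works by computing $\tfrac{\mathrm d}{\mathrm ds}(Z_s Z_{t-s})$ and $\tfrac{\mathrm d}{\mathrm dt}(Z_{t-s}/Z_t)$, rewriting each in terms of the tilted measure $\widehat\mu_s(\mathrm dx)=Z_s^{-1}e^{-sx}\mu(\mathrm dx)$, and then applying the one-dimensional FKG (Chebyshev correlation) inequality to $x\mapsto x$ and $x\mapsto e^{cx}$. Your route instead records once and for all that $\phi=\log Z$ is convex (Cauchy--Schwarz), equivalently that $\phi'$ is nondecreasing on $(0,\infty)$. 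These two inputs are in fact the \emph{same} inequality in disguise: the covariance $\mathrm{Cov}_{\widehat\mu_t}(x,e^{sx})$ that the paper estimates equals $Z_t^{-1}Z_{t-s}\big(\phi'(t)-\phi'(t-s)\big)$, so FKG for that pair is exactly monotonicity of $\phi'$. What your packaging buys is economy: (i) becomes the standard fact that a convex function symmetric about its midpoint is decreasing then increasing, (ii) is the one-line estimate $\phi'(t-s)\le\phi'(t)$, and (iii) is self-contained after the rescaling $s=tu$ — you do not need to invoke (i) and (ii) as the paper does, only the same convexity once more on the rescaled integrand.

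Two small remarks. First, in (iii) you do not actually need to justify differentiation under the integral sign: once you know that for each fixed $u\in(0,1)$ the integrand $t\mapsto Z_{tu}Z_{t(1-u)}/Z_t$ is nonincreasing (your pointwise calculation), monotonicity of the integral follows immediately, and the endpoints $u\in\{0,1\}$ contribute the constant $1$. Second, it is worth a sentence to note that $Z$ is $C^\infty$ on $(0,\infty)$ by dominated convergence, so that $\phi'$ genuinely exists there (convexity alone would only give one-sided derivatives); this also explains why all your derivative manipulations are confined to the open half-line, which neatly sidesteps the possibility that $\phi'_+(0)=-\int x\,\mu(\mathrm dx)=-\infty$ when $\mu$ lacks a finite first moment.
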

	\begin{proof}
		After a translation of $\mu$, i.e., eventually replacing $\mu$ by $\mu_E(\,\cdot\,)=\mu(\,\cdot\, +E)$ and observing  that this changes the Laplace transform to $e^{tE}Z_t$ thus leaving the ratio in \cref{Equation: quotient of partition functions} invariant,
		we might assume that $E = 0$.
		Further, we note that we might differentiate for $t>0$ under the integral (by the dominated convergence theorem) such that
		\[
		\frac{\rmd }{\rmd t}Z_t = \int_{[0,\infty)}(-x)e^{-tx} \mu(\rmd x)
		\]
		We may thus calculate the derivative in $s\in (0, t)$ as
		\begin{align*}
			&  {Z_s^{-2}} \cdot \frac{\mathrm d}{\mathrm ds} (Z_s Z_{t-s})\\
			&= Z_s^{-1}\int_{[0, \infty)} x e^{-(t-s)x} \, \mu(\mathrm dx)   - Z_{s}^{-2} \Big(\int_{[0, \infty)} xe^{-sx} \, \mu(\mathrm dx) \Big) \Big(\int_{[0, \infty)} e^{-(t-s)x} \, \mu(\mathrm dx) \Big)\\
			&= \int_{[0, \infty)} x e^{-(t-2s)x} \, \widehat{\mu}_s(\mathrm dx)  - \Big(\int_{[0, \infty)} x  \, \widehat{\mu}_s(\mathrm dx) \Big) \Big( \int_{[0, \infty)} e^{-(t-2s)x} \, \widehat{\mu}_s(\mathrm dx) \Big)
		\end{align*}
		where
		\begin{equation*}
			\widehat{\mu}_s(\mathrm dx) \coloneqq Z_s^{-1} e^{-sx} \, \mu(\mathrm dx).
		\end{equation*}
		We can now apply the FKG inequality
		\begin{align}
			\label{eq:FKG}
			\int fg\, \rmd \nu \ge \int f\, \rmd \nu \int g\,\rmd \nu
		\end{align}
		for probability measures $\nu$ on $[0, \infty)$ and $f$ and $g$ both having the same type of monotonicity, see for example \cite[\textsection~2.2]{Grimmett.1999} for a proof.
		Note that the inequality reverses, if one function is increasing and the other is decreasing.
		Applying \cref{eq:FKG} with $\nu=\widehat \mu_s$ thus yields the monotonicity of \cref{Equation: quotient of partition functions} on $[0, t/2]$ and $[t/2, t]$.
		
		In the same manner,
		\begin{align*}
			\frac{\mathrm d}{\mathrm dt}\frac{Z_{t-s}}{Z_t} = \Big(\int_{[0, \infty)} x  \, \widehat{\mu}_t(\mathrm dx) \Big) \Big( \int_{[0, \infty)} e^{sx} \, \widehat{\mu}_t(\mathrm dx) \Big) -\int_0^\infty x e^{sx} \, \widehat \mu_t(\mathrm ds) \leq 0
		\end{align*}
		by \cref{eq:FKG}, which shows that \cref{Equation: quotient of partition functions 2} is decreasing.
		
		It is left to show that \cref{Equation: Average quotient of partition functions} is decreasing.
		This however follows, since by the previous considerations for any $t>0$ and $\alpha>1$
		\begin{align*}
			\frac{1}{\alpha t} \int_0^{\alpha t} \frac{Z_s Z_{\alpha t-s}}{Z_{\alpha t}} \, \mathrm ds &= \frac{1}{t} \int_0^{t} \frac{Z_{\alpha s} Z_{\alpha t-\alpha s}}{Z_{\alpha t}}\, \mathrm ds 
			= \frac{2}{t} \int_0^{t/2} \frac{Z_{\alpha s} Z_{\alpha t-\alpha s}}{Z_{\alpha t}}\, \mathrm ds \\
			&\leq \frac{2}{t} \int_0^{t/2} \frac{Z_{s} Z_{\alpha t- s}}{Z_{\alpha t}}\, \mathrm ds 
			\leq \frac{2}{t} \int_0^{t/2} \frac{Z_{s} Z_{t- s}}{Z_{t}} = \frac{1}{t} \int_0^{t} \frac{Z_{s} Z_{t- s}}{Z_{t}} \, \mathrm ds. \qedhere
		\end{align*}
	\end{proof}
	We move to the 
	\begin{proof}[Proof of \cref{Theorem: Convergence of averaged partition function}]
		With out loss of generality we again assume that $E = 0$.
		Let us first consider the case $\mu(\{0\}) > 0$. Then by the dominated convergence theorem
		\begin{equation*}
			\lim_{t\to \infty} Z_t =  \mu(\{0\}).
		\end{equation*}
		Hence, for any $\kappa \in (0, 1)$,
		\begin{equation*}
			\lim_{t\to \infty} \frac{Z_{\kappa t} Z_{(1-\kappa)t}}{Z_t} = \mu(\{0\}).
		\end{equation*}
		Moreover, we have by Fubinis Theorem
		\begin{align}
			\label{eq:Fubiniaverage}
			\begin{aligned}
				\frac{1}{t}\int_0^t Z_s Z_{t-s} \, \mathrm ds &= \int_{[0, \infty)} \mu(\mathrm dx) \int_{[0, \infty)} \mu(\mathrm dy) \, e^{-ty}\frac{1}{t}\int_0^t \mathrm ds \, e^{s(y-x)}
				\\&= \int_{[0, \infty)} \mu(\mathrm dx) \int_{[0, \infty)} \mu(\mathrm dy) f_t(x, y)
			\end{aligned}
		\end{align}
		where
		\begin{equation}
			\label{eq:ft}
			f_t(x, y) \coloneqq \frac{e^{-tx} - e^{-ty}}{t(y-x)} \1_{\{x\neq y\}} + e^{-ty} \1_{\{x = y\}}.
		\end{equation}
		Notice that $f_t(x, y) = f_t(y, x)$ for all $x, y\geq 0$ and, by an application of the mean value theorem,
		\begin{equation}
			\label{eq:ftestimate}
			f_t(x, y) \leq e^{-tx}
		\end{equation}
		for all $t\geq 0$ and $0\leq x\leq y$. By the dominated convergence theorem, we obtain
		\begin{equation*}
			\lim_{t\to \infty} \frac{1}{t}\int_0^t Z_s Z_{t-s} \, \mathrm ds = \int_{[0, \infty)} \mu(\mathrm dx) \int_{[0, \infty)} \mu(\mathrm dy) \1_{\{x = y = 0\}} = \mu(\{0\})^2.
		\end{equation*}
		which concludes the proof of \cref{Equation: atom in E as a limit} for the case that $\mu(\{0\}) > 0$.
		
		Let us now on the contrary assume that $\mu(\{0\}) = 0$.
		First, notice that \cref{eq:Fubiniaverage} implies
		\begin{equation*}
			\frac{1}{t}\int_0^t Z_s Z_{t-s} \, \mathrm ds \leq 2\int_{[0, \infty)} \mu(\mathrm dx) \int_{[x, \infty)} \mu(\mathrm dy) f_t(x, y),
		\end{equation*}
		where equality does not necessarily hold because of potential atoms of $\mu \otimes \mu$ on the diagonal.
		We fix some $\varepsilon > 0$ and split the right hand side
		\begin{equation*}
			\int_{[0, \infty)} \mu(\mathrm dx) \int_{[x, \infty)} \mu(\mathrm dy) f_t(x, y) = T_1(t) + T_2(t) + T_3(t)
		\end{equation*}
		with
		\begin{align*}
			T_1(t) &\coloneqq \int_{[0, \varepsilon]} \mu(\mathrm dx) \int_{[x, x +\varepsilon]} \mu(\mathrm dy) f_t(x, y), \\
			T_2(t) &\coloneqq \int_{[0, \varepsilon]} \mu(\mathrm dx) \int_{(x +\varepsilon, \infty)} \mu(\mathrm dy) f_t(x, y), \\
			T_3(t) &\coloneqq \int_{(\varepsilon, \infty)} \mu(\mathrm dx) \int_{[x, \infty)} \mu(\mathrm dy) f_t(x, y).
		\end{align*}
		Inserting the estimate \cref{eq:ftestimate} in $T_1$ and the definition \cref{eq:ft} in $T_2$, we find
		\begin{align*}
			T_1(t) &\leq \int_{[0, \varepsilon]} \mu(\mathrm dx) \int_{[x, x +\varepsilon]} \mu(\mathrm dy) e^{-tx} \leq \mu([0, 2 \varepsilon]) \cdot Z_t,\\
			T_2(t) &\leq \int_{[0, \varepsilon]} \mu(\mathrm dx) \int_{(x +\varepsilon, \infty)} \mu(\mathrm dy) \frac{e^{-tx}}{t\varepsilon} \leq \frac{1}{t\varepsilon} Z_t.
		\end{align*}
		This leads to
		\begin{equation*}
			\limsup_{t\to \infty} \frac{T_1(t) + T_2(t)}{Z_t} \leq \mu([0, 2 \varepsilon]).
		\end{equation*}
		Moreover, again applying the estimate \cref{eq:ftestimate} and using that  $\mu([0, \varepsilon/2])>0$ as $0$ is the infimum of the support of $\mu$, we find
		\begin{equation*}
			\limsup_{t \to \infty} \frac{T_3(t)}{Z_t} \leq \limsup_{t \to \infty} \frac{\int_{(\varepsilon, \infty)} \mu(\mathrm dx) e^{-tx}}{\mu([0, \varepsilon/2]) e^{-\varepsilon t/2}} = 0.
		\end{equation*}
		Combining the above assumptions, we have shown that
		\begin{equation*}
			\limsup_{t\to \infty} \frac{1}{t}\int_0^t \frac{Z_s Z_{t-s}}{Z_t} \, \mathrm ds \leq 2\mu([0, 2 \varepsilon])
		\end{equation*}
		for all $\varepsilon>0$. Hence, if $\mu(\{0\}) = 0$ then
		\begin{equation*}
			\lim_{t\to \infty} \frac{1}{t}\int_0^t \frac{Z_s Z_{t-s}}{Z_t} \, \mathrm ds = 0.
		\end{equation*}
		Furthermore, for $\kappa\in (0, 1/2]$, \cref{Proposition: Monotonicty of fraction of partition functions} implies
		\begin{equation*}
			\frac{1}{t}\int_0^t \frac{Z_s Z_{t-s}}{Z_t} \, \mathrm ds \geq \frac{1}{t}\int_0^{\kappa t} \frac{Z_s Z_{t-s}}{Z_t} \, \mathrm ds \geq \kappa \frac{Z_{\kappa t} Z_{(1-\kappa) t}}{Z_t}
		\end{equation*}
		and hence 
		\begin{equation*}
			\lim_{t\to \infty} \frac{Z_{\kappa t} Z_{(1-\kappa) t}}{Z_t} = 0. \qedhere
		\end{equation*}
	\end{proof}
	
	\subsection{Proof of \cref{Theorem: Second order term}}
	%
	%
	
	In the spirit of the proof of \cref{Theorem: Convergence of averaged partition function}, the following observation is important in proving \cref{Theorem: Second order term}.
	
	\begin{prop}
		\label{Lemma: Asymptotics of convolution}
		Assume that $E=0$ and that \cref{Equation: decay close to bottom of spectrum} holds. Then there exists a continuous function $R:[0, \infty) \to (0, \infty)$ with $\lim_{t\to \infty} R(t) = 0$ such that for all $t>0$
		\begin{equation*}
			\int_0^t Z_{t-s} Z_s \mathrm ds = t \mu(\{0\}) Z_t + 2 \mu(\{0\}) \int_{(0, \infty)} \frac{\mu(\mathrm dx) }{x} + R(t). 
		\end{equation*}
	\end{prop}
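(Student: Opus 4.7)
The plan is to split the measure into its atomic and absolutely-continuous-near-zero parts. Write $\mu = m\delta_0 + \nu$, where $m \coloneqq \mu(\{0\}) > 0$ and $\nu$ denotes the restriction of $\mu$ to $(0, \infty)$, and set $\nu^*(t) \coloneqq \int_{(0,\infty)} e^{-tx}\, \nu(\mathrm dx)$, so that $Z_t = m + \nu^*(t)$ for every $t\ge 0$. Expanding the product $Z_{t-s} Z_s = m^2 + m(\nu^*(s) + \nu^*(t-s)) + \nu^*(s)\nu^*(t-s)$ and integrating over $s \in [0,t]$ while using $tmZ_t = tm^2 + tm\nu^*(t)$, together with Fubini's theorem to rewrite $\int_0^\infty \nu^*(s)\,\mathrm ds = \int_{(0,\infty)} \nu(\mathrm dx)/x = M$ (which is finite by the hypothesis \cref{Equation: decay close to bottom of spectrum}), I would identify
\begin{equation*}
R(t) = \int_0^t \nu^*(s)\nu^*(t-s)\,\mathrm ds \;-\; tm\,\nu^*(t) \;-\; 2m \int_{(0,\infty)} \frac{e^{-tx}}{x}\,\nu(\mathrm dx).
\end{equation*}
The continuity of $R$ is then immediate from the continuity of the Laplace transform and dominated convergence.

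It remains to show that each of the three summands tends to $0$ as $t \to \infty$. The third summand vanishes by dominated convergence with majorant $1/x$, integrable against $\nu$ by hypothesis. For the second, I would use the elementary calculus bound $te^{-tx} \le 1/(ex)$ valid for all $t\ge 0$, $x>0$; writing $tm\nu^*(t) = m\int te^{-tx}\,\nu(\mathrm dx)$ and noting pointwise decay $te^{-tx}\to 0$, another application of dominated convergence gives the conclusion.

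The main obstacle is the convolution term, since $\nu^*$ need not be integrable in general and its convolution with itself need not decay pointwise for arbitrary $L^1$ functions. The key observation that rescues it here is that $\nu^*$, being the Laplace transform of a positive finite measure on $(0,\infty)$, is non-increasing, and $\nu^*(t)\to 0$ as $t \to \infty$ since $\nu(\{0\})=0$. Splitting the integral symmetrically around $t/2$ and bounding $\nu^*(t-s) \le \nu^*(t/2)$ for $s \in [0, t/2]$, I would obtain
\begin{equation*}
\int_0^t \nu^*(s)\nu^*(t-s)\,\mathrm ds \;=\; 2\int_0^{t/2} \nu^*(s)\nu^*(t-s)\,\mathrm ds \;\le\; 2\nu^*(t/2)\int_0^{t/2}\nu^*(s)\,\mathrm ds \;\le\; 2M\,\nu^*(t/2),
\end{equation*}
which tends to $0$. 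Combining the three limits yields $R(t) \to 0$, completing the proof.
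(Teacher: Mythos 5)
Your proof is correct. The underlying decomposition is the same as the paper's — both isolate the atom at $0$ and treat the resulting pieces separately — but your execution is somewhat different and arguably more transparent. The paper reuses the Fubini representation $\int_0^t Z_{t-s}Z_s\,\mathrm ds = \int \mu^{\otimes 2}(\mathrm dx\,\mathrm dy)\,g_t(x,y)$ with $g_t(x,y)=\frac{e^{-tx}-e^{-ty}}{y-x}$ from the earlier proof, splits according to whether $(x,y)$ lies at $(0,0)$, on a coordinate axis, or in the open quadrant, and controls the pieces via the bound $g_t(x,y)\le te^{-t\min(x,y)}$ and dominated convergence against the majorant $x^{-1}$. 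You instead factor $Z_t=m+\nu^*(t)$, expand the product, and handle each piece by direct Laplace-transform manipulations; your argument for the $\nu^*\!*\nu^*$ term uses monotonicity of $\nu^*$ and the finiteness of $\int_{(0,\infty)}x^{-1}\nu(\mathrm dx)$ rather than a pointwise kernel bound. Both routes hinge on the same facts (integrability of $x^{-1}$ against $\nu$, $te^{-tx}\le (ex)^{-1}$, dominated convergence), so the two proofs are essentially equivalent in substance, yours being a bit more self-contained. One cosmetic remark: you write $m=\mu(\{0\})>0$, but this is not needed — the proposition does not assume it, and your argument goes through verbatim (and trivially so) when $m=0$.
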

	
	\begin{proof}
		Recalling \cref{eq:Fubiniaverage,eq:ft,eq:ftestimate} from the proof of \cref{Theorem: Convergence of averaged partition function}, by Fubinis theorem we have
		\begin{equation}
			\label{Equation: Convolution of partition functions as integral over mu}
			\int_0^t Z_{t-s} Z_s \, \mathrm ds = \int_{[0, \infty)^2} \mu^{\otimes 2}(\mathrm dx \mathrm dy)\, g_t(x, y)
		\end{equation}
		where the function $g_t:[0, \infty)^2 \to [0, \infty)$ is defined by
		\begin{equation*}
			g_t(x, y) \coloneqq \frac{e^{-tx} - e^{-ty}}{(y-x)} \1_{\{x\neq y\}} + te^{-ty} \1_{\{x = y\}}
		\end{equation*}
		which satisfies
		\begin{equation*}
			g_t(x, y) \leq t e^{- t \operatorname{min}(x, y)}
		\end{equation*}
		for all $x, y\in [0, \infty)$. We have
		\begin{equation*}
			T_1(t) \coloneqq \int_{[0, \infty)^2} \mu^{\otimes 2}(\mathrm dx \mathrm dy) \1_{\{x = y = 0\}} g_t(x, 	y) = t \mu(\{0\}) Z_t + R_1(t)
		\end{equation*}
		where
		\begin{equation*}
			R_1(t) \coloneqq t\mu(\{0\})^2 -  t \mu(\{0\}) Z_t = - \mu(\{0\})\int_{(0, \infty)} t e^{-tx} \, \mu(\mathrm dx).
		\end{equation*}
		By \cref{Equation: decay close to bottom of spectrum}, $E=0$ and the dominated convergence theorem with majorant $x\mapsto x^{-1}$, we have
		\begin{align}
			\label{Lemma: Convergence of integral to 0}
			\lim_{t \to \infty} \int_{(0, \infty)} t e^{-tx} \mu(\mathrm dx) = 0,
		\end{align}
		so $R_1(t)\to 0$ as $t\to \infty$. Moreover, utilizing \cref{Lemma: Convergence of integral to 0} once more, we find
		\begin{align*}
			T_2(t) &\coloneqq \int_{[0, \infty)^2} \mu^{\otimes 2}(\mathrm dx \mathrm dy) \1_{\{ \operatorname{min}(x, y)>0  \}} g_t(x, 	y) \\
			&\leq 2 \int_{(0, \infty)} \mu(\mathrm dx) \int_{[x, \infty)} \mu(\mathrm dy) g_t(x, y) \\
			&\leq 2\int_{(0, \infty)} \mu(\mathrm dx) te^{-tx} \xrightarrow{t\to\infty} 0.
		\end{align*}
		Finally, we have
		\begin{align*}
			T_3(t) &\coloneqq \int_{[0, \infty)^2} \mu^{\otimes 2}(\mathrm dx \mathrm dy) \1_{\{ \operatorname{min}(x, y)=0, x\neq y  \}} g_t(x, 	y) \\
			&= 2 \mu(\{0\}) \int_{(0, \infty)} \mu(\mathrm dy) g_t(0, y) \\
			&= 2 \mu(\{0\}) \int_{(0, \infty)} \frac{\mu(\mathrm dy)}{y} + R_3(t)
		\end{align*}
		where
		\begin{equation*}
			R_3(t) \coloneqq -\int_{(0, \infty)} \mu(\mathrm dy) \frac{e^{-ty}}{y} \xrightarrow{t\to\infty} 0
		\end{equation*}
		by  the dominated convergence theorem.
		By \cref{Equation: Convolution of partition functions as integral over mu}, we thus have
		\begin{equation*}
			\int_0^t Z_{t-s} Z_s \, \mathrm ds = T_1(t) + T_2(t) + T_3(t) = t \mu(\{0\}) Z_t + 2 \mu(\{0\}) \int_{(0, \infty)} \frac{\mu(\mathrm dy)}{y} +  R(t)
		\end{equation*}  
		where $R(t) \coloneqq R_1(t) + T_2(t) + R_3(t) \to 0$ as $t\to \infty$. 
	\end{proof}
	We can now give the
	\begin{proof}[Proof of \cref{Theorem: Second order term}]
		Similar to the proof of \cref{Theorem: Convergence of averaged partition function}, as shifting $\mu$ leaves all considered quotients involving $Z$ invariant,
		we can again assume w.l.o.g. that $E=0$.
		By dominated convergence and \cref{Theorem: Convergence of averaged partition function}, we then have
		\begin{align}
			\label{eq:recallfirstthm}
			\lim_{t\to \infty} Z_t = \mu(\{0\}), \quad \lim_{t\to \infty} \frac{1}{t}\int_0^t\frac{Z_{s} Z_{t-s}}{Z_t} \, \mathrm ds = \mu(\{0\}).
		\end{align}
		It is therefore sufficient to show that
		\begin{align}
			\label{eq:goalofproof}
			\lim_{t\to \infty} \frac{1}{t} \bigg[2 Z_t \int_0^t \mathrm ds \int_0^s \mathrm dr \, Z_{t-s} Z_{s-r} Z_r - \bigg(\int_0^t \, \mathrm ds\ Z_{t-s} Z_s  \bigg)^2\bigg] = 2\mu(\{0\})^3\mathcal I
		\end{align}
		with $\mathcal I \coloneqq  \int_{(0, \infty)} x^{-1} \, \mu(\mathrm dx)$.
		Let $R:[0,\infty)\to[0,\infty)$ be chosen as in \cref{Lemma: Asymptotics of convolution}. Then
		\begin{equation*}
			2Z_t \int_0^t \mathrm ds \, Z_{t-s} \int_0^s  \mathrm dr \, Z_{s-r} Z_r = T_1(t) + T_2(t)
		\end{equation*}
		with
		\begin{align*}
			T_1(t) &\coloneqq 2\mu(\{0\}) Z_t  \int_0^t \mathrm ds \, s Z_{t-s} Z_s, \\
			T_2(t) &\coloneqq 2Z_t \int_0^t \mathrm ds \, Z_{t-s}\Big(2\mu(\{0\}) \mathcal I + R(s) \Big).
		\end{align*}
		Now observing that
		\begin{equation*}
			\int_0^t \mathrm ds \, s Z_{t-s} Z_s = \frac{1}{2}\int_0^t \mathrm ds \, s Z_{t-s} Z_s +  \frac{1}{2}\int_0^t \mathrm ds \,(t - s) Z_{t-s} Z_s = \frac{t}{2}\int_0^t \mathrm ds \, Z_{t-s} Z_s,
		\end{equation*}
		we find
		\begin{align*}
			T_1(t) = t \mu(\{0\}) Z_t \int_0^t \mathrm ds \, Z_{t-s} Z_s.
		\end{align*}
		Thus, once more applying \cref{Lemma: Asymptotics of convolution}, we have 
		\begin{equation*}
			\bigg(\int_0^t \mathrm ds \, Z_{t-s} Z_s  \bigg)^2 = T_1(t) + T_3(t)
		\end{equation*}	
		where
		\begin{equation*}
			T_3(t) \coloneqq \bigg(2 \mu(\{0\}) \mathcal I + R(t) \bigg) \int_0^t \, \mathrm ds\ Z_{t-s} Z_s.
		\end{equation*}
		Summarizing the above observations, we have
		\[
		2 Z_t \int_0^t \mathrm ds \int_0^s \mathrm dr \, Z_{t-s} Z_{s-r} Z_r - \bigg(\int_0^t \, \mathrm ds\ Z_{t-s} Z_s  \bigg)^2 = T_2(t)-T_3(t).
		\]
		From \cref{eq:recallfirstthm}, we see
		\[
		\lim_{t\to \infty} \frac{1}{t} T_3(t) = 2\mu(\{0\})^3 \mathcal I.
		\]
		Further, e.g., by Ces\`aros theorem, we have 
		\begin{equation*}
			\lim_{t \to \infty} \frac{1}{t} \int_0^t Z_s \, \mathrm ds = \lim_{t\to \infty} Z_t = \mu(\{0\}), \quad \lim_{t \to \infty} \frac{1}{t} \int_0^t |R(s)|\, \mathrm ds = \lim_{t\to \infty} |R(t)| = 0
		\end{equation*}
		and hence, since $0 \leq Z_{t-s} \leq 1$ for all $0\leq s \leq t$,
		\begin{equation*}
			\lim_{t\to \infty} \frac{1}{t} T_2(t) = 4\mu(\{0\})^3 \mathcal I.
		\end{equation*}
		Combining these observations proves \cref{eq:goalofproof} and thus the statement.
	\end{proof}

	\section{A Link To Rank-One Perturbation Theory}
	\label{Section: Perturbation Theory}
	
	In this section, we connect our theorems to perturbation theory by writing $\mu$ as the spectral measure of a suitable self-adjoint operator.
	By introducing a family of rank one perturbations of said operator, we will see that \cref{Theorem: Convergence of averaged partition function,Theorem: Second order term} correspond to first and second order perturbation theory, respectively.
	Further exploring this connection should allow one to derive higher order analogues of these results by similar means.
	In \cref{Theorem: rank-one perturbation}, we then apply our previous results in order to deduce some fundamental properties of the ground state energy of rank one perturbations of a self-adjoint operator.
	
	We now assume that $H$ is a lower-bounded selfadjoint operator on some Hilbert space $\mathcal H$ and that $\psi\in\HS$ is a unit vector such that
	\begin{equation}
		\label{Equation: Partition function generated by Hamiltonian}
		Z_t = \langle \psi, e^{-tH} \psi \rangle
	\end{equation}
	for all $t\geq 0$, i.e., such that $\mu$ is the spectral measure of $H$ with respect to $\psi$. Moreover, we assume that
	\begin{equation}
		\label{eq:infspec}
		\inf \sigma(H) = \inf \operatorname{supp}(\mu) = E
	\end{equation}
	holds. Notice that all probability measures $\mu$ on $\R$ with lower bounded support have a representation of that form as we might set $\mathcal H = L^2(\R, \mu)$, $(H\phi)(x) = x \phi(x)$ for $\phi$ such that $\int_\R x^2 |\phi(x)|^2 \, \mu(\mathrm dx) < \infty$ and $\psi = 1$.
	\begin{rem}
		\label{rem:pospres}
		Assume that $\HS=L^2(\sM,\nu)$ is the space of square-integrable functions over a measure space $(\sM,\nu)$
		and that the semigroup $(e^{-tH})_{t\geq 0}$ is positivity preserving,
		i.e.\ that $e^{-tH}\phi\ge 0$ holds $\nu$-almost everywhere for any $\phi\in L^2(\cM,\nu)$ such that $\phi\ge0$ holds $\nu$-a.e. Then
		\cref{eq:infspec} holds for any $\psi\in L^2(\cM,\nu)$ such that $\psi>0$ $\nu$-a.e, see \cite[Thm.~C.1]{MatteMoller.2018} for a detailed proof.
		We will apply this in \cref{Section: Generalized Spin-Boson models}, by using that the operator of interest therein is unitarily equivalent to an operator $H$ of that form. 
	\end{rem}
	Let us define the family of rank-one perturbations
	\begin{equation*}
		H_\alpha \coloneqq H + \alpha \langle \psi, \cdot \rangle \psi
	\end{equation*}
	and let $E_\alpha \coloneqq \inf \sigma(H_\alpha)$. Using that 
	\begin{align}
		\label{eq:variation}
		E_\alpha = \inf \big \{\langle \phi, H_\alpha\phi \rangle:\, \phi \in \mathcal D(H), \, \|\phi\| = 1   \big\},
	\end{align}
	we observe that the function $\IR\to\IR,\,\alpha \mapsto E_\alpha$ is increasing as well as concave,
	as the infimum of increasing and concave (in fact linear) functions.
	
	Formally, when assuming that $E_\alpha$ is an eigenvalue of $H_\alpha$ and that both $E_\alpha$ and the eigenstates $\phi_\alpha$ can be developed into a series expansion,
	a simple coefficient comparison suggests that
	\begin{equation*}
		\partial_\alpha E_{\alpha}|_{\alpha = 0} = |\langle \psi,\phi_0 \rangle |^2 = \mu(\{E\}), \quad \quad -\partial_\alpha^2 E_\alpha|_{\alpha = 0} =
		 2 \mu(\{E\})  \int_{(E, \infty)} \frac{1}{x-E} \, \mu(\mathrm dx)
	\end{equation*}
	where $\phi_0$ is the ground state of $H_0$.
	These are the main formulas from perturbation theory, which we now want to connect with our main results \cref{Theorem: Convergence of averaged partition function,Theorem: Second order term}.
	Note that they do not immediately make sense, since $\alpha \mapsto E_\alpha$ may not even be differentiable.
	
	In the spirit of \cref{eq:infsupp}, let us thus replace $E_\alpha$ by the approximation
	\begin{equation*}
		E_{\alpha, t} \coloneqq -\frac{1}{t} \operatorname{log}\, \langle \psi, e^{-tH_\alpha} \psi \rangle 
	\end{equation*}
	for which we will prove in \cref{Lemma: Monotoicity in admissibilty} that for $\alpha\leq 0$ it also converges to $E_\alpha$ as $t\to \infty$.
	In our case, Duhamels formula (or alternatively a Dyson series expansion) simplifies to
	\begin{equation*}
		\partial_\alpha \langle \psi, e^{-tH_\alpha} \psi \rangle|_{\alpha = 0} = -\int_0^t Z_s Z_{t-s} \, \mathrm ds
	\end{equation*}
	(see \cref{eq:Dysonfirstorder} for details). Hence, \cref{Theorem: Convergence of averaged partition function} states exactly that
	\begin{equation*}
		\lim_{t\to \infty} \partial_\alpha E_{\alpha, t}|_{\alpha = 0} = \lim_{t\to \infty} \frac{1}{t}\int_0^t \frac{Z_{t-s} Z_s}{Z_t} \, \mathrm ds = \mu(\{E\}).
	\end{equation*}
	A similar reasoning can be applied for the second order.
	Expanding the Dyson series further yields (see \cref{eq:Dysonsecondorder} for details)
	\begin{equation*}
		\partial_\alpha^2 \langle \psi, e^{-tH_\alpha} \psi \rangle|_{\alpha = 0} = 2\int_0^t\int_0^s Z_{t-s} Z_{s-r} Z_r \, \mathrm dr \mathrm ds
	\end{equation*}
	and \cref{Theorem: Convergence of averaged partition function,Theorem: Second order term} thus imply that (under the additional assumptions of \cref{Theorem: Second order term})
	\begin{align*}
		-\lim_{t\to \infty} \partial_\alpha^2 E_{\alpha, t}|_{\alpha = 0} &= \lim_{t\to \infty} \frac{2}{t}\int_0^t\int_0^s \frac{Z_{t-s} Z_{s-r} Z_r}{Z_t} \, \mathrm dr \mathrm ds - \frac{1}{t}\Big(\int_0^t \frac{Z_s Z_{t-s}}{Z_t} \, \mathrm ds\Big)^2 \\
		&= 2\mu(\{E\})  \int_{(E, \infty)} \frac{1}{x-E} \, \mu(\mathrm dx).
	\end{align*}
	Below, we will further apply \cref{Theorem: Convergence of averaged partition function} in order to derive additional properties of the bottom of the spectrum of the rank-one perturbed operator $H_\alpha$.
	A result of this type in the case that $E_\alpha$ is a non-degenerate discrete eigenvalue of $H_\alpha$ is known as a simple case of the Feynman--Hellmann Theorem.
	We here emphasize that for our result neither the a priori assumption of $E_\alpha$ being an eigenvalue nor the differentiability of $\alpha\mapsto E_\alpha$ are required.
	Denoting by $\mu_\alpha$, $\alpha\in\IR$ the spectral measure of $H_\alpha$ with respect to the vector $\psi$ and by showing that $\alpha \mapsto E_{\alpha, t}$ is  concave for every $t\geq 0$, we will obtain
	\begin{thm}
		\label{Theorem: rank-one perturbation}
		Let $\partial_\alpha^-E, \partial_\alpha^+E$ denote the left and right derivatives of $\alpha \mapsto E_\alpha$ respectively (which exist by concavity). Then for all $\alpha \leq 0$
		\begin{equation*}
			\partial_\alpha^+ E_{\alpha} \leq \mu_\alpha(\{E_\alpha\}) \leq 	\partial_\alpha^- E_{\alpha} .
		\end{equation*}
		Moreover, $\alpha \mapsto \mu_\alpha(\{E_\alpha\})$ is a decreasing and left continuous function on $(-\infty, 0]$.
	\end{thm}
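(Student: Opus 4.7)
The plan is to combine \cref{Theorem: Convergence of averaged partition function} applied to the spectral measure $\mu_\alpha$ of $H_\alpha$ with a concavity analysis of the finite-$t$ approximants $E_{\alpha,t}$. The key preliminary, and what I expect to be the main technical obstacle, is that $\alpha\mapsto E_{\alpha,t}$ is concave for each fixed $t>0$; equivalently, $\alpha\mapsto Z_{\alpha,t}$ is log-convex. This can be derived from the Dyson expansion $Z_{\alpha,t}=\sum_{n=0}^\infty(-\alpha)^n(Z_0^{\ast(n+1)})(t)$ combined with a Cauchy--Schwarz argument on iterated convolutions, or by invoking the operator-theoretic fact that $A\mapsto\log\langle\psi,e^A\psi\rangle$ is convex in self-adjoint $A$ (a Peierls--Bogoliubov-type statement).

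Granting concavity, Duhamel's formula yields $\partial_\alpha E_{\alpha,t}=\frac{1}{t}\int_0^t Z_{\alpha,s}Z_{\alpha,t-s}/Z_{\alpha,t}\,\mathrm{d}s$. For $\alpha\leq 0$, \cref{Lemma: Monotoicity in admissibilty} ensures $\inf\operatorname{supp}\mu_\alpha=E_\alpha$, so applying \cref{Theorem: Convergence of averaged partition function} to $\mu_\alpha$ gives $\lim_{t\to\infty}\partial_\alpha E_{\alpha,t}=\mu_\alpha(\{E_\alpha\})$. Moreover, \cref{Proposition: Monotonicty of fraction of partition functions}(iii) applied to $\mu_\alpha$ shows this convergence is monotone from above, whence $\mu_\alpha(\{E_\alpha\})=\inf_{t>0}\partial_\alpha E_{\alpha,t}$.

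For the sandwich, concavity of $E_{\alpha,t}$ in $\alpha$ yields, for any $h>0$,
\begin{equation*}
\frac{E_{\alpha+h,t}-E_{\alpha,t}}{h}\;\leq\;\partial_\alpha E_{\alpha,t}\;\leq\;\frac{E_{\alpha,t}-E_{\alpha-h,t}}{h}.
\end{equation*}
For $\alpha,\alpha\pm h\leq 0$, taking $t\to\infty$ (using $E_{\cdot,t}\to E_{\cdot}$ pointwise from \cref{Lemma: Monotoicity in admissibilty}) and then $h\downarrow 0$ produces the claimed bounds $\partial_\alpha^+ E_\alpha\leq\mu_\alpha(\{E_\alpha\})\leq\partial_\alpha^- E_\alpha$. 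At the endpoint $\alpha=0$, the upper bound on $\partial_\alpha^+E_\alpha$ is handled by the trivial inequality $E_{\alpha+h}\leq E_{\alpha+h,t}$ in place of the (unavailable) pointwise convergence at $\alpha+h>0$.

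The remaining claims follow easily. Decreasingness of $\alpha\mapsto\mu_\alpha(\{E_\alpha\})$ is obtained by chaining the sandwich with the standard inequality $\partial_{\alpha_1}^+ E\geq\partial_{\alpha_2}^- E$ for $\alpha_1<\alpha_2$ (a consequence of concavity of $E_\alpha$). And $\mu_\alpha(\{E_\alpha\})=\inf_{t>0}\partial_\alpha E_{\alpha,t}$ is a pointwise infimum of continuous (indeed real-analytic) functions of $\alpha$, hence upper semicontinuous; any decreasing upper semicontinuous function is automatically left continuous.
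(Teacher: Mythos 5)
Your proof of the sandwich, the monotonicity, and the left continuity follows essentially the same route as the paper: concavity of $\alpha\mapsto E_{\alpha,t}$, convergence $\partial_\alpha E_{\alpha,t}\to\mu_\alpha(\{E_\alpha\})$ via \cref{Theorem: Convergence of averaged partition function} and \cref{Lemma: Monotoicity in admissibilty}, and then passing difference quotients to the limit. Your treatment of the boundary case $\alpha=0$ (using $E_{h}\le E_{h,t}$ rather than the paper's case distinction via \cref{Lemma: Admissibility right side derivative}) is a genuine simplification, and your left-continuity argument (infimum of continuous functions is upper semicontinuous, a decreasing upper semicontinuous function is left continuous) is a clean repackaging of the paper's iterated-limit interchange; both are correct.

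There is, however, a substantive error in one of your two suggested routes to the key concavity lemma (\cref{Proposition: Quotient of Partition functions in terms of rank one perturbations}). The claim that $A\mapsto\log\langle\psi,e^{A}\psi\rangle$ is convex in self-adjoint $A$ is \emph{false} in general. For a concrete counterexample, take $\psi=(1,0)$, $A_0=\bigl(\begin{smallmatrix}0&10\\10&0\end{smallmatrix}\bigr)$, $A_1=\bigl(\begin{smallmatrix}0&0\\0&10\end{smallmatrix}\bigr)$. Then $\log\langle\psi,e^{A_0}\psi\rangle=\log\cosh 10\approx 9.31$ and $\log\langle\psi,e^{A_1}\psi\rangle=0$, so the convexity inequality at the midpoint would require $\log\langle\psi,e^{(A_0+A_1)/2}\psi\rangle\le 4.65$; a direct computation of the spectral decomposition of $\tfrac12(A_0+A_1)=\bigl(\begin{smallmatrix}0&5\\5&5\end{smallmatrix}\bigr)$ gives $\log\langle\psi,e^{(A_0+A_1)/2}\psi\rangle\approx 6.80$. (Peierls--Bogoliubov gives convexity of $\log\operatorname{tr} e^A$, not of $\log\operatorname{tr}(Pe^A)$ for a rank-one $P$.) The concavity of $E_{\alpha,t}$ is therefore a \emph{special} feature of the perturbation direction being exactly $\lvert\psi\rangle\langle\psi\rvert$; it cannot be reduced to a generic variational convexity. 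The paper's proof exploits this through the FKG-type monotonicity $Z_{\alpha,s}Z_{\alpha,t-r}\le Z_{\alpha,t}Z_{\alpha,s-r}$ for $0\le r\le s\le t$, which is equivalent to convexity of $t\mapsto\log Z_{\alpha,t}$ — a Laplace-transform fact that \emph{does} follow from Cauchy--Schwarz/H\"older. So your first suggested route (``Cauchy--Schwarz on iterated convolutions'') can be made to work, but it needs to be spelled out as: log-convexity of $Z_{\alpha,\cdot}$ gives the inequality above, and inserting this into the explicit formula for $\partial_\alpha^2 E_{\alpha,t}$ obtained from the Dyson expansion gives the sign. As written, neither of your two suggestions is a complete argument, and the second is incorrect.
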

	
	\subsection{Admissibility of Coupling Constants}
	
	Let us first discuss validity of 
	\begin{align}
		\label{eq:convapproxgsenerg}
		\inf \sigma(H_\alpha)=E_\alpha = \lim_{t \to \infty} E_{\alpha,t}.
	\end{align}
	By \cref{eq:infsupp}, this is equivalent to the following criterion
	\begin{defn}
		We call $\alpha \in \R$ admissible if $\inf \sigma(H_\alpha)=\inf\supp \mu_\alpha$ holds. 
	\end{defn}
	We make the following two simple observations.
	\begin{lem}
		\label{Lemma: Monotoicity in admissibilty}
		All $\alpha \leq 0$ are admissible.
	\end{lem}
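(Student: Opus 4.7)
The plan is to exploit the rank-one structure of $H_\alpha - H = \alpha \langle\psi, \cdot\rangle \psi$ by splitting $\HS$ into the closed cyclic subspace
\begin{equation*}
    \mathcal K := \overline{\operatorname{span}\{(H-z)^{-1}\psi : z \in \rho(H)\}}
\end{equation*}
and its orthogonal complement. The two key observations I rely on are that $H$ and $H_\alpha$ generate \emph{the same} cyclic subspace from $\psi$, and that they \emph{agree} on $\mathcal K^\perp$; combined with $H_\alpha \leq H$ for $\alpha \leq 0$, this will force $\inf\sigma(H_\alpha)$ to be realised on $\mathcal K$, where it coincides with $\inf\supp\mu_\alpha$ by construction.

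For the first observation, the Aronszajn--Krein resolvent identity
\begin{equation*}
    (H_\alpha - z)^{-1}\psi = \frac{(H-z)^{-1}\psi}{1 + \alpha\, \langle\psi, (H-z)^{-1}\psi\rangle}
\end{equation*}
shows that $(H_\alpha - z)^{-1}\psi$ is always a scalar multiple of $(H-z)^{-1}\psi$. Hence the cyclic subspace of $\psi$ under $H_\alpha$ is contained in $\mathcal K$, and swapping the roles of $H$ and $H_\alpha$ (since $H$ is itself a rank-one perturbation of $H_\alpha$ with coupling $-\alpha$) gives the reverse inclusion. The second observation is then automatic: since $\psi \in \mathcal K$ we have $\mathcal K^\perp \subseteq \psi^\perp$, so for $\phi \in \mathcal K^\perp$ the perturbation $\alpha \langle\psi,\phi\rangle\psi$ vanishes, and thus $H_\alpha|_{\mathcal K^\perp} = H|_{\mathcal K^\perp}$ (well-defined because $\mathcal K^\perp$ is invariant under the self-adjoint operator $H$, and therefore also under $H_\alpha$).

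To conclude, I invoke the spectral theorem on the cyclic subspace to identify $\sigma(H|_{\mathcal K}) = \supp\mu$ and $\sigma(H_\alpha|_{\mathcal K}) = \supp\mu_\alpha$, which yields the orthogonal decompositions $\sigma(H) = \supp\mu \cup \sigma(H|_{\mathcal K^\perp})$ and $\sigma(H_\alpha) = \supp\mu_\alpha \cup \sigma(H|_{\mathcal K^\perp})$. The hypothesis \cref{eq:infspec} then gives $\inf\sigma(H|_{\mathcal K^\perp}) \geq E$, while for $\alpha \leq 0$ the operator inequality $H_\alpha \leq H$ restricts to $\mathcal K$ and produces $\inf\supp\mu_\alpha = \inf\sigma(H_\alpha|_{\mathcal K}) \leq \inf\sigma(H|_{\mathcal K}) = E$. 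Taking the infimum over both summands finally yields $E_\alpha = \min(\inf\supp\mu_\alpha, \inf\sigma(H|_{\mathcal K^\perp})) = \inf\supp\mu_\alpha$, which is exactly the admissibility of $\alpha$.

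The main technical point is the invariance of $\mathcal K$ under arbitrary rank-one perturbations at $\psi$; once that is in place, the result reduces to operator monotonicity and the bookkeeping provided by the spectral theorem.
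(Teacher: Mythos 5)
Your argument is correct, but it takes a genuinely different route from the paper's. The paper proves the lemma by a direct case distinction on whether $E_\alpha<E$ or $E_\alpha=E$: in the first case, $E_\alpha$ is an isolated eigenvalue of finite multiplicity (Weyl's theorem, since a rank-one perturbation preserves the essential spectrum), and any eigenvector must overlap with $\psi$ since otherwise it would be an eigenvector of $H$ below $E$; in the second case, it exhibits approximate eigenvectors with $\braket{\phi,\psi}\ne 0$ directly from the admissibility of $\alpha=0$. You instead use a structural decomposition: the Aronszajn--Krein resolvent identity shows that $H$ and $H_\alpha$ share the same cyclic subspace $\cK$ generated by $\psi$, on whose orthogonal complement they coincide, and then the bound $H_\alpha\le H$ together with $\inf\sigma(H\res{\cK^\perp})\ge E$ forces the bottom of $\sigma(H_\alpha)$ to come from $\cK$, i.e.\ from $\supp\mu_\alpha$. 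Both arguments are sound. The paper's is more elementary (only variational principle and essential-spectrum stability) and stays entirely inside standard perturbation theory; yours requires the Aronszajn--Krein machinery and the reducing-subspace picture, but in exchange it yields the cleaner structural statement that $\sigma(H_\alpha)$ differs from $\sigma(H)$ only on the cyclic piece, which makes the admissibility statement transparent and would also immediately give information for $\alpha>0$ whenever the infimum over $\cK^\perp$ is not crossed. One small point worth spelling out is that the closed spans of $(H-z)^{-1}\psi$ and $(H_\alpha-z)^{-1}\psi$ over $z$ in the respective resolvent sets coincide because each resolvent set of a semibounded selfadjoint operator is connected and the resolvent is analytic, so the Aronszajn--Krein identity on their common resolvent set propagates to the whole span; your sketch leaves this implicit.
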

	\begin{proof}
		By assumption, we have that $\alpha = 0$ is admissible,
		so from now	assume $\alpha < 0$.
		From \cref{eq:variation}, we then have $E_\alpha \leq E_0 = E$.
		If $E_\alpha < E$, then it follows that $E_\alpha$ is an eigenvalue of finite multiplicity of $H_\alpha$, since finite rank perturbations keep the essential spectrum invariant.
		In this case any corresponding eigenvector $\phi$ has to satisfy $\braket{\phi,\psi}\ne 0$,
		as otherwise it would be an eigenvector of $H$ to the eigenvalue $E_\alpha$ as well, contradicting $E_\alpha < E$.
		Hence, in this case $\alpha$ is admissible.
		Now assume $E_\alpha = E$.
		Then for any $\varepsilon>0$ there exist a unit vector $\phi$ such that $\langle \phi, H_\alpha \phi \rangle \leq \langle \phi, H\phi \rangle \leq E + \varepsilon$ and $\langle \phi, \psi \rangle \neq 0$ (otherwise $\inf\supp\mu_0>E$ would hold). Hence $\inf\supp\mu_\alpha\le E+\varepsilon$ and taking $\eps\downarrow 0$ proves the statement.
	\end{proof}
	\begin{lem}
		\label{Lemma: Admissibility right side derivative}
		If $\partial_\alpha^+ E_\alpha|_{\alpha=0} > 0$ then there exists some $\varepsilon>0$ which is admissible.
	\end{lem}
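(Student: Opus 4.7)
The plan is to use the hypothesis $\partial_\alpha^+ E_\alpha\big|_{\alpha=0}>0$ to first exclude $E$ from the essential spectrum of $H$, and then to exploit this isolation via standard rank-one perturbation theory to produce a discrete eigenvector of $H_\varepsilon$ at $E_\varepsilon$ that is non-orthogonal to $\psi$ for sufficiently small $\varepsilon>0$.

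I would start by setting $c\coloneqq\partial_\alpha^+E_\alpha\big|_{\alpha=0}>0$. By the concavity of $\alpha\mapsto E_\alpha$, the secant slopes $(E_\alpha-E)/\alpha$ are decreasing in $\alpha>0$ and tend to $c$ as $\alpha\downarrow 0$, so there exists $\alpha_1>0$ with
\[
E+\tfrac{c\alpha}{2}\le E_\alpha\le E+c\alpha\qquad\text{for all }\alpha\in(0,\alpha_1].
\]
Because $|\psi\rangle\langle\psi|$ is a rank-one perturbation we have $\spess(H_\alpha)=\spess(H)$, and thus $E_\alpha\le\inf\spess(H)=:E_{\mathrm{ess}}$ for every $\alpha$. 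If $E$ were to lie in $\spess(H)$, then $E_{\mathrm{ess}}=E$ and, combined with $E_\alpha\ge E$ (from $H_\alpha\ge H$), this would force $E_\alpha=E$ for every $\alpha\ge 0$, contradicting $c>0$. Hence $E\notin\spess(H)$, so $E$ is an isolated eigenvalue of $H$ of finite multiplicity, and I can choose $\delta>0$ with $\sigma(H)\cap[E,E+\delta)=\{E\}$.

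To conclude, pick $\varepsilon\in\bigl(0,\min\{\alpha_1,\delta/c\}\bigr)$. Then $E_\varepsilon\in(E,E+\delta)$, so $E_\varepsilon\notin\spess(H_\varepsilon)$ and $E_\varepsilon$ is a discrete eigenvalue of $H_\varepsilon$. If a normalised eigenvector $\phi$ were orthogonal to $\psi$, then $H\phi=H_\varepsilon\phi=E_\varepsilon\phi$ would place $E_\varepsilon$ in $\sigma(H)\cap(E,E+\delta)=\emptyset$, a contradiction. Hence $\langle\psi,\phi\rangle\ne 0$, which gives $\mu_\varepsilon(\{E_\varepsilon\})\ge|\langle\psi,\phi\rangle|^2>0$, so $E_\varepsilon\in\supp\mu_\varepsilon$ and $\inf\supp\mu_\varepsilon\le E_\varepsilon$; the reverse inequality $\inf\supp\mu_\varepsilon\ge\inf\sigma(H_\varepsilon)=E_\varepsilon$ is automatic, so $\varepsilon$ is admissible.

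The only substantive step is the implication $c>0\Rightarrow E\notin\spess(H)$; this is the single place where the strict positivity of the right derivative is used, and it reduces to the standard fact that rank-one perturbations preserve the essential spectrum. The remainder is elementary spectral theory together with the concavity estimate above.
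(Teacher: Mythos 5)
Your proof is correct and follows essentially the same approach as the paper's: use the positive right derivative to force $E_\varepsilon>E$ for small $\varepsilon>0$, invoke invariance of the essential spectrum under rank-one (finite-rank) perturbations to see that $E$ is isolated in $\sigma(H)$, and then conclude that for $\varepsilon>0$ small, $E_\varepsilon$ is a discrete eigenvalue of $H_\varepsilon$ lying in a gap of $\sigma(H)$, so any corresponding eigenvector cannot be orthogonal to $\psi$. The only difference is that you spell out the intermediate quantitative estimate $E+c\alpha/2\le E_\alpha\le E+c\alpha$ coming from concavity, whereas the paper simply records $E_\varepsilon>E$ for small $\varepsilon$ and that $E_\varepsilon\to E$; this is extra detail, not a different argument.
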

	\begin{proof}
		By assumption we have $E_{\varepsilon} > E$ for all sufficiently small $\varepsilon>0$.
		Hence, again using that finite rank perturbations preserve the essential spectrum,
		$E$ is an isolated eigenvalue of $H$.
		Thus, taking $\eps>0$ sufficiently small, we find $E_{\varepsilon}\notin\sigma(H)$, 
		so in this case $E_{\varepsilon}$ is an isolated eigenvalue of finite multiplicity of $H_{\varepsilon}$
		and any corresponding eigenvector of $H_{\varepsilon}$ can not be orthogonal to $\psi$, since else $E_{\varepsilon}$ would be an eigenvalue of $H$ as well.
	\end{proof}
	
	\subsection{Proof of \cref{Theorem: rank-one perturbation}}
	\label{Section: Rank one perturbations}
	We start by proving the claimed concavity of $E_{\alpha,t}$.
	\begin{prop}
		\label{Proposition: Quotient of Partition functions in terms of rank one perturbations}
		The function $\alpha \mapsto E_{\alpha, t}$ is concave for any $t\geq 0$.
	\end{prop}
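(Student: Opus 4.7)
The plan is to reduce concavity of $\alpha\mapsto E_{\alpha,t}$ to log-convexity of $\alpha\mapsto f_\alpha(t)$, where $f_\alpha(t):=\langle\psi,e^{-tH_\alpha}\psi\rangle$, and then to prove the latter by a pointwise correlation inequality that ultimately invokes log-convexity \emph{in the time variable $u$} of the Laplace transform $f_\alpha(u)$.

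The Duhamel/Dyson expansion around $H_\alpha=H+\alpha P_\psi$, essentially already recorded in the paragraph preceding \cref{Theorem: rank-one perturbation}, yields
\begin{align*}
\partial_\alpha f_\alpha(t)&=-\int_0^t f_\alpha(t-s)f_\alpha(s)\,\mathrm{d}s,\\
\partial_\alpha^2 f_\alpha(t)&=2\int_0^t\!\int_0^s f_\alpha(t-s)f_\alpha(s-r)f_\alpha(r)\,\mathrm{d}r\,\mathrm{d}s.
\end{align*}
Since $E_{\alpha,t}=-\tfrac1t\log f_\alpha(t)$, concavity is equivalent to $f_\alpha(t)\,\partial_\alpha^2 f_\alpha(t)\geq(\partial_\alpha f_\alpha(t))^2$. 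Dividing by $f_\alpha(t)^2$ and symmetrizing the double integral to $[0,t]^2$, this becomes
\[
\int_{[0,t]^2}K_\alpha(r,s)\,\mathrm{d}r\,\mathrm{d}s\;\geq\;\Bigl(\int_0^t h_\alpha(s)\,\mathrm{d}s\Bigr)^2,
\]
where $h_\alpha(s):=f_\alpha(t-s)f_\alpha(s)/f_\alpha(t)$ and $K_\alpha(r,s):=f_\alpha(t-s)f_\alpha(s-r)f_\alpha(r)/f_\alpha(t)$ for $r\leq s$, extended symmetrically for $r>s$.

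The key step is then the pointwise estimate $K_\alpha(r,s)\geq h_\alpha(r)h_\alpha(s)$ on $[0,t]^2$. For $r\leq s$, after cancelling the common positive factors $f_\alpha(t-s)f_\alpha(r)$, this inequality reduces to
\[
\frac{f_\alpha(t)}{f_\alpha(t-r)}\;\geq\;\frac{f_\alpha(s)}{f_\alpha(s-r)},
\]
i.e.\ $u\mapsto f_\alpha(u)/f_\alpha(u-r)$ is non-decreasing on $[r,\infty)$, which by taking logarithms and differentiating is equivalent to $u\mapsto\log f_\alpha(u)$ being convex. But $f_\alpha(u)=\int e^{-ux}\,\mathrm{d}\mu_\alpha(x)$ is the Laplace transform of the positive spectral measure of $H_\alpha$ with respect to $\psi$, and log-convexity of such Laplace transforms in $u$ is immediate from H\"older's inequality. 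Integrating the pointwise inequality over $[0,t]^2$ then completes the proof.

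The main conceptual obstacle is spotting this pointwise reduction; once it is identified, the remaining manipulations are routine. Technical justifications for the Duhamel expansion with possibly unbounded $H$ are standard since $\alpha P_\psi$ is a bounded perturbation of a semibounded operator, and the uniform bound $f_\alpha(s)\leq e^{-sE_\alpha}$ (with $E_\alpha\geq E-|\alpha|$) justifies all exchanges of differentiation and integration via dominated convergence.
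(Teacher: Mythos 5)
Your proof is correct and follows essentially the same strategy as the paper: Dyson expansion in $\alpha$, symmetrization of $(\partial_\alpha f_\alpha(t))^2$ over $[0,t]^2$, and the pointwise inequality $f_\alpha(s-r)f_\alpha(t)\ge f_\alpha(t-r)f_\alpha(s)$ for $0\le r\le s\le t$, which is exactly the inequality $Z_{\alpha,s}Z_{\alpha,t-r}\le Z_{\alpha,t}Z_{\alpha,s-r}$ used in the paper. The only variation is how you justify this pointwise inequality: the paper invokes \cref{Proposition: Monotonicty of fraction of partition functions}\,(ii), whose proof rests on the FKG inequality, whereas you derive it directly from log-convexity of $u\mapsto f_\alpha(u)$ via H\"older's inequality. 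Your route is slightly more self-contained (it does not need \cref{Proposition: Monotonicty of fraction of partition functions} for the measure $\mu_\alpha$), while the paper's choice reuses a monotonicity lemma that it needs elsewhere anyway; substantively the two arguments coincide.
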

	\begin{proof}
		We now write
		\begin{align*}
			Z_{\alpha,t} \coloneqq \braket{\psi,e^{-tH_\alpha}\psi}.
		\end{align*}
		Applying a Dyson series expansion in $\alpha$, see for example \cite[Thm.~1.10]{EngelNagel.2000}, we obtain the convergent series
		\begin{align*}
			\braket{\psi,e^{-tH_\beta}\psi} = Z_t + \sum_{k=1}^{\infty}(\alpha-\beta)^k\int_0^t\int_0^{s_1}\cdots \int_0^{s_{k-1}}Z_{\alpha,t-s_1}Z_{\alpha,s_1-s_2}\cdots Z_{\alpha,s_{k-1}-s_k} Z_{\alpha, s_k}\,\rmd s_k\cdots\rmd s_1,
		\end{align*}
		Thus
		\begin{align}
			\frac{\mathrm d}{\mathrm d\alpha}  \langle \Omega, e^{-t H_{\alpha}} \Omega\rangle &= - \int_0^t Z_{\alpha, t-s} Z_{\alpha, s} \, \mathrm ds, \label{eq:Dysonfirstorder}\\
			\frac{\mathrm d^2}{\mathrm d\alpha^2}  \langle \Omega, e^{-t H_{\alpha}} \Omega\rangle &= 2\int_0^t \mathrm ds \,  Z_{\alpha, t-s} \int_0^s \mathrm dr \, Z_{\alpha, s-r} Z_{\alpha, r}, \label{eq:Dysonsecondorder}
		\end{align}
		which leads to
		\begin{equation}
			\label{eq:seconderestimate}
			-\frac{\mathrm d^2}{\mathrm d\alpha^2} E_{\alpha, t} = t^{-1}Z_{\alpha, t}^{-2} \cdot  \Big[2Z_{\alpha, t}\int_0^t \mathrm ds \,  Z_{\alpha, t-s} \int_0^s \mathrm dr \, Z_{\alpha, s-r} Z_{\alpha, r} - \Big(\int_0^t\mathrm ds \, Z_{\alpha, t-s} Z_{\alpha, s} \Big)^2 \Big].
		\end{equation}
		We now have
		\begin{align*}
			\Big(\int_0^t\mathrm ds \, Z_{\alpha, t-s} Z_s \Big)^2 &= \int_0^t \mathrm ds \int_0^t \mathrm dr \, Z_{\alpha, t-s} Z_{\alpha, s} Z_{\alpha, t-r} Z_{\alpha, r} \\
			&= 2 \int_0^t \mathrm ds \int_0^s \mathrm dr \, Z_{\alpha, t-s} Z_{\alpha, s} Z_{\alpha, t-r} Z_{\alpha, r} \\
			&\leq 2Z_{\alpha, t}\int_0^t \mathrm ds \,  Z_{\alpha, t-s} \int_0^s \mathrm dr  \, Z_{\alpha,s-r} Z_{\alpha,r} 
		\end{align*}
		where we used in the last inequality that by the second point of \cref{Proposition: Monotonicty of fraction of partition functions} the inequalities
		\begin{equation*}
			\frac{Z_{\alpha,t-r}}{Z_{\alpha, t}} \leq \frac{Z_{\alpha,s-r}}{Z_{\alpha,s}}   \quad \iff\quad Z_{\alpha,s} Z_{\alpha,t-r} \leq Z_{\alpha, t} Z_{\alpha, s-r}.
		\end{equation*}
		hold for all $0 \leq r\leq s \leq t$. Inserting the estimate into \cref{eq:seconderestimate} proves the statement.
	\end{proof}
	\begin{proof}[Proof of \cref{Theorem: rank-one perturbation}]
		By \cref{Lemma: Monotoicity in admissibilty}, we have $\lim_{t\to \infty} E_{\alpha, t} = E_\alpha$ for all $\alpha < 0$, which by \cref{Proposition: Quotient of Partition functions in terms of rank one perturbations,Theorem: Convergence of averaged partition function} implies
		\begin{equation*}
			\partial^+_\alpha E_\alpha \leq \lim_{t\to \infty}\partial_\alpha E_{\alpha, t} = \mu_\alpha(\{E_\alpha\}) \leq  \partial^-_\alpha E_\alpha.
		\end{equation*}
		For $\alpha=0$, the upper inequality follow by the same argument. The lower inequality trivially holds if $\partial^+_\alpha E_\alpha= 0$ and if $\partial^+_\alpha E_\alpha> 0$ one can apply \cref{Lemma: Admissibility right side derivative} to obtain convergence of $E_{\alpha, t}$ to $E_\alpha$ on an interval of the form $(-\infty, \varepsilon)$ and hence the same argument applies.
		
		Since
		\begin{equation*}
			\frac{\mathrm d}{\mathrm d \alpha} \frac{1}{t} \int_0^t \frac{Z_{\alpha, s} Z_{\alpha, t-s}}{Z_{\alpha, t}} \mathrm ds = \partial^2_\alpha E_{\alpha, t}\leq 0
		\end{equation*}
		the function $\alpha \mapsto \mu_\alpha(\{E_{\alpha}\})$ is decreasing as a pointwise limit of decreasing functions.
		
		For the left continuity, it is sufficient to notice that for any $\alpha_0 \leq 0$
		\begin{equation*}
			\mu_\alpha(\{E_{\alpha}\}) = \lim_{t \to \infty} \lim_{\alpha \uparrow \alpha_0} \frac{1}{t} \int_0^t \frac{Z_{\alpha, s} Z_{\alpha, t-s}}{Z_{\alpha, t}} \mathrm ds =  \lim_{\alpha \uparrow \alpha_0} \rho(\alpha)
		\end{equation*}
		as we might exchange the order of limits as the expression is both decreasing in $\alpha$ (by the above) as well as in $t$, by \cref{Proposition: Monotonicty of fraction of partition functions}.
	\end{proof}

	\section{A link to renewal theory}
	\label{Section: Renewal theory}
	In this section, we give a natural probabilistic interpretation of our two Wiener-type theorems.
	To do this, we assign to each probability measure $\mu$ (or better, to its equivalence class modulo translations) with finite mean and lower bounded support a $\{0, 1\}$-valued regenerative stochastic process.
	We will call the latter the renewal transform of the measure $\mu$ and will show that it uniquely determines $\mu$ up to translations.
	The renewal transform will allow us to give an intuitive interpretation of \cref{Theorem: Convergence of averaged partition function,Theorem: Second order term}.
	As we will point out in \cref{Example: Renewal transform of the Fröhlich Polaron}, for the spectral measure of the Fr\"ohlich polaron (taken with respect to the Fock vacuum),
	the renewal transform can be expressed in terms of its point process representation which was first introduced in \cite{MukherjeeVaradhan.2020}.
	The latter has then successfully been applied in order to study the ground state energy $E(P)$ of the Hamiltonian $H(P)$ at fixed total momentum $P$ and, in particular, the effective mass, i.e., the curvature of $P \mapsto E(P)$ in the origin, see \cite{BetzPolzer.2023,Polzer.2023,BazaesMukherjeeSellkeVaradhan.2023}.
	The interpretation of \cref{Theorem: Second order term} in terms of the renewal transform (see \cref{Theorem: Renewal representation Theorem 1 and Theorem 2})
	shows that one can in principle study the spectral measure of $H(P)$ even above its bottom by similar techniques.
	Considering the generality of our setup, identifying and analyzing the renewal transform might be a promising approach for the study of other quantum mechanical models as well.
	Moreover, it might be interesting to further exploit the connection to perturbation theory made in \cref{Section: Perturbation Theory} in order to derive similar expressions corresponding to higher order perturbation theory.
	We will first state the main results of this section and prove them afterwards, so that the reader can skip the proofs.
	
	As in \cref{Section: Wiener Type Theorems} let $\mu$ be a Borel probability measure on $\IR$ with support bounded from below and
	further assume that $\mu$ has a finite first moment $m\coloneqq\int_\IR x\,\mu(\rmd x)$.
	We will assign a stochastic process $X$ to $\mu$ that alternates between two states, dormant and active, and which regenerates after each cycle consisting of a dormant period followed by an active period.
	Let $\mathbf P$ be a probability measure on the space
	\begin{equation*}
		\mathcal D \coloneqq \{x:[0, \infty) \to \{0, 1\}:\, x \text{ is càdlàg and }x_0 = 0\}
	\end{equation*}
	which we equip, as usual, with the $\sigma$-algebra generated by the evaluation maps $x\mapsto X_t(x) \coloneqq x_t, \ t\geq 0$. We denote by $X \coloneqq (X_t)_{t\geq 0}$ the canonical stochastic process with law $\mathbf P$. The process $X$ partitions the half line $[0, \infty)$ into dormant periods, in which $X_t = 0,$ and active periods, in which $X_t = 1$. We denote by
	\begin{equation*}
		d_1 \coloneqq \inf\{t\geq 0: X_t = 1\}, \quad a_1 \coloneqq 
		\inf\{t-d_1: t\geq d_1,\, X_t = 0\}
	\end{equation*}
	the first dormant and the first active period (which might be infinite) and set $T_1 \coloneqq d_1 + a_1$ to be the first return to 0. We call $X$ an alternating renewal process (with respect to $\mathbf P$) if either $d_1 = \infty$ almost surely or if $(X_{t - T_1})_{t\geq T_1}$ is conditionally on the event $\{T_1< \infty\}$ (provided the latter has positive probability) independent of $(d_1, a_1)$ and has law $\mathbf P$.\footnote{If $d_1$ and $a_1$ are almost surely finite, this agrees with the common definition of an alternating renewal process. However, we allow $T_1$ to be infinite with positive probability and hence allow the embedded renewal process to die out.} Notice that $\mathbf P$ is then uniquely determined by the law of $(d_1, a_1)$ under $\mathbf P$, as the successive cycles, consisting each of a dormant period followed by an active period, are independent of each other. Provided that $t\geq 0$ is such that $\mathbf P(X_t = 0)>0$, we define
	\begin{equation*}
		\mathbf P_t(\, \cdot \,) \coloneqq \mathbf P(\, \cdot \,|X_t = 0)
	\end{equation*}
	to be the law of $X$ conditional on $X_t = 0$ and denote by $\mathbf E_t$ and $\mathbf V_t$ the expected value and variance taken with respect to $\mathbf P_t$. 
	To simplify notation, we denote by $\operatorname{Exp}(0)$ the law of a random variable which is almost surely $+\infty$.
	\begin{thm}
		\label{Theorem: Existence renewal transform}
		There exists a unique probability measure $\mathbf P$ on $\mathcal D$ such that $X$ is an alternating renewal process, such that $d_1$ is exponentially distributed and independent of $a_1$ and such that
		\begin{equation*}
			e^{Et} Z_t = \mathbf P(X_t = 0)
		\end{equation*}
		for all $t\geq 0$. We have $d_1 \sim \operatorname{Exp}(m-E)$ under $\mathbf P$ and for all $0= t_0 \leq t_1 \leq \hdots \leq t_n = t$
		\begin{equation*}
			\frac{1}{Z_t}\prod_{i=0}^{n-1} Z_{t_{i+1}-t_i} = \mathbf P_t\big(X_{t_1} = X_{t_2} = \hdots = X_{t_{n-1}} = 0\big).
		\end{equation*}		 
	\end{thm}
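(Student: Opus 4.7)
The plan is to reduce to the case $E=0$ by shifting $\mu$, which replaces $Z_t$ with $\tilde Z_t\coloneqq e^{Et}Z_t$ and $m$ with $m'\coloneqq m-E$ without otherwise changing the claim (the product formula is invariant under this shift, since the factors $e^{Et}$ cancel). I then construct $\mathbf P$ as the law on $\mathcal D$ of an alternating renewal process built from i.i.d.\ cycles $(d_i,a_i)_{i\geq 1}$, where $d_i\sim\operatorname{Exp}(m')$ is independent of $a_i\sim F$, for a distribution $F$ on $[0,\infty]$ to be chosen so that $\mathbf P(X_t=0)=\tilde Z_t$. Under this ansatz, the standard renewal equation for $\mathbf P(X_t=0)$, taken in Laplace transform $\hat{\tilde Z}(\lambda)\coloneqq\int\tilde\mu(\mathrm dx)/(\lambda+x)$, is equivalent to
\[
m'\hat F(\lambda)=\lambda+m'-1/\hat{\tilde Z}(\lambda),\qquad\lambda>0.
\]

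The main step is to produce $F$ with this Laplace transform. I would argue that $g(z)\coloneqq\int\tilde\mu(\mathrm dx)/(x-z)$ is a Herglotz function on $\{\Im z>0\}$ with expansion $g(z)=-1/z-m'/z^2+o(1/z^2)$ at infinity, so that $-1/g$ is again Herglotz, and the Nevanlinna representation combined with the matching asymptotics $-1/g(z)=z-m'+o(1)$ forces the form
\[
-1/g(z)=z-m'+\int_{[0,\infty)}\frac{\rho(\mathrm dt)}{t-z}
\]
for a positive measure $\rho$ on $[0,\infty)$ with $\int\rho(\mathrm dt)/(1+t)<\infty$. Specialising $z=-\lambda$ then gives $m'\hat F(\lambda)=\int\rho(\mathrm dt)/(t+\lambda)$, which by Fubini is the Laplace transform of the absolutely continuous measure on $(0,\infty)$ with density $s\mapsto (m')^{-1}\int e^{-ts}\rho(\mathrm dt)$. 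Jensen's inequality applied to the convex function $x\mapsto 1/x$ yields $\int x^{-1}\tilde\mu(\mathrm dx)\geq 1/m'$, ensuring $\hat F(0+)\leq 1$, so this measure extends to a probability measure on $[0,\infty]$ by placing the remaining mass $1-\hat F(0+)$ at $\infty$. With $F$ in hand, I define $\mathbf P$ pathwise from the i.i.d.\ cycles (the process is absorbed in state $1$ after the first infinite active period).

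Finally, the multi-time formula and uniqueness both follow from a Markov-type property at zero-visits: by memorylessness of the exponential dormant time, conditional on $\mathcal F_s\cap\{X_s=0\}$ the shifted process $(X_{s+u})_{u\geq 0}$ again has law $\mathbf P$. Iterating this identity yields
\[
\mathbf P(X_{t_1}=0,\ldots,X_{t_n}=0)=\prod_{i=0}^{n-1}\mathbf P(X_{t_{i+1}-t_i}=0)=\prod_{i=0}^{n-1}\tilde Z_{t_{i+1}-t_i},
\]
and dividing by $\tilde Z_t$ and undoing the shift recovers the stated formula. Uniqueness then holds because any candidate $\mathbf P$ must realize the same marginals at zero-visits, while the cycle distributions are uniquely pinned down by the one-dimensional marginal $\mathbf P(X_t=0)=\tilde Z_t$ through inversion of the renewal equation. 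I expect the main obstacle to be the existence step, concretely the verification that the explicit $\hat F$ above is a valid sub-probability Laplace transform; this rests on the Nevanlinna--Pick representation of $-1/g$ and the precise identification of its $\infty$-asymptotics through the first two moments of $\tilde\mu$.
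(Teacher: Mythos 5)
Your existence argument takes a genuinely different route from the paper. The paper normalizes $m=0$ (not $E=0$), writes $\log Z_t = \int_0^t\!\int_u^t q(v-u)\,\mathrm dv\,\mathrm du$ with $q(\tau)=\phi''(\tau)$ the variance of the tilted measure $\hat\mu_\tau$, and then identifies $e^{Et}Z_t$ directly as the void probability $\mathbb P(\xi([0,t]\times(t,\infty))=0)$ of a Poisson point process $\xi$ with intensity $q(v-u)\1_{\{0<u<v\}}\,\mathrm du\,\mathrm dv$. The alternating renewal structure, the rate $\beta=m-E$, and the multi-time product formula then all drop out of standard independence properties of Poisson processes, and the resulting $M/G/\infty$ picture is reused later (e.g.\ in \cref{Example: Renewal transform of the Fröhlich Polaron}). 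You instead solve the renewal equation in Laplace-transform form for the unknown $\hat F$ and reduce existence to showing that $\hat F$ is a sub-probability Laplace transform via the Nevanlinna representation of $-1/g$. This is a legitimate and more abstract alternative; your multi-time argument via the renewal/Markov property at zero-visits is in the same spirit as the paper's independence-of-restrictions argument.

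However, two steps are not yet proofs. First, the claim that $-1/g(z)=z-m'+\int_{[0,\infty)}\rho(\mathrm dt)/(t-z)$ with $\int\rho(\mathrm dt)/(1+t)<\infty$ does not follow from the general Nevanlinna representation (which only gives $\int\tau(\mathrm dt)/(1+t^2)<\infty$); you must extract the stronger integrability from the convergence $-1/g(iy)-iy\to -m'$ (which, incidentally, needs only the finite \emph{first} moment of $\mu$, not the first two as you suggest), or equivalently invoke the Stieltjes/complete-Bernstein duality $1/\hat p=a+\lambda+\int\tfrac{\lambda}{\lambda+t}\sigma(\mathrm dt)$ with $\sigma$ of finite total mass forced by the same asymptotics. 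You correctly flag this as the crux, but it should be carried out, not just announced. Second, your uniqueness sketch is incomplete: the renewal equation only relates $\tilde Z$ to the distribution of $T_1=d_1+a_1$ and cannot by itself determine the exponential rate of $d_1$. One must first pin down $\beta$ — the paper does this by differentiating $\mathbf P(X_t=0)=e^{Et}Z_t$ at $t=0$ and showing $\mathbf P(d_1<t,\,X_t=0)=o(t)$; alternatively, note that paths in $\mathcal D$ force $a_1>0$ a.s., so $F(\{0\})=0$, hence $\lim_{\lambda\to\infty}\hat F(\lambda)=1-m'/\beta=0$ and $\beta=m'$. (Also minor: Jensen gives $\hat p(0^+)\ge 1/m'$ and hence $\hat F(0^+)\ge 0$, not $\hat F(0^+)\le 1$; the latter is automatic from $\hat p(0^+)>0$, and the former already follows from positivity of $\rho$.)
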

	We call $\mathbf P$ the renewal-transform of $\mu$. Notice that the renewal transform only uniquely determines the measure up to translations.
	We will explicitly construct the measure $\mathbf P$ in the proof of \cref{Theorem: Existence renewal transform} below in terms of the alternating idle and busy periods of a $M/G/\infty$ queue.
	That being said, for applying our theory to spectral measures it might be preferable to identify the renewal transform by other means such as Feynman--Kac formulas, see \cref{Example: Renewal transform of a random walk,Example: Renewal transform of discrete Schroedinger operator,Example: Renewal transform of the Fröhlich Polaron} below.
	By applying \cref{Theorem: Existence renewal transform} in order to derive a renewal equation for the Laplace transform, we can also express the Stieltjes transform on the half-plane $\{z\in \C:\,\operatorname{Re}(z)<E\}$ in terms of the renewal transform.
	
	\begin{prop}
		\label{Proposition: Stieltjes transform in terms of renewal transform}
		For every $z\in \mathbb C$ with $\operatorname{Re}(z)<E$, we have
		\begin{equation*}
			\int_{[E, \infty)} \frac{\mu(\mathrm dx)}{x-z} = \frac{1}{m - z} \cdot \frac{1}{1 - \mathbf E[e^{(z - E)T_1} \1_{\{T_1 < \infty\}}]}.
		\end{equation*}
	\end{prop}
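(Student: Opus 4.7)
The plan is to express the Stieltjes transform as a Laplace integral of $Z_t$, apply \cref{Theorem: Existence renewal transform} to rewrite it probabilistically, and then solve a renewal equation.

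First, since $\operatorname{Re}(z)<E\leq x$ for all $x\in\operatorname{supp}(\mu)$, I use the identity $\frac{1}{x-z}=\int_0^\infty e^{-t(x-z)}\,\mathrm dt$ and Fubini (justified by the uniform bound $|e^{-t(x-z)}|\le e^{-t(E-\operatorname{Re}(z))}$) to get
\begin{equation*}
\int_{[E,\infty)}\frac{\mu(\mathrm dx)}{x-z}=\int_0^\infty e^{tz}Z_t\,\mathrm dt=\int_0^\infty e^{(z-E)t}\mathbf P(X_t=0)\,\mathrm dt,
\end{equation*}
using $e^{Et}Z_t=\mathbf P(X_t=0)$ from \cref{Theorem: Existence renewal transform}. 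Both integrals converge absolutely since $\operatorname{Re}(z-E)<0$ and $\mathbf P(X_t=0)\leq 1$.

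Next, I would derive the standard renewal equation for $u(t):=\mathbf P(X_t=0)$. Splitting according to $\{T_1>t\}$ versus $\{T_1\leq t\}$ and invoking the regenerative property, one observes that on the first event $\{X_t=0\}$ coincides with $\{d_1>t\}$, while on the second the post-$T_1$ process is an independent copy of $X$ time-shifted by $T_1$. This yields
\begin{equation*}
u(t)=\mathbf P(d_1>t)+\int_{[0,t]}u(t-s)\,F(\mathrm ds),\qquad F(\mathrm ds):=\mathbf P(T_1\in\mathrm ds,\,T_1<\infty).
\end{equation*}
Multiplying by $e^{(z-E)t}$, integrating over $t\in[0,\infty)$, and applying Fubini to the convolution term gives
\begin{equation*}
U(z)=\frac{1}{m-z}+\mathbf E\bigl[e^{(z-E)T_1}\mathbf 1_{\{T_1<\infty\}}\bigr]\cdot U(z),
\end{equation*}
where $U(z):=\int_0^\infty e^{(z-E)t}u(t)\,\mathrm dt$. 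The constant $\frac{1}{m-z}$ comes from $d_1\sim\operatorname{Exp}(m-E)$ via $\int_0^\infty e^{(z-E)t}e^{-(m-E)t}\,\mathrm dt=\frac{1}{m-z}$, and one handles the edge case $m=E$ (where $\mu=\delta_E$ and $d_1=\infty$ a.s.) separately; in that case the claim is a trivial identity.

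Solving the algebraic relation for $U(z)$ yields the formula. The main obstacle, and the step requiring care, is verifying $1-\mathbf E[e^{(z-E)T_1}\mathbf 1_{\{T_1<\infty\}}]\neq 0$: since $d_1>0$ almost surely, so is $T_1$, and therefore $|e^{(z-E)T_1}|=e^{\operatorname{Re}(z-E)T_1}<1$ on $\{T_1<\infty\}$, giving the strict bound $|\mathbf E[e^{(z-E)T_1}\mathbf 1_{\{T_1<\infty\}}]|<\mathbf P(T_1<\infty)\leq 1$. A secondary subtlety is the rigorous formulation of the renewal equation when $\mathbf P(T_1=\infty)>0$, but this is absorbed into the indicator built into $F$.
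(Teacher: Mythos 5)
Your proof is correct and follows essentially the same route as the paper's: derive the renewal equation for $\mathbf P(X_t=0)$ from the regenerative property, take the Laplace transform (using the convolution structure), and solve the resulting algebraic relation for the transform. The only cosmetic difference is that the paper states the renewal equation for $Z_t$ directly rather than for $u(t)=e^{Et}Z_t$; your explicit checks that the denominator is nonzero and that the $m=E$ case is trivial are welcome refinements that the paper leaves implicit.
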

	We will also prove \cref{Proposition: Stieltjes transform in terms of renewal transform} at the end of this section. 
	
	Let us assume for the moment that $\mu$ is not a Dirac measure such that that $m-E>0$. Let $D_t \coloneqq \int_0^t (1-X_s) \, \mathrm ds$
	be the total dormant time up to time $t$.
	We then have by standard renewal theoretic arguments (which we will summarize at the end of this section)
	\begin{equation}
		\label{Equation: Convergence of dormant probability to atom}
		\mu(\{E\}) = \lim_{t\to \infty} e^{Et} Z_t = \lim_{t\to \infty}\mathbf P(X_t = 0) =  \lim_{t\to \infty} \mathbf E[D_t/t] = \frac{\mathbf E[d_1]}{\mathbf E[T_1]} = \frac{1}{1 + (m-E)\mathbf E[a_1]}
	\end{equation}
	where we used in the last equality that $d_1 \sim \operatorname{Exp}(m-E)$, and where the last two expressions are by definition zero in case that $\mathbf E[T_1]  = \mathbf E[a_1] = \infty$. In combination with \cref{Proposition: Stieltjes transform in terms of renewal transform} this leads to the following observation
	\begin{cor}
		\label{Corollary: classify singularity in E}
		We have
		\begin{align*}
			&\mu(\{E\}) = 1 & &\iff \quad  d_1 = \infty \text{ a.s. } \\
			&\mu(\{E\}) \in (0, 1)&  &\iff \quad \mathbf E[T_1] < \infty. \\
			&\mu(\{E\}) = 0 \text{ and } \int_{(E, \infty)} \frac{\mu(\mathrm dx)}{x-E} = \infty & &\iff \quad T_1< \infty \text{ a.s. and }\mathbf E[a_1] = \infty. \\
			&\mu(\{E\}) = 0 \text{ and } \int_{(E, \infty)} \frac{\mu(\mathrm dx)}{x-E} < \infty & &\iff \quad d_1< \infty \text{ a.s. and }\mathbf P(a_1 =  \infty) > 0.	
		\end{align*}
	\end{cor}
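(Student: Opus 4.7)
My plan is to verify the four stated equivalences by matching two four-way partitions. On the measure side, $\mu$ falls into exactly one of the cases $\mu(\{E\}) = 1$; $0 < \mu(\{E\}) < 1$; $\mu(\{E\}) = 0$ with $\int_{(E,\infty)} (x-E)^{-1}\mu(\mathrm dx) = \infty$; or $\mu(\{E\}) = 0$ with that integral finite. On the renewal side, the corresponding cases are $d_1 = \infty$ a.s.; $\mathbf{E}[T_1] < \infty$; $T_1 < \infty$ a.s.\ together with $\mathbf{E}[a_1] = \infty$; and $\mathbf{P}(a_1 = \infty) > 0$ with $d_1 < \infty$ a.s. It therefore suffices to check the forward direction on each line and argue that both sides are genuine partitions, so that each forward implication automatically upgrades to an equivalence.

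The first equivalence is essentially immediate: $\mu(\{E\}) = 1$ forces $\mu = \delta_E$ because $E = \inf\supp\mu$, which in turn is equivalent to $m = E$; by \cref{Theorem: Existence renewal transform} the law $d_1 \sim \operatorname{Exp}(m-E)$ then collapses to $d_1 = \infty$ a.s. For the remaining cases one may assume $m > E$, so $d_1$ is a.s.\ finite with $\mathbf{E}[d_1] = 1/(m-E) < \infty$. The second equivalence then follows directly from \cref{Equation: Convergence of dormant probability to atom}: the identity $\mu(\{E\}) = [1 + (m-E)\mathbf{E}[a_1]]^{-1}$ gives $\mu(\{E\}) > 0 \iff \mathbf{E}[a_1] < \infty \iff \mathbf{E}[T_1] < \infty$.

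To separate the third and fourth equivalences I would apply \cref{Proposition: Stieltjes transform in terms of renewal transform} along the real axis, taking $z = E - \varepsilon$ and letting $\varepsilon \downarrow 0$. Because $\mu(\{E\}) = 0$, monotone convergence turns the left-hand side into $\int_{(E,\infty)} (x-E)^{-1}\mu(\mathrm dx)$, while on the right-hand side $1/(m-E+\varepsilon) \to 1/(m-E) < \infty$ is harmless and dominated convergence yields $\mathbf{E}[e^{-\varepsilon T_1}\1_{\{T_1 < \infty\}}] \to \mathbf{P}(T_1 < \infty)$. The limit of the right-hand side is therefore $[(m-E)(1 - \mathbf{P}(T_1 < \infty))]^{-1}$, so the integral is infinite precisely when $T_1 < \infty$ a.s.\ (line three) and finite precisely when $\mathbf{P}(T_1 = \infty) > 0$; under $d_1 < \infty$ a.s.\ the latter is the same as $\mathbf{P}(a_1 = \infty) > 0$ (line four). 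The main obstacle I anticipate is not any individual convergence argument but rather careful bookkeeping of the boundary case $\mathbf{P}(T_1 < \infty) = 1$ in the Stieltjes formula, together with the (routine but necessary) verification that the four renewal scenarios genuinely partition the possible laws of $(d_1,a_1)$ produced by \cref{Theorem: Existence renewal transform}, which reduces to the dichotomies $d_1 = \infty$ a.s.\ vs.\ $d_1 < \infty$ a.s.\ and $a_1 < \infty$ a.s.\ vs.\ $\mathbf{P}(a_1 = \infty) > 0$.
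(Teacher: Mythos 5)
Your proposal is correct and follows the same route the paper intends: the first two lines come from the displayed identity $\mu(\{E\}) = [1+(m-E)\mathbf{E}[a_1]]^{-1}$ in \cref{Equation: Convergence of dormant probability to atom} together with $d_1\sim\operatorname{Exp}(m-E)$, and the last two from taking $z\uparrow E$ along the reals in \cref{Proposition: Stieltjes transform in terms of renewal transform}. The paper leaves these steps implicit, so your write-up simply supplies the bookkeeping (the two-sided partitions and the monotone/dominated convergence passage to the limit) that the authors omit.
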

	
	For a visualization of \cref{Corollary: classify singularity in E} see \cref{fig:alternatingrenewal2}.
	Notice that $p \coloneqq \mathbf P(a_1 = \infty)>0$ implies that the total number of dormant periods has geometric distribution with success probability $p$ and has therefore mean $1/p$.
	As a second consequence of \cref{Proposition: Stieltjes transform in terms of renewal transform}, we hence obtain the following.
	\begin{cor}
	If $\mu(\{E\}) = 0$ then $(m-E) \int_{(E, \infty)} (x-E)^{-1}\mu(\mathrm dx)$
	is the expected total number of dormant periods under $\mathbf P$.
	\end{cor}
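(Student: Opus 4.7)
The plan is to deduce the claim directly from \cref{Proposition: Stieltjes transform in terms of renewal transform} by letting $z \uparrow E$ along the real axis and then recognizing the resulting right-hand side as the mean of a geometric distribution counting dormant periods. The hypothesis $\mu(\{E\}) = 0$ forces $\mu \neq \delta_E$ and hence $m > E$, so under $\mathbf P$ the first dormant period $d_1 \sim \operatorname{Exp}(m - E)$ is almost surely finite. Writing $p \coloneqq \mathbf P(a_1 = \infty)$, the independence of $d_1$ and $a_1$ then yields $\mathbf P(T_1 = \infty) = p$.

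First I would set $z = E - \epsilon$ in \cref{Proposition: Stieltjes transform in terms of renewal transform} and send $\epsilon \downarrow 0$. Monotone convergence on the left (using $\mu(\{E\}) = 0$ to discard the boundary contribution) gives $\int_{(E, \infty)} \mu(\mathrm dx)/(x-E)$, while dominated convergence on the right with majorant $\1_{\{T_1 < \infty\}}$ yields $\mathbf E[e^{-\epsilon T_1} \1_{\{T_1 < \infty\}}] \to \mathbf P(T_1 < \infty) = 1 - p$. Multiplying by $m - E$ produces
\begin{equation*}
	(m - E) \int_{(E, \infty)} \frac{\mu(\mathrm dx)}{x - E} = \frac{1}{p},
\end{equation*}
with the convention $1/0 = +\infty$, consistent with \cref{Corollary: classify singularity in E} which identifies $p = 0$ as precisely the case where the integral diverges.

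To close the argument, I would identify $1/p$ with $\mathbf E[N]$, where $N$ counts the total number of dormant periods under $\mathbf P$. By the regenerative structure of $X$, each completed cycle is an independent copy of the first, so after every dormant period the subsequent active period either terminates (with probability $1 - p$, producing one further dormant period) or lasts forever (with probability $p$, producing none); thus $N$ is geometric with success probability $p$, giving $\mathbf E[N] = 1/p$ when $p > 0$, and $N = \infty$ almost surely when $p = 0$. The only mildly delicate step is justifying the limit passage $\epsilon \downarrow 0$ separately in the finite and infinite regimes, but monotonicity of both sides in $\epsilon$ makes this routine, so I do not foresee a genuine obstacle beyond bookkeeping.
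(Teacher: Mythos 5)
Your proof is correct and takes essentially the same route as the paper: the text immediately preceding the corollary notes that $p=\mathbf P(a_1=\infty)>0$ makes the number of dormant periods geometric with mean $1/p$, and the corollary is stated as a consequence of \cref{Proposition: Stieltjes transform in terms of renewal transform} evaluated in the limit $z\uparrow E$. Your spelled-out limit passage and the observation that $d_1<\infty$ a.s.\ forces $\mathbf P(T_1<\infty)=1-p$ are exactly the bookkeeping the paper leaves implicit.
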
 
	\begin{figure}
		\centering
		\includegraphics[width=0.8\linewidth]{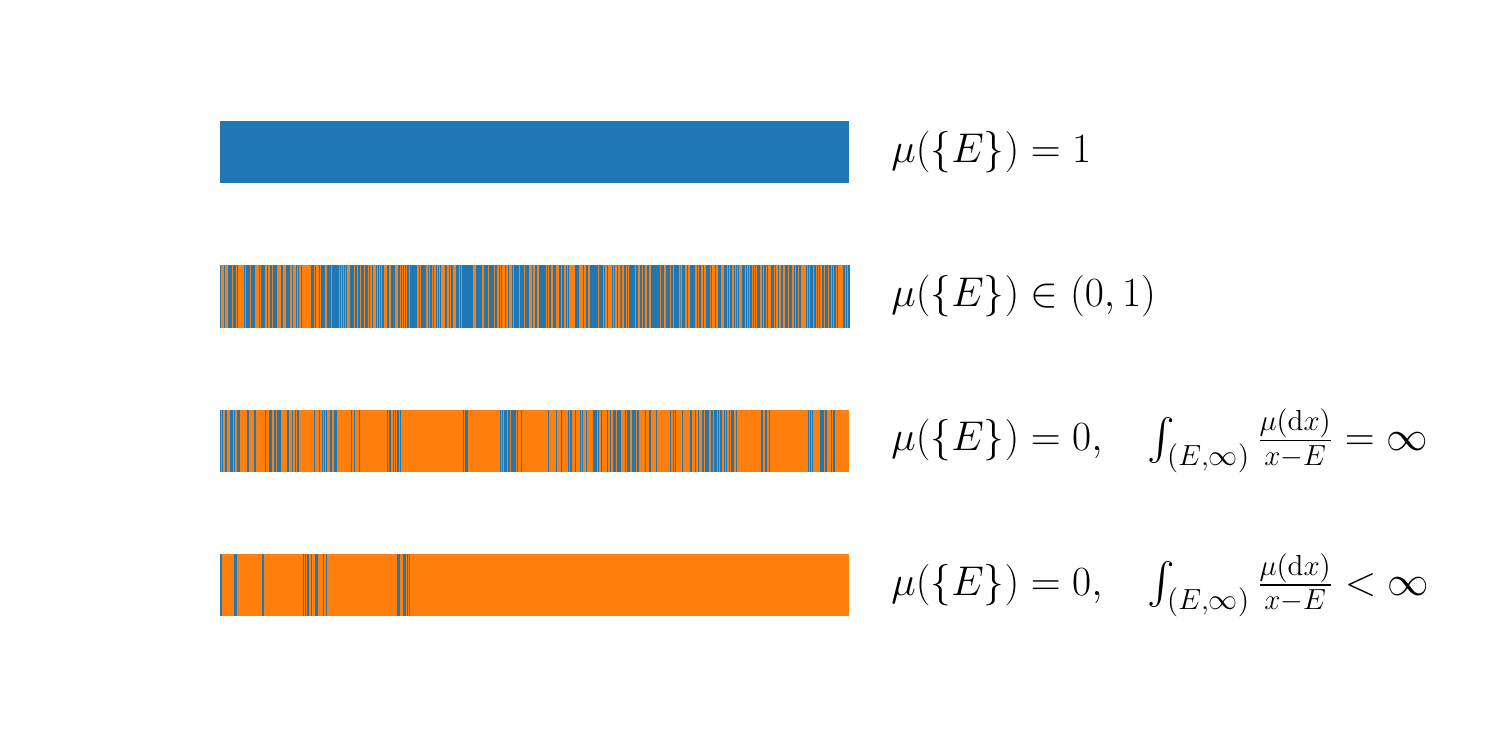}
		\vspace*{-7mm} 
		\caption{Visualization of \cref{Corollary: classify singularity in E}. Dormant periods are in blue, active periods are in orange. The more mass the measure has close to $E = \inf \operatorname{supp}(\mu)$, the more the process tends to be dormant. For $\mu = \delta_E$ the process is always dormant, for $\mu(\{E\}) = 0$ and $\int_{(E, \infty)} (x-E)^{-1} \mu(\mathrm dx) < \infty$ the process is eventually active.}
		\label{fig:alternatingrenewal2}
	\end{figure}
	Let us now rephrase \cref{Theorem: Convergence of averaged partition function,Theorem: Second order term} in terms of the renewal transform. By \cref{Theorem: Existence renewal transform}, we have for any $t>0$
	\begin{equation*}
		\int_0^t \frac{Z_s Z_{t-s}}{Z_t} \, \mathrm ds =  \int_0^t \mathbf P_t(X_s = 0) \, \mathrm ds = \mathbf E_t[D_t],
	\end{equation*}
	where the last equality follows from Fubini's theorem.
	Hence, \cref{Theorem: Convergence of averaged partition function} states exactly that for all $\kappa\in (0, 1)$
	\begin{equation*}
		\mu(\{E\}) = \lim_{t\to \infty} \mathbf P_t(X_{\kappa t} = 0) = \lim_{t\to \infty} \mathbf E_{t}\big[D_t/t \big].
	\end{equation*}
	Comparing with \cref{Equation: Convergence of dormant probability to atom} yields to the following intuitive interpretation of \cref{Theorem: Convergence of averaged partition function}: the latter states exactly that
	\begin{equation}
		\label{Equation: Conditioning does not change value of limit}
		\lim_{t\to \infty} \mathbf P_t(X_{\kappa t} = 0) = \lim_{t\to \infty} \mathbf P(X_{\kappa t} = 0), \quad  \lim_{t\to \infty} \mathbf E_{t}\big[D_t/t \big] =  \lim_{t\to \infty} \mathbf E\big[D_t/t \big],
	\end{equation}
	i.e., that we do not change the value of the limit by conditioning on $\{X_t = 0\}$. For the case where $T_1$ has finite expected value (corresponding to the case where $\mu(\{E\})>0$), one can directly show \cref{Equation: Conditioning does not change value of limit} via renewal theory. Notice, that \cref{Equation: Conditioning does not change value of limit} in general does not hold for arbitrary alternating renewal processes which are not the renewal transform of a probability measure: take, for example, the case where the dormant periods are exponentially distributed and the distribution of the first active period is of the form $p\delta_\infty + (1-p)\delta_c$ for some $p\in (0, 1)$ and $c>0$, and apply \cite[Proposition 4.7]{BetzPolzer.2022}.
	
	In the same manner, we have
	\begin{equation*}
		2\int_0^t \mathrm ds \int_0^s \mathrm dr \, \cfrac{Z_{t-s} Z_{s-r} Z_r}{Z_t}
		=\mathbf E_t\Big[\int_0^t \mathrm ds \int_0^t \mathrm dr \, \1_{\{X_s = X_r = 0\}} \Big] = \mathbf E_t[D_t^2]
	\end{equation*}
	such that
	\begin{equation*}
		2\int_0^t \mathrm ds \int_0^s \mathrm dr \, \cfrac{Z_{t-s} Z_{s-r} Z_r}{Z_t} -\Big( \int_0^t \, \mathrm ds\cfrac{Z_{t-s} Z_s}{Z_t}\Big)^2 = \mathbf V_t[D_t].
	\end{equation*}
	Hence, \cref{Theorem: Second order term} exactly states that if $\mu(\{E\})>0$ then
	\begin{equation}
		\label{Equation: Theorem 2 in terms of renewal transform}
		\int_{(E, \infty)} \frac{\mu(\mathrm dx)}{x-E} = \lim_{t\to \infty} \frac{\mathbf V_t[D_t]}{\mathbf E_t[D_t]}
	\end{equation}
	provided that the left hand side is finite. We will show below that \cref{Proposition: Stieltjes transform in terms of renewal transform} implies that
	\begin{equation}
		\label{Equation: Convergence of variance uncondtioned}
		\int_{(E, \infty)}\frac{\mu(\mathrm dx)}{x-E} = \frac{(m-E)\mathbf E[a_1^2]}{\big(1+ (m-E)\mathbf E[a_1]\big)^2} =  \lim_{t\to \infty} \frac{\mathbf V[D_t]}{\mathbf E[D_t]},
	\end{equation}
	where the second equality will follow from the known asymptotic variance of renewal-reward processes.
	Comparing \cref{Equation: Theorem 2 in terms of renewal transform,Equation: Convergence of variance uncondtioned} and taking \cref{Equation: Conditioning does not change value of limit} into account, \cref{Theorem: Second order term} can be restated as
	\begin{equation*}
		\lim_{t\to \infty} \mathbf V_t[D_t/t] = \lim_{t\to \infty} \mathbf V[D_t/t]
	\end{equation*}
	i.e.\ once more we do not change the value of the limit by conditioning on $\{X_t = 0\}$. Let us summarize the above in  the following theorem.
	\begin{thm}
		\label{Theorem: Renewal representation Theorem 1 and Theorem 2}
		We have for every $\kappa \in (0, 1)$
		\begin{equation*}
			\mu(\{E\}) = \lim_{t\to \infty} \mathbf P(X_t = 0) = \mathbf P_t(X_{\kappa t} = 0) = \lim_{t\to \infty} \mathbf E[D_t/t] =  \lim_{t\to \infty} \mathbf E_t[D_t/t] = \frac{1}{1+(m-E)\mathbf E[a_1]}.
		\end{equation*}
		Moreover, if $\mu(\{E\})>0$ holds then
		\begin{equation*}
			2\int_{(E, \infty)}\frac{\mu(\mathrm dx)}{x-E}  = \lim_{t\to \infty} \frac{\mathbf V[D_t]}{\mathbf E[D_t]} = \lim_{t\to \infty} \frac{\mathbf V_t[D_t]}{\mathbf E_t[D_t]} = \frac{(m-E)\mathbf E[a_1^2]}{\big(1+ (m-E)\mathbf E[a_1]\big)^2}
		\end{equation*}
		provided that the left hand side is finite or, equivalently, provided that $\mathbf E[a_1^2]<\infty$.  
	\end{thm}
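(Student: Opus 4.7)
The plan is to combine the explicit representations provided by \cref{Theorem: Existence renewal transform} (probabilistic events as ratios of Laplace transforms) with the Wiener-type limits of \cref{Theorem: Convergence of averaged partition function,Theorem: Second order term}, and to extract the closed forms in $\mathbf E[a_1]$ and $\mathbf E[a_1^2]$ from \cref{Proposition: Stieltjes transform in terms of renewal transform}. Throughout I will use that $d_1\sim\operatorname{Exp}(m-E)$ under $\mathbf P$ is independent of $a_1$, so $\mathbf E[T_1]=1/(m-E)+\mathbf E[a_1]$ and $\mathbf E[T_1^2]$ is finite exactly when $\mathbf E[a_1^2]$ is.

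For the first display I would first rewrite each of the four probabilistic expressions as a Laplace ratio. The identity $\mathbf P(X_t=0)=e^{Et}Z_t$ from \cref{Theorem: Existence renewal transform} together with dominated convergence already gives $\mathbf P(X_t=0)\to\mu(\{E\})$. Taking $n=2$ and $t_1=\kappa t$ in the joint identity of \cref{Theorem: Existence renewal transform} yields $\mathbf P_t(X_{\kappa t}=0)=Z_{\kappa t}Z_{(1-\kappa)t}/Z_t$, whose limit is $\mu(\{E\})$ by \cref{Theorem: Convergence of averaged partition function}. Fubini then produces $\mathbf E[D_t/t]=t^{-1}\int_0^t\mathbf P(X_s=0)\,\mathrm ds$ and $\mathbf E_t[D_t/t]=t^{-1}\int_0^t Z_sZ_{t-s}/Z_t\,\mathrm ds$, so the limits equal $\mu(\{E\})$ by Cesàro and by \cref{Theorem: Convergence of averaged partition function}, respectively. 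To reach the closed form $1/(1+(m-E)\mathbf E[a_1])$ I will multiply the identity of \cref{Proposition: Stieltjes transform in terms of renewal transform} at $z=E-\epsilon$ by $\epsilon$: dominated convergence sends the left-hand side to $\mu(\{E\})$, while monotone convergence, using that $u\mapsto(1-e^{-u})/u$ is decreasing, gives $(1-\mathbf E[e^{-\epsilon T_1}\1_{T_1<\infty}])/\epsilon\uparrow\mathbf E[T_1]$ when $\mathbf P(T_1<\infty)=1$, hence $\epsilon/((m-E+\epsilon)(1-\mathbf E[e^{-\epsilon T_1}\1_{T_1<\infty}]))\to 1/((m-E)\mathbf E[T_1])$; the complementary case yields both sides equal to zero, consistent with \cref{Corollary: classify singularity in E}.

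For the second display, Fubini together with \cref{Theorem: Existence renewal transform} identifies $\mathbf E_t[D_t^2]=2\int_0^t\int_0^s Z_{t-s}Z_{s-r}Z_r/Z_t\,\mathrm dr\,\mathrm ds$, so \cref{Theorem: Second order term} reads $\lim_t\mathbf V_t[D_t]/\mathbf E_t[D_t]=2\int_{(E,\infty)}(x-E)^{-1}\mu(\mathrm dx)$ directly. To obtain the closed form together with the equivalence $\int_{(E,\infty)}(x-E)^{-1}\mu(\mathrm dx)<\infty\iff\mathbf E[a_1^2]<\infty$, I will expand \cref{Proposition: Stieltjes transform in terms of renewal transform} to second order in $\epsilon=E-z$: writing $G(\epsilon)\coloneqq(1-\mathbf E[e^{-\epsilon T_1}])/\epsilon$, the identity $1-(1-e^{-u})/u=\int_0^1(1-e^{-us})\,\mathrm ds$ combined with monotone convergence yields $(G(0)-G(\epsilon))/\epsilon\uparrow\mathbf E[T_1^2]/2$, so that after subtracting the singular part $\mu(\{E\})/\epsilon$ the remainder has a finite limit, equal to the claimed $(m-E)\mathbf E[a_1^2]/[2(1+(m-E)\mathbf E[a_1])^2]$, precisely when $\mathbf E[T_1^2]<\infty$. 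Finally the unconditional limit $\mathbf V[D_t]/\mathbf E[D_t]$ will be identified via the classical asymptotic-variance formula for renewal-reward processes, $\mathbf V[D_t]/t\to\mathbf V[d_1-(\mathbf E[d_1]/\mathbf E[T_1])T_1]/\mathbf E[T_1]$, valid under $\mathbf E[T_1^2]<\infty$; an explicit computation using the independence of $(d_1,a_1)$ and the moments of the exponential distribution reduces the right-hand side to $\mu(\{E\})\cdot(m-E)\mathbf E[a_1^2]/(1+(m-E)\mathbf E[a_1])^2$, so that dividing by $\mathbf E[D_t]/t\to\mu(\{E\})$ produces the stated limit.

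The main technical obstacle will be the Stieltjes coefficient matching: establishing that the $O(1)$ coefficient in the expansion of \cref{Proposition: Stieltjes transform in terms of renewal transform} really captures $\int_{(E,\infty)}(x-E)^{-1}\mu(\mathrm dx)$, and more delicately that finiteness of this integral forces $\mathbf E[T_1^2]<\infty$ rather than both merely agreeing as (possibly infinite) suprema. The monotone-convergence representation for $(G(0)-G(\epsilon))/\epsilon$ should deliver both implications at once. Beyond that, the degenerate case $\mu=\delta_E$ (the unique scenario in which $\mathbf P(T_1=\infty)>0$ and $\mu(\{E\})>0$, by \cref{Corollary: classify singularity in E}) has to be handled separately, but makes both sides of the second display equal to zero trivially.
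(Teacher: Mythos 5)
Your proposal is correct and follows essentially the same route as the paper: express the probabilistic quantities as Laplace‐transform ratios via \cref{Theorem: Existence renewal transform}, invoke \cref{Theorem: Convergence of averaged partition function,Theorem: Second order term} for the time‐averaged limits, expand the renewal identity of \cref{Proposition: Stieltjes transform in terms of renewal transform} near $z=E$ to extract the closed forms in $\mathbf E[a_1]$ and $\mathbf E[a_1^2]$, and finish with the Brown–Solomon asymptotic variance for renewal–reward processes. The computations all check out; in particular your second-order expansion via $G(\epsilon)=(1-\mathbf E[e^{-\epsilon T_1}])/\epsilon$ and the identity $1-(1-e^{-u})/u=\int_0^1(1-e^{-us})\,\rmd s$ is algebraically equivalent to, and arguably slightly tidier than, the paper's argument based on $\phi(\lambda)=\mathbf E[e^{(\lambda-E)T_1}]$ and the monotonicity of $\phi''$. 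Both deliver the two-sided equivalence $\int_{(E,\infty)}(x-E)^{-1}\mu(\rmd x)<\infty\iff\mathbf E[T_1^2]<\infty$ at once via monotone convergence.

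Two small deviations are worth flagging, though neither is a gap. First, for the closed form $\mu(\{E\})=1/(1+(m-E)\mathbf E[a_1])$ you re-derive it from the Stieltjes identity by multiplying by $\epsilon=E-z$ and letting $\epsilon\downarrow0$; the paper instead already has $\lim_t\mathbf P(X_t=0)=\mathbf E[d_1]/\mathbf E[T_1]$ from the key renewal theorem in \cref{Equation: Convergence of dormant probability to atom} and just uses $d_1\sim\operatorname{Exp}(m-E)$. Your Tauberian route is a self-contained alternative. Second, your one-line Cesàro argument for $\lim_t\mathbf E[D_t/t]=\mu(\{E\})$, which leans on $\mathbf P(X_s=0)=e^{Es}Z_s\to\mu(\{E\})$ by dominated convergence, is simpler than the paper's case split (key renewal theorem / elementary renewal theorem + LLN) and equally valid, since that pointwise limit is already secured.
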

	In other words, if we assume that $\mu$ is not a Dirac measure then we have $\mu(\{E\})>0$ if and only if $T_1$ has finite expected value. In this case, a stationary version of the process exists which is the limit of the distribution of $(X_t)_{t\geq T}$ as $T\to \infty$ (see e.g.\ \cite[Proposition 4.4]{BetzPolzer.2022}) and we have $\int_{(E, \infty)} (x-E)^{-1} \mu(\mathrm dx)< \infty$ if and only if the time until the first renewal of the stationary process has finite expected value.
	We will finish the proof of \cref{Theorem: Renewal representation Theorem 1 and Theorem 2} at the end of this section by proving \cref{Equation: Convergence of variance uncondtioned}.
	Before looking at some concrete examples of renewal transforms, we point out that we obtain as a corollary of \cref{Proposition: Monotonicty of fraction of partition functions} some potentially useful monotonicity properties.
	
	\begin{cor}	
		The following holds.
		\begin{enumerate}
			\item For every $t\geq 0$ the function $s \mapsto \mathbf P_t(X_s = 0)$ is decreasing on $[0, t/2]$ and increasing on $[t/2, t]$.
			\item For every $s\geq 0$ the function $t \mapsto \mathbf P_t(X_s = 0)$ is decreasing on $[s, \infty)$.
			\item The function $t\mapsto \mathbf E_t[D_t/t]$ is decreasing on $[0, \infty)$.
		\end{enumerate}
	\end{cor}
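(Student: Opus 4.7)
The plan is to translate each of the three monotonicity statements into the corresponding statement about ratios of Laplace transforms using \cref{Theorem: Existence renewal transform}, and then invoke the three parts of \cref{Proposition: Monotonicty of fraction of partition functions} directly. No new analytic work is needed; the whole content of the corollary lies in the identifications between probabilistic quantities under $\mathbf P_t$ and ratios of $Z$.

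First I would record the key identity: from \cref{Theorem: Existence renewal transform} applied with $n=2$, $t_0=0$, $t_1=s$, $t_2=t$, one has for every $0\le s\le t$
\begin{equation*}
	\mathbf P_t(X_s=0)=\frac{Z_s Z_{t-s}}{Z_t}.
\end{equation*}
This single formula already yields parts (i) and (ii). Indeed, fixing $t\ge 0$ and varying $s\in[0,t]$ gives (i) as an immediate consequence of \subcref{Equation: quotient of partition functions}, which states that $s\mapsto Z_sZ_{t-s}/Z_t$ decreases on $[0,t/2]$ and increases on $[t/2,t]$. Likewise, fixing $s\ge 0$ and varying $t\in[s,\infty)$ gives (ii) from \subcref{Equation: quotient of partition functions 2}, which asserts that $t\mapsto Z_sZ_{t-s}/Z_t$ is decreasing on $[s,\infty)$.

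For (iii) I would integrate the previous identity in $s$ and apply Fubini's theorem to obtain
\begin{equation*}
	\mathbf E_t[D_t]=\int_0^t\mathbf P_t(X_s=0)\,\mathrm ds=\int_0^t\frac{Z_sZ_{t-s}}{Z_t}\,\mathrm ds,
\end{equation*}
so that $\mathbf E_t[D_t/t]$ equals the ergodic average in \cref{Equation: Average quotient of partition functions}. The monotonicity stated in \subcref{Equation: Average quotient of partition functions} then yields part (iii) verbatim.

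There is no real obstacle here: all three parts reduce mechanically to the three parts of \cref{Proposition: Monotonicty of fraction of partition functions} once the dictionary from \cref{Theorem: Existence renewal transform} is applied. The only small point worth spelling out is the use of Fubini to go from $\mathbf P_t(X_s=0)$ to $\mathbf E_t[D_t]$, which is justified because the integrand $\mathbf 1_{\{X_s=0\}}$ is bounded and nonnegative.
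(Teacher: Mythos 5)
Your proof is correct and takes exactly the route the paper has in mind: the corollary is stated as an immediate consequence of \cref{Proposition: Monotonicty of fraction of partition functions} via the identity $\mathbf P_t(X_s=0)=Z_sZ_{t-s}/Z_t$ from \cref{Theorem: Existence renewal transform}, and the Fubini step for $\mathbf E_t[D_t]$ is already recorded in the discussion preceding \cref{Theorem: Renewal representation Theorem 1 and Theorem 2}. Nothing to add.
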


	While we will explicitly construct the process $\mathbf P$ in the proof of \cref{Theorem: Existence renewal transform} as the idle and busy periods of a $M/G/\infty$-queue, we will see in the following examples that there can be different ways to realize the renewal transform, such as Feynman--Kac formulas. In this context, we will see that it can be easier to work with the conditioned process $\mathbf P_t$ on the interval $[0, t]$ (which, in the given examples, can be expressed in terms of a perturbed path measure in finite volume) than to work with the full measure $\mathbf P$ (which can be expressed in terms of the infinite volume limit of the perturbed path measure provided that it exists).
	
	\begin{ex}
		\label{Example: Renewal transform of a random walk}
		Let $\Delta$ denote the discrete Laplace operator on $\ell^2(\Z^d)$, meaning that
		\begin{equation*}
			(\Delta \psi)(x) \coloneqq \sum_{y:\, \|x-y\|_1 = 1} (\psi(y) - \psi(x)),
		\end{equation*}
		for all $\psi\in \ell^2(\mathbb Z^d)$ and $x\in \mathbb Z^d$. Let us fix some arbitrary vertex $o\in \mathbb Z^d$ and let $\mu$ be the spectral measure of $-\Delta$ with respect to the unit vector $\delta_o \coloneqq (\delta_{ox})_{x \in \Z^d}$. 
		Let $(Y_t)_{t\geq 0}$ be a simple continuous time random walk on $\mathbb Z^d$ started in $o$, whose distribution we will denote by $\mathbb P$, and define for $t\geq 0$
		\begin{equation*}
			\hat X_t \coloneqq \1_{\{Y_t \neq  o\}}.
		\end{equation*} 
		Since $\Delta$ is the generator of $Y$, we have
		\begin{equation*}
			Z_t = \langle \delta_o, e^{t\Delta} \delta_o \rangle = \mathbb E[\1_{\{Y_t = o\}}] = \mathbb P(\hat X_t = 0)
		\end{equation*}
		and since $E=0$, one easily sees that the law of $(\hat X_t)_{t\geq 0}$ is the renewal transform of $\mu$. Let $\hat T_1$ be the first recurrence time of $Y$ to $o$ (which we set to be $\infty$ in case the walk does not return). Then $\mathbb P(\hat T_1 = \infty) >0$ if and only if $d\geq 3$.
		Since $\mu(\{0\}) =0$, we obtain with \cref{Corollary: classify singularity in E}
		\begin{equation*}
			\int_{(0, \infty)} \frac{\mu(\mathrm dx)}{x} < \infty  \quad \iff \quad  d = 1, 2.
		\end{equation*}
		This agrees with the known asymptotics of the density $\rho$ of $\mu$ as $x\downarrow 0$: It is well known that $\rho(x)\sim x^{d/2-1}$ as $x\downarrow 0$ (see e.g. \cite[Exercise 4.2]{AizenmanWarzel.2015}).
	\end{ex}		
	
	\begin{ex}
		\label{Example: Renewal transform of discrete Schroedinger operator}
		More generally, let $H = -\Delta + V$ be the discrete Laplace operator with potential $V = (V(x))_{x\in \mathbb Z^d}$ (acting as a multiplication operator on $\ell^2(\mathbb Z^d)$), which we assume, for simplicity, to be bounded and let $\mu$ be the spectral measure with respect to $\delta_o$. 
		Then the Feynman--Kac formula yields
		\begin{equation*}
			Z_t = \langle \delta_o, e^{-tH} \delta_o \rangle = \mathbb E\Big[\operatorname{exp}\Big(-\int_0^t V(Y_s) \, \mathrm ds \Big) \1_{\{Y_t = o\}} \Big].
		\end{equation*}
		Define the perturbed random walk
		\begin{equation*}
			\widehat{\mathbb P_t}(\mathrm dY) \coloneqq \frac{1}{Z_t}\operatorname{exp}\Big( -\int_0^t V(Y_s)\, \mathrm ds\Big)  \1_{\{Y_t = o\}} \, \mathbb P(\mathrm dY)
		\end{equation*}
		where $Z_t$ is a normalization constant.
		As above, let $\hat X_t \coloneqq \1_{\{Y_t \neq  o\}}$. One easily shows that for all $0\leq r\leq s\leq t$
		\begin{equation*}
			\mathbf P_t(X_s = 0) = \frac{Z_s Z_{t-s}}{Z_t} = \widehat{\mathbb P}_t(\hat X_s = 0), \quad 	\mathbf P_t(X_r = X_s = 0) = \frac{Z_r Z_{s-r} Z_{t-s}}{Z_t} =   \widehat{\mathbb P}_t(\hat X_r = \hat X_s = 0).
		\end{equation*}
		Similarily to the argument in the proof of \cref{Proposition: Stieltjes transform in terms of renewal transform}, namely by deriving a renewal equation for $Z$ and by taking the Laplace transform in said renewal equation, one obtains by comparison with \cref{Proposition: Stieltjes transform in terms of renewal transform} an explicit representation of the renwal transform of $\mu$: the measure
		\begin{equation*}
			\widetilde{\mathbb P}(\mathrm dX) \coloneqq \operatorname{exp}\Big( \int_0^{\hat T_1} (E-V(X_t)) \, \mathrm dt \Big) \1_{\{\hat T_1 < \infty\}} \, \mathbb P(\mathrm dX)
		\end{equation*}
		is a sub-probability measure whose total mass we denote by $p \in (0, 1]$. Let $\nu$ be the image measure of $\widetilde{\mathbb P}$ under the map $\hat T_1$. The renewal transform $\mathbf P$ is the distribution of the unique alternating renewal process under which $d_1$ and $a_1$ are independent, under which $d_1\sim \operatorname{Exp}(2d + V(o) - E)$ and under which $T_1$ has distribution $\nu + (1-p) \delta_\infty$.
		By \cref{Corollary: classify singularity in E}, the operator $H$ has an eigenfunction $\psi$ to the eigenvalue $E$ with $\psi(o) \neq 0$ if and only if $p=1$ and $\widehat{\mathbb E}[\hat T_1] < \infty$. In this case, the infinite volume limit $\widehat{\mathbb P} = \lim_{t \to \infty} \widehat{\mathbb P_t}$ exists in a suitable sense (this follows from \cite[Prop.~4.4]{BetzPolzer.2022}) and $\mathbf P$ is the distribution of $(\hat X_t)_{t\geq 0}$ under $\widehat{\mathbb P}$.
	\end{ex}

	\begin{ex}
		\label{Example: Renewal transform of the Fröhlich Polaron}
		For a more intricate example, let us consider the Hamiltonian $H(0)$ of the Fröhlich Polaron at fixed total momentum $0$ and coupling $\alpha>0$. Let $\mu$ be the spectral measure of $H(0)$ with respect to the Fock vacuum $\Omega$.
		Let $\widehat{\Gamma}_t$ be the dual point process of the path measure $\widehat{\mathbb P}_t$ of the Polaron in finite volume $[0, t]$, see \cite{MukherjeeVaradhan.2020,BetzPolzer.2022}. Then $\widehat{\Gamma}_t$ can be seen as the law of a perturbed $M/G/\infty$-queue, conditioned to be empty at time $t$. For $t\geq 0$, let $N_t$ denote the number of customers present at time $t$ and let
		\begin{equation*}
			\hat X_t \coloneqq \1_{\{N_t > 0\}}, \quad t\geq 0.
		\end{equation*}
		One can show that for all $0\leq r \leq s \leq t$
		\begin{equation*}
			\mathbf P_t(X_s = 0) = \frac{Z_s Z_{t-s}}{Z_t} = \widehat{\Gamma}_t(\hat X_s = 0), \quad 	\mathbf P_t(X_r = X_s = 0) = \frac{Z_r Z_{s-r} Z_{t-s}}{Z_t} =   \widehat{\Gamma}_t(\hat X_r = \hat X_s = 0).
		\end{equation*}
		Let $\widehat{\Gamma}$ be the infinite volume limit of $\widehat{\Gamma}_t$ as $t\to \infty$, which was shown to exist in \cite{MukherjeeVaradhan.2020,BetzPolzer.2022}.
		Then $\mathbf P$ is the distribution of $(\hat X_t)_{t\geq 0}$ under $\widehat{\Gamma}$; compare \cite[Prop.~3]{Polzer.2023} with \cref{Proposition: Stieltjes transform in terms of renewal transform}, taking into consideration that $m = \big\langle \Omega, H(0) \Omega \big\rangle = 0$.
		However, $\widehat{\Gamma}$ and the law $\mathbf Q$ of the $M/G/\infty$ queue $\xi$ constructed in the proof of \cref{Theorem: Existence renewal transform} do in general not coincide: Using that \cite{DonskerVaradhan.1983} $\operatorname{inf} \sigma(H(0)) \sim c \alpha^2$ as $\alpha\to \infty$ and \cite[Eq.~(7.1)]{BetzPolzer.2022}, one obtains that the density of individuals in the limit of large $\alpha$ is approximately twice as large under $\widehat{\Gamma}$ as under $\mathbf Q$. In other words, while the process $(\1_{\{N_t > 0\}})_{t\geq 0}$ has the same distribution under $\widehat{\Gamma}$ as under $\mathbf Q$, the process $(N_t)_{t\geq 0}$ of the number of customers in general does not.
	\end{ex}
	
	We conclude this section with the
	\subsection*{Proofs of \cref{Theorem: Existence renewal transform,Proposition: Stieltjes transform in terms of renewal transform,Equation: Convergence of dormant probability to atom,Theorem: Renewal representation Theorem 1 and Theorem 2}}
	
	\begin{proof}[Proof of \cref{Theorem: Existence renewal transform}]	
		We start by showing the existence of $\mathbf P$. After translation of $\mu$ (which leaves the function $t \mapsto e^{Et}Z_t$ invariant), we might assume with out loss of generality that $m=0$. Since the statement is trivial for the case where $\mu$ is a Dirac measure, we will assume w.l.o.g. that $E<m=0$.
		We define $\phi:[0, \infty) \to \R$ by $\phi(t) \coloneqq \log(Z_t)$ for all $t\geq 0$ (i.e.\ $t\mapsto \phi(-t)$ is the cumulant generating function). The function $\phi$ is differentiable on $[0, \infty)$ and twice differentiable on $(0, \infty)$, and one easily checks that
		\begin{equation*}
			\phi'(0) = -m = 0, \quad \phi''(t) = q(t) \coloneqq \mathbb V[\hat \mu_t]
		\end{equation*}
		where $\mathbb V[\hat \mu_t]$ denotes the variance of the probability measure $\hat \mu_t$ defined by
		\begin{equation*}
			\hat \mu_t(\mathrm dx) \coloneq \frac{1}{Z_t} e^{-tx} \mu(\mathrm dx).
		\end{equation*}
		Notice that
		\begin{align*}
			\operatorname{exp}\Big( \int_0^t \mathrm du \int_u^t \mathrm dv \, q(v-u) \Big) = \operatorname{exp}\Big( \int_0^t \mathrm du \, \phi'(t-u) \Big) = \exp(\phi(t)) = Z_t,
		\end{align*}
		since $\phi(0) = \phi'(0) = 0$. We have
		\begin{equation*}
			\int_0^t \mathrm du \int_u^t \mathrm dv\, q(v-u) = \int_0^t \mathrm du \int_0^{t-u} \mathrm d\tau \, q(\tau) = \int_0^t \mathrm d \tau q(\tau) (t- \tau)
		\end{equation*}
		leading to
		\begin{equation}
			\label{Equation: Obtain intensity from E}
			\lim_{t \to \infty}\int_0^t \mathrm d\tau\, q(\tau) (1-\tau/t) = \lim_{t\to \infty} \frac{1}{t} \log(Z_t) = -E.
		\end{equation}
		Since
		\begin{equation*}
			\frac{1}{2} \int_0^{t/2} \mathrm d \tau q(\tau) \leq \int_0^t \mathrm d \tau q(\tau) (1- \tau/t) , 
		\end{equation*}
		we obtain with \cref{Equation: Obtain intensity from E}
		\begin{equation*}
			\int_0^\infty \mathrm d\tau\, q(\tau) \leq -2E.
		\end{equation*}
		Hence, for every $\varepsilon>0$,
		\begin{equation*}
			\limsup_{t\to \infty} \frac{1}{t}\int_0^t \mathrm d\tau \, q(\tau) \tau \leq \limsup_{t\to \infty} \varepsilon \int_0^{\varepsilon t}  \mathrm d \tau \,  q(\tau) + \int_{\varepsilon t}^t \mathrm d \tau \, q(\tau) \leq -2\varepsilon E
		\end{equation*}
		and therefore  \cref{Equation: Obtain intensity from E} yields
		\begin{equation}
			\label{Equation: Arrival density is -E}
			\int_0^\infty \mathrm d\tau \, q(\tau) = -E.
		\end{equation}
		We can hence further rewrite
		\begin{align}
			\label{Equation: Z in terms of PPP}
			Z_t = \operatorname{exp}\Big( \int_0^t \mathrm du \int_u^t \mathrm dv \, q(v-u) \Big) &= \operatorname{exp}\Big( \int_0^t \mathrm du \int_u^\infty \mathrm dv \, q(v-u) - \int_0^t \mathrm du \int_t^\infty \mathrm d v \, q(v-u)   \Big) \nonumber \\
			&=\exp\Big(-Et - \int_0^t \mathrm du \int_t^\infty \mathrm d v \, q(v-u)\Big).
		\end{align}
		Let $\xi$ be a Poisson point process with intensity measure
		\begin{equation*}
			q(v-u) \1_{\{0 < u < v\}} \, \mathrm du \mathrm dv.
		\end{equation*}
		For $t\geq 0$, let
		\begin{equation*}
			N_t \coloneqq \xi\big([0, t] \times (t, \infty) \big) \sim \operatorname{Poi}\Big(\int_0^t \mathrm du \int_t^\infty \mathrm d v \, q(v-u)\Big) 
		\end{equation*}
		be the number of points of $\xi$ contained in the set $[0, t] \times (t, \infty)$, i.e.,\ the number of points $(u, v)$ of $\xi$ such that $u\leq t < v$. Then we can rewrite \cref{Equation: Z in terms of PPP} as
		\begin{equation*}
			Z_t = e^{-Et} \mathbb P(N_t =0) = e^{-Et} \mathbf P(X_t=0),
		\end{equation*}
		where we define $\mathbf P$ to be the distribution of the process $(\1_{\{N_t > 0\}})_{t\geq 0}$. Let $0= t_0 \leq t_1 \leq \hdots \leq t_n = t$.  We have
		\begin{equation*}
			Z_{t_{i+1}-t_i} = \operatorname{exp}\Big( \int_0^{t_{i+1}-t_i} \mathrm du \int_{u}^{t_{i+1}-t_i } \mathrm dv \, q(v-u) \Big) =  \operatorname{exp}\Big( \int_{t_{i}}^{t_{i+1}} \mathrm du \int_{u}^{t_{i+1}} \mathrm dv \, q(v-u) \Big)
		\end{equation*}
		for all $i\in \{0, \hdots, n-1\}$ and hence
		\begin{equation}
			\label{Equation: product of partition functions as probability under PPP}
			\prod_{i=0}^{n-1} Z_{t_{i+1}-t_i} = \operatorname{exp}\Big(\sum_{i=0}^{n-1}\int_{t_{i-1}}^{t_{i}} \mathrm du \int_{u}^{t_{i}} \mathrm dv \, q(v-u) \Big) = Z_t \operatorname{exp}\Big(-\int_{M} \mathrm du \mathrm dv \, q(v-u) \Big),
		\end{equation}
		with the set $M$ being given by
		\begin{equation*}
			M \coloneqq \big\{(u, v) \in [0, t]^2:\, u\leq t_i< v \text{ for some }i\in \{1, \hdots, n-1\} \big\}.
		\end{equation*}
		Let $\xi_{0, t} ,\xi_{t, \infty},\xi_{0, t, \infty}$ denote the restriction of $\xi$ to $[0, t]^2$, $[t, \infty)^2$ and $[0, t] \times (t, \infty)^2$ respectively. Then $\xi_{0, t}$ is a Poisson point process with intensity measure 
		\begin{equation*}
			q(v-u) \1_{\{0 < u < v < t\}} \, \mathrm du \mathrm dv
		\end{equation*}
		and we can hence restate \cref{Equation: product of partition functions as probability under PPP} as
		\begin{equation*}
			\frac{1}{Z_t}\prod_{i=0}^{n-1} Z_{t_{i+1}-t_i} = \mathbb P(\xi_{0,t}(M) = 0).
		\end{equation*}
		Now $\xi_{0, t} ,\xi_{t, \infty},\xi_{0, t, \infty}$ are independent. Since $\xi(M) = \xi_{0, t}(M)  + \xi_{0, t, \infty}(M)$, we obtain
		\begin{equation*}
			\frac{1}{Z_t}\prod_{i=0}^{n-1}Z_{t_{i+1}-t_i}  =  \mathbb P(\xi(M) = 0|\xi_{0, t, \infty}(\R^2) = 0) =  \mathbf P(X_{t_1} = \hdots = X_{t_{n-1}} = 0|X_t = 0).
		\end{equation*}
		In order to see that our process regenerates at time $T_1$ and that the first dormant and active period are independent, one can interpret $\xi$ in terms of queueing theory. We write our intensity measure as
		\begin{equation*}
			\beta \hat q(v-u) \1_{\{0 < u < v\}} \, \mathrm du \mathrm dv \quad  \text{ where } \beta = -E, \quad \hat q(t) \coloneqq \beta^{-1} q(t) \text{ for all }t>0.
		\end{equation*}
		Notice that $\hat q$ is a probability density as a consequence of Equation \cref{Equation: Arrival density is -E}.
		If we identify an atom $(s, t)$ of $\xi$ with a customer arriving at time $s$ and departing at time $t$, then $\xi$ is the law of a $M/G/\infty$ queue with arrival intensity $\beta$ and service time distribution with density $\hat q$ (this can be shown by using the marking theorem \cite[Theorem 5.6]{LastPenrose.2017} and the mapping theorem \cite[Theorem 5.1]{LastPenrose.2017} for Poisson point processes). That is, customers arrive according to a Poisson point process with intensity $\beta$ (i.e.\ the inter-arrival times are independent $\operatorname{Exp}(\beta)$ distributed) and depart after iid service times (which are independent of the arrival process) whose distribution has density $\hat q$. In particular, we have $d_1 \sim \operatorname{Exp}(\beta)$. Under this identification, the dormant and active periods are the idle and busy periods of the queue.
		
		It is left to show uniqueness, i.e.\ that there exists only one probability measure $\mathbf P$ on $\mathcal D$ satisfying the given assumptions such that
		\begin{equation*}
			\forall t\geq 0:\, e^{Et}Z_t = \mathbf P(N_t = 0).
		\end{equation*}
		We first notice that we can recover $\beta$ from differentiation of
		\begin{equation*}
			e^{Et}Z_t = \mathbf P(N_t = 0) = \mathbf P(d_1 \geq t) +  \mathbf P(d_1 < t, N_t = 0) = e^{-\beta t} + o(t)
		\end{equation*}
		in $t=0$, where we used in the last equality that
		\begin{equation*}
			\mathbf P(d_1 < t, N_t = 0) \leq \mathbf P(d_1<  t, a_1 \leq t-d_1) \leq (1-e^{-\beta t}) \mathbf P(a_1 < t) = o(t)
		\end{equation*}
		by independence of $d_1$ and $a_1$. As we will see in the upcoming proof of \cref{Proposition: Stieltjes transform in terms of renewal transform}, if we know the distribution of $d_1$, the renewal property allows us express the Laplace transform of $T_1$ in terms of the Laplace transforms of $d_1$ and $Z$. Hence, also the distribution of $T_1$ is uniquely determined. Hence, by independence of $d_1$ and $a_1$, the joint distribution of $(d_1, a_1)$ is uniquely determined, which then, again by the renewal property, determines the distribution $\mathbf P$ of the full process.
	\end{proof}
	
	\begin{proof}[Proof of \cref{Proposition: Stieltjes transform in terms of renewal transform}]
		We have for all $t\geq 0$
		\begin{align*}
			e^{Et} Z_t = \mathbf P(X_t = 0) &= \mathbf P(T_1 \leq t, X_{t} = 0) + \mathbf P(d_1 > t)\\
			&= \int_{\mathcal D} \1_{\{T_1(\omega) \leq t\}} \mathbf P(X_{t-T_1(\omega)} = 0) \, \mathbf P(\mathrm d\omega) + e^{-(m-E)t}.
		\end{align*}
		Multiplying by $e^{-Et}$ yields the renewal equation
		\begin{equation}
			\label{Equation: renewal equation for Z}
			Z_t = \mathbf E\big[\1_{\{T_1\leq t\}} e^{-E T_1} Z_{t-T_1}\big] + e^{-mt}
		\end{equation}
		for $Z$. Taking the Laplace transform in \cref{Equation: renewal equation for Z} and using the convolution property of the Laplace transform yields that, for all $z\in \mathbb C$ with $\operatorname{Re}(z)<E$,
		\begin{equation}
			\label{Equation: Laplace transformation of renewal equation}
			\mathcal L(Z)(-z) = \mathbf E\big[e^{(z-E) T_1}\1_{\{T_1 < \infty\}}\big]\mathcal L(Z)(-z) + \frac{1}{m-z}.
		\end{equation}
		Solving \cref{Equation: Laplace transformation of renewal equation} for 
		\begin{equation*}
			\mathcal L(Z)(-z) = \int_{[E, \infty)} \frac{1}{x-z} \mu(\mathrm dx)
		\end{equation*}
		now yields the claim.
	\end{proof}	
	\begin{proof}[Proof of \cref{Equation: Convergence of dormant probability to atom}]
		We will again use that $t \mapsto \mathbf P(X_t = 0)$ satisfies the renewal equation
		\begin{align}
			\label{Equation: Renewal equation for probability that we are dormant}
			 \forall t\geq 0:\, \mathbf P(X_t = 0) = \int_{\mathcal D} \1_{\{T_1(\omega) \leq t\}} \mathbf P(X_{t-T_1(\omega)} = 0) \, \mathbf P(\mathrm d\omega) + \mathbf P(d_1 > t).
		\end{align}
		Let us first assume that $\mathbf P(T_1 < \infty) = 1$.
		Since by assumption $\mu \neq \delta_E$, i.e.,\ $m>E$,
		the function $t\mapsto \mathbf P(d_1 > t) = e^{-(m-E)t}$ is decreasing and Lebesgue integrable and hence directly Riemann integrable \cite[Ch.~IV, Proposition 4.1]{Asmussen.2003}. By independence of $a_1$ and $d_1\sim \operatorname{Exp}(m-E)$, the distribution of $T_1$ under $\mathbf P$ is absolutely continuous with respect to the Lebesgue measure. Hence, \cref{Equation: Renewal equation for probability that we are dormant} combined with the key renewal theorem \cite[Ch.~IV, Theorem 4.3]{Asmussen.2003} yields the well-known formula
		\begin{equation}
			\label{Equation: limit of probability that we are dormant}
			\lim_{t\to \infty}\mathbf P(X_t = 0) = \frac{1}{\mathbf E[T_1]}\int_0^\infty \mathbf P(d_1 > t) \, \mathrm dt = \frac{\mathbf E[d_1]}{\mathbf E[T_1]},
		\end{equation}
		where the second and third expression in \cref{Equation: limit of probability that we are dormant} are by definition zero in case that $\mathbf E[T_1] = \infty$.
		
		For the case $\mathbf P(T_1 = \infty)>0$, one obtains from \cref{Equation: Renewal equation for probability that we are dormant} that
		\begin{equation*}
			\lim_{t\to \infty}\mathbf P(X_t = 0) = \frac{\lim_{t\to \infty} \mathbf P(d_1 > t)}{1-\mathbf P(T_1 < \infty)} = 0
		\end{equation*}
		see \cite[Ch.~VI, Proposition 5.4]{Asmussen.2003}. It is left to show that
		\begin{equation*}
			\lim_{t\to \infty} \mathbf E[D_t/t] =  \frac{\mathbf E[d_1]}{\mathbf E[T_1]}.
		\end{equation*}
		If $\mathbf E[T_1] < \infty$, this follows directly from \cite[Ch.~V, Theorem 3.1]{Asmussen.2003}.
		If $\mathbf P(T_1 = \infty)>0$, then the process is eventually active, i.e.,\ $t\mapsto D_t$ becomes eventually constant yielding
		 \begin{equation}
		 \label{Equation: convergence of averaged dormant time}
		 	\lim_{t\to \infty} \mathbf E[D_t/t] =  0,
		 \end{equation}
		by the dominated convergence theorem.
		Hence, it is left to show that \cref{Equation: convergence of averaged dormant time} holds if $\mathbf P(T_1 = \infty)=0$, but $\mathbf E[T_1] = \infty$. Let $N_t$ denote the number of renewal points (i.e.\ the number of end points of active periods) that lie in $[0, t]$, and let $d_k$ denote for $k\in \N$ the $k$-th dormant period. By the elementary renewal theorem \cite[Ch.~IV, Proposition 1.4]{Asmussen.2003} we have $N_t/t \to 0$ almost surely as $t\to \infty$. Since $N_t \to \infty$ almost surely as $t\to \infty$, we obtain with the law of large numbers
		\begin{equation*}
			 0\leq \limsup_{t \to \infty} D_t/t \leq \lim_{t\to \infty} \frac{N_t}{t}\frac{1}{N_t}\sum_{k=1}^{N_t} d_k = 0
		\end{equation*}
		almost surely. After taking the expected value, the dominated convergence theorem yields the claim.
	\end{proof}

	\begin{proof}[Proof of \cref{Theorem: Renewal representation Theorem 1 and Theorem 2}]
		We may again assume w.l.o.g. that $\mu$ is not a Dirac measure, i.e.,\ that $m>E$.
		Since we assume that $\mu(\{E\})>0$, we then have in particular $\mathbf E[T_1] < \infty$ by \cref{Corollary: classify singularity in E}. By the previous considerations it is left to show that
		\begin{equation*}
			2\int_{(E, \infty)}\frac{\mu(\mathrm dx)}{x-E} = \frac{(m-E)\mathbf E[a_1^2]}{\big(1+ (m-E)\mathbf E[a_1]\big)^2} =  \lim_{t\to \infty} \frac{\mathbf V[D_t]}{\mathbf E[D_t]}
		\end{equation*}
		and that the integral on the left hand side is finite if and only if $\mathbf E[a_1^2]< \infty$ (i.e.,\ if and only if $\mathbf E[T_1^2]< \infty$).
		We write with the monotone convergence theorem
		\begin{equation}
			\label{Equation: Substract atomic part from ST}
			\int_{(E, \infty)}\frac{\mu(\mathrm dx)}{x-E} = \lim_{\lambda \uparrow E} \int_{(E, \infty)}\frac{\mu(\mathrm dx)}{x-\lambda} =   \lim_{\lambda \uparrow E}\int_{[E, \infty)}\frac{\mu(\mathrm dx)}{x-\lambda} \,  - \, \frac{\mu(\{E\})}{E-\lambda}.
		\end{equation}
		To simplify notation, we define $\phi:(-\infty, E] \to \R$ by $\phi(\lambda) \coloneqq \mathbf E[e^{(\lambda - E)T_1}]$.
		By \cref{Proposition: Stieltjes transform in terms of renewal transform,Equation: Convergence of dormant probability to atom}, we have for every $\lambda < E$
		\begin{align}
			\label{Equation: Asymptotic exanspion of Stieljes transform close to lambda}
			\int_{[E, \infty)}\frac{\mu(\mathrm dx)}{x-\lambda} - \frac{\mu(\{E\})}{E-\lambda} &= \frac{1}{m - \lambda} \cdot \frac{1}{1 -\phi(\lambda)} -  \frac{1}{(E-\lambda)(m-E) \mathbf E[T_1]} \nonumber \\
			&=\frac{(E-\lambda)(m-E) \mathbf E[T_1] - (m-\lambda)(1 -\phi(\lambda))}{(m-\lambda)(E-\lambda)(m-E)\mathbf E[T_1] (1 -\phi(\lambda))} \nonumber  \\
			&=\frac{(m-E) \Big((E-\lambda) \mathbf E[T_1] -(1 -\phi(\lambda)\Big)  - (E-\lambda)(1 -\phi(\lambda))}{(m-\lambda)(E-\lambda)(m-E)\mathbf E[T_1](1 -\phi(\lambda)) } \nonumber  \\
			&=\frac{\frac{m-E}{E-\lambda} \big(\mathbf E[T_1] - f(\lambda) \big)- f(\lambda)}{(m-\lambda)(m-E) \mathbf E[T_1] f(\lambda)  },
		\end{align}
		where
		\begin{equation*}
			f(\lambda) \coloneqq \frac{1 -\phi(\lambda) }{E-\lambda}.
		\end{equation*}
		We have
		\begin{equation*}
			\lim_{\lambda \uparrow E} f(\lambda) = \phi'(E) = \mathbf E[T_1].
		\end{equation*}
		Let us assume that $\int_{(E, \infty)}(x-E)^{-1}\mu(\mathrm dx) < \infty$. Then \cref{Equation: Substract atomic part from ST,Equation: Asymptotic exanspion of Stieljes transform close to lambda} imply that the limit
		\begin{equation*}
			\lim_{\lambda \uparrow E} \frac{\mathbf E[T_1] - f(\lambda)}{E-\lambda} = \lim_{\lambda \uparrow E}\frac{1}{E-\lambda}  \Big(\phi'(E) - \frac{\phi'(E) - \phi(\lambda)}{E-\lambda} \Big)
		\end{equation*}
		exists and is finite. Since $\phi''$ is increasing, we have for all $\lambda < E$
		\begin{equation*}
			\phi''(\lambda) \leq \frac{2}{(E-\lambda)^2}\int_\lambda^E \mathrm ds \int_{s}^E \mathrm dt \, \phi''(t) =  \frac{2}{E-\lambda}  \Big(\phi'(E) - \frac{\phi'(E) - \phi(\lambda)}{E-\lambda} \Big).
		\end{equation*}
		Hence $\phi''(E) = \mathbf E[T_1^2] < \infty$ exists which in turn yields for $\lambda \leq E$
		\begin{equation*}
			\phi(\lambda) = 1 + (\lambda - E)\mathbf E[T_1] + \frac{1}{2}(\lambda - E)^2\mathbf E[T_1^2] + o((\lambda-E)^2),
		\end{equation*}
		which  with \cref{Equation: Asymptotic exanspion of Stieljes transform close to lambda} gives us
		\begin{equation*}
			\int_{[E, \infty)}\frac{\mu(\mathrm dx)}{x-\lambda} - \frac{\mu(\{0\})}{E-\lambda}  = \frac{ \tfrac{1}{2}(m-E)\mathbf E[T_1^2] - \mathbf E[T_1] + o(\lambda - E)}{(m-\lambda)(m-E)\mathbf E[T_1]^2 + o(\lambda - E)}.
		\end{equation*}
		The same holds, if we directly assume that $\mathbf E[T_1^2]< \infty$. By taking the limit $\lambda \uparrow E$, we hence obtain
		\begin{equation*}
			\int_{(E, \infty)}\frac{\mu(\mathrm dx)}{x-E} = \frac{\tfrac{1}{2}(m-E)\mathbf E[T_1^2] - \mathbf E[T_1]}{(m-E)^2 \mathbf E[T_1]^2 },
		\end{equation*}
		where the left hand side is finite if and only if the right hand side is finite, i.e.,\ if and only if $\mathbf E[T_1^2]< \infty$. Using independence of $d_1\sim \operatorname{Exp}(m-E)$ and $a_1$, we can further rewrite
		\begin{align*}
			&2\int_{(E, \infty)}\frac{\mu(\mathrm dx)}{x-E}\\
			&= \frac{(m-E)\mathbf V[T_1^2] + (m-E)\mathbf E[T_1]^2 - 2\mathbf E[T_1]}{(m-E)^2 \mathbf E[T_1]^2} \\
			&=\frac{(m-E)((m-E)^{-2} + \mathbf V[a_1^2]) + (m-E)((m-E)^{-1} + \mathbf E[a_1])^2 -2(m-E)^{-1}-2\mathbf E[a_1]}{(m-E)^2 \mathbf E[T_1]^2} \\
			&=\frac{(m-E)\mathbf V[a_1] + (m-E)\mathbf E[a_1]^2}{(m-E)^2\big((m-E)^{-1}+ \mathbf E[a_1]\big)^2} = \frac{(m-E)\mathbf E[a_1^2]}{\big(1+ (m-E)\mathbf E[a_1]\big)^2}.
		\end{align*}
		It is left to show that this expression coincides with the limit $\lim_{t\to \infty} \mathbf V[D_t]/\mathbf E[D_t]$ provided that $\mathbf E[a_1^2] <\infty$. With the known asymptotic variance of renewal-reward processes (see \cite{BrownSolomon.1975}, also compare to \cite[Ch. V, Theorem 3.2]{Asmussen.2003}), we have
		\begin{align*}
			\lim_{t\to \infty}\frac{1}{t} \mathbf V[D_t] &= \frac{1}{\mathbf E[T_1]} \Big(\mathbf V[d_1] + \frac{\mathbf E[d_1]^2}{\mathbf E[T_1]^2} \mathbf V[T_1] - 2 \frac{\mathbf E[d_1]}{\mathbf E[T_1]} \mathbf{Cov}(d_1, T_1) \Big) \\
			&= \frac{1}{\mathbf E[T_1]} \Big( \Big(1-\frac{\mathbf E[d_1]}{\mathbf E[T_1]}\Big)^2 \mathbf V[d_1] + \frac{\mathbf E[d_1]^2}{\mathbf E[T_1]^2} \mathbf V[a_1] \Big) \\
			&= \frac{1}{\mathbf E[T_1]} \Big(\frac{\mathbf E[a_1]^2}{\mathbf E[T_1]^2} \mathbf V[d_1] + \frac{\mathbf E[d_1]^2}{\mathbf E[T_1]^2} \mathbf V[a_1] \Big),
		\end{align*}
		where we used in the second equality that
		\begin{equation*}
			\mathbf{Cov}(d_1, T_1) = \mathbf V[d_1], \quad \mathbf V[T_1] = \mathbf V[d_1] + \mathbf V[a_1],
		\end{equation*}
		by independence of $d_1$ and $a_1$. Using $d_1\sim \operatorname{Exp}(m-E)$, we obtain
		\begin{equation*}
			\lim_{t\to \infty}\frac{1}{t} \mathbf V[D_t] = \frac{\mathbf E[a_1]^2 + \mathbf V[a_1]}{\mathbf E[T_1]^3(m-E)^2} = \frac{1}{(m-E)\mathbf E[T_1] }\frac{(m-E)\mathbf E[a_1^2]}{\big(1+(m-E)\mathbf E[a_1]\big)^2}
		\end{equation*}
		and the claim follows by dividing both sides by
		\begin{equation*}
			\frac{1}{(m-E)\mathbf E[T_1] } = \lim_{t\to \infty}\frac{1}{t} \mathbf E[D_t]. \qedhere
		\end{equation*}
	\end{proof}
	
	\section{Application to Generalized Spin-Boson Models}
	\label{Section: Generalized Spin-Boson models}
	\label{Section: Application to Spin systems}
	
	In this section, we now study the interaction of a finite dimensional quantum system, e.g., a spin system, with a bosonic quantum field.
	We will first present the considered model and state the obtained results together with a brief comparison to the literature.
	Proofs of the results, especially the application of \cref{Theorem: Convergence of averaged partition function,Theorem: Second order term}, are then presented in the following subsections.
	
	Let $\mathcal H$ be a finite dimensional Hilbert space.
	Adapting the terminology from \cite{BravyidiVincenzoOliveiraTerhal.2008},
	we will call a self-adjoint operator $A$ on $\mathcal H$ {\em stoquastic}
	with respect to an orthonormal basis $\mathcal B = \{\ph_i|i=1,\ldots,\dim\cH\}$ if 
	\begin{equation*}
		\langle \varphi_i, A \varphi_j \rangle \leq 0 \text{ for all }i\neq j.
	\end{equation*}
	To model the bosonic field with a given (arbitrary) bosonic Hilbert space $\fh$,
	we define the bosonic Fock space
	\begin{align*}
		\FS(\fh) \coloneqq \IC \oplus \bigoplus_{n=1}^\infty \fh^{\otimes_\sfs n},
	\end{align*}
	where the symbol $\otimes_\sfs$ denotes the {\em symmetric} tensor product.
	We further define the {\em second quantization} of a selfadjoint operator $S$ on $\fh$ as the selfadjoint operator
	\begin{align*}
		\dG(S) \coloneqq 0 \oplus \bigoplus_{n=1}^\infty \Big(\sum_{i=0}^{n-1} \Id^{\otimes i}\otimes S\otimes \Id^{\otimes(n-1-i)}\Big)^{**}
	\end{align*}
	and the {\em annihilation operator} for a given $f\in\fh$ by linear extension and closure of
	\begin{align*}
		a(f)\big(g_1\otimes_\sfs\cdots\otimes_\sfs g_n\big) \coloneqq \frac1{\sqrt n}\sum_{i=1}^n\braket{f,g_i}\big(g_1\otimes_\sfs\cdots\otimes\xcancel{g_i}\otimes_\sfs\cdots \otimes_\sfs g_n\big).
	\end{align*}
	The densely defined operator $a(f)$ and its adjoint satisfy the {\em canonical commutation relations}
	\begin{align*}
		[a(f),a(g)]=[a(f)^*,a(g)^*]=0, \quad [a(f),a(g)^*] = \braket{f,g}
	\end{align*}
	on a dense subspace of $\FS(\fh)$. If $f\in\sD(\varpi^{-1/2})$ for $\varpi$ being a selfadjoint strictly positive operator on $\fh$,
	i.e.\,$\braket{f,\varpi f}>0$ for all $f\in\sD(\varpi)\setminus\{0\}$ which especially implies injectivity of $\varpi$,
	then the relative bound
	\begin{align}
		\label{eq:a-relbound}
		\|a(f)\psi\| \le \|\varpi^{-1/2}f\|\|\dG(\varpi)^{1/2}\psi\|
	\end{align}
	holds for all $\psi\in\sD(a(f))\subset \sD(\dG(\varpi)^{1/2})$.
	For more details on Fock space calculus, we refer to the textbooks \cite{Parthasarathy.1992,Arai.2018}.
	
	We will from now on assume that $A$ and $B$ are self-adjoint operators on $\mathcal H$
	and that $B$ has an orthonormal basis $\mathcal B$ of eigenvectors such that $A$ is stoquastic with respect to $\mathcal B$.
	Furthermore, we assume $\varpi$ to be a selfadjoint strictly positive operator on $\fh$
	and choose $\nu\in\sD(\varpi^{-1/2})$.
	Under these assumptions, the generalized spin boson Hamiltonian \cite{AraiHirokawa.1997}
	\begin{equation}
		\label{Equation: Definition of generalized Spin-Boson Hamiltonian}
		H \coloneqq A \otimes \1 + 1 \otimes \mathrm d\Gamma(\varpi) + B \otimes (a(\nu) + a^*(\nu))
	\end{equation}
	is selfadjoint on its domain $\sD(H)=\mathcal H \otimes \sD(\dG(\varpi))$, by the relative bound \cref{eq:a-relbound}, the canonical commutation relations and the Kato--Rellich theorem.
	
	One might, for example, consider the case where $A$ is the Hamiltonian of a quantum spin system composed out of $n$ qubits, meaning $\mathcal H = (\mathbb C^2)^{\otimes n}$,
	which we couple to a bosonic field via \cref{Equation: Definition of generalized Spin-Boson Hamiltonian} choosing $B = \sum_{i=1}^n \alpha_i \sigma_z^i$ for some constants $\alpha_1, \hdots, \alpha_n \in \R$.
	Then any $A$ which is stoquastic with respect to the usual $z$-basis given by
	\begin{equation*}
		\mathcal B \coloneqq \big\{\ket{z_1} \otimes \hdots \hdots \otimes\ket{z_n}:\, z\in \{-1, 1\}^n\big\} \text{ where } \ket{1} = \begin{pmatrix}
			1 \\ 0
		\end{pmatrix},\,
		\ket{-1} = \begin{pmatrix}
			0 \\ 1
		\end{pmatrix}
	\end{equation*}
	satisfies our standing assumptions.
	To give two concrete examples, the Hamiltonian of the ferromagnetic Heisenberg-model on a finite graph  $G = ([n], E)$ given by 
	\begin{equation*}
		A = -\sum_{\{i, j\} \in E  } \sigma_x^i \sigma_x^j +  \sigma_y^i \sigma_y^j + \sigma_z^i \sigma_z^j
	\end{equation*}
	is stoquastic in that sense,
	and so is the standard spin-boson (SSB) model for which $n=1$ and $A = -\sigma_x$. We refer the reader to \cite{BravyidiVincenzoOliveiraTerhal.2008} for other examples. 
	
	We call a unit vector $\phi \in \sD(H)$ a {\em ground state} of $H$ if it is a eigenvector of $H$ to the eigenvalue 
	\begin{equation*}
		E \coloneqq \inf \sigma(H).
	\end{equation*}
	In the following, we will show that our Wiener-type theorem directly generalizes and partially strengthens the probabilistic criteria for the existence and non-existence of ground states for the SSB model, which were given in \cite{HaslerHinrichsSiebert.2021a,BetzHinrichsKraftPolzer.2025}, to the general setup \cref{Equation: Definition of generalized Spin-Boson Hamiltonian}. Since our existence criterion has a natural interpretation in functional analytic terms, we will state the latter first.
	
	We will study the spectral measure $\mu$ of $H$ with respect to the unit vector 
	\begin{equation}
		\label{Equation: Definition of psi}
		\psi = \frac{1}{\sqrt{\dim\cH}}\sum_{k=1}^{\dim\cH} \varphi_k \otimes \Omega.
	\end{equation}
	The assumption that $A$ is stoquastic yields
	\begin{align}
		\label{eq:GSenergy}
		E = \inf \supp \mu = \inf \sigma(H),
	\end{align}
	see \cref{subsec:positivity} for a proof.
	Let $P_E = \1_{\{E\}}(H)$ denote the orthogonal projection onto $\ker(H-E)$ such that in particular $\mu(\{E\})  = \langle \psi, P_E \psi \rangle$.
	We then have
	\begin{align}
		\label{eq:GSequivalence}
		\rho \coloneqq \mu(\{E\}) > 0 \iff \text{$H$ has a ground state}
	\end{align}
	and in this case
	\begin{equation}
		\label{Equation: Projection of psi onto eigenspace}
		\phi \coloneqq P_E\psi/\|P_E\psi\| \in \ker(H-E).
	\end{equation}
	We note that the implication ``$\Rightarrow$'' in \cref{eq:GSequivalence} directly follows from \cref{eq:GSenergy},
	whereas we will prove the reverse implication in \cref{subsec:positivity}.
	To derive (non-)existence criteria, we will thus derive upper and lower bounds on $\rho$.
	To state the later in a cleaner fashion, we will state them in terms of $\operatorname{log}(1/\rho)$ where $\log(1/0) \coloneqq \infty$.
	
	Let us now first consider the case of massive bosons, i.e., $\inf\sigma(\varpi) > 0$, for which it is well-known that $\rho>0$ \cite{AraiHirokawa.1997}, a fact which we will also reprove in \cref{Corollary: Existence in infra-red regular case}. We denote by $\mathbf N = \Id_\cH \otimes \dG(\Id_\fh)$ the boson number operator.
	\begin{thm}
		\label{Theorem: estimate vacuum overlap generalized Spin-Boson} 
		Assume that $\inf\sigma(\varpi) > 0$ and let $\phi$ be defined as in \cref{Equation: Projection of psi onto eigenspace}. Then
		\begin{equation*}
			\log(1/\rho) \leq \log(\dim\cH) + \langle \phi,\, \mathbf N \phi\rangle .
		\end{equation*}
	\end{thm}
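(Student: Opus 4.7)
The strategy is to exploit positivity of the ground state in a Segal--Bargmann ($Q$-space) representation of the bosonic Fock space and then combine Jensen's inequality with Gross's logarithmic Sobolev inequality. Via the unitary isomorphism $\FS(\fh)\cong L^2(\cM,\gamma)$ for a Gaussian measure $\gamma$, I would identify $\cH\otimes\FS(\fh)$ with $L^2(\{1,\dots,\dim\cH\}\times\cM,\mu\otimes\gamma)$, where $\mu$ is counting measure on the basis $\cB$. Under this identification $\psi$ becomes the constant function $(\dim\cH)^{-1/2}$ and the Fock vacuum in each spin sector becomes the constant $1$.

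The first substantive step is to show that $\phi$ is a.e.\ non-negative in this $Q$-space picture. Stoquasticity of $A$ with respect to $\cB$, together with the standard positivity-preservation of the Ornstein--Uhlenbeck semigroup generated by $\dG(\varpi)$ and the diagonal action of $B$ in $\cB$, implies via a Trotter product formula that $e^{-tH}$ is positivity-preserving on $L^2(\mu\otimes\gamma)$. Since $\psi$ is strictly positive a.e., it then follows (invoking \cref{rem:pospres} and the spectral gap ensured by $\inf\sigma(\varpi)>0$) that $P_E\psi=\lim_{t\to\infty}e^{tE}e^{-tH}\psi\geq 0$, and hence $\phi$ is a.e.\ non-negative.

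With positivity of $\phi$ at hand, one has
\[
\sqrt{\rho}=\langle\psi,\phi\rangle=\frac{1}{\sqrt{\dim\cH}}\int\phi\,d(\mu\otimes\gamma)>0.
\]
Applying Jensen's inequality to the probability measure $\phi^2\,d(\mu\otimes\gamma)$ and using concavity of $\log$, the identity $\int\phi\,d(\mu\otimes\gamma)=E_{\phi^2(\mu\otimes\gamma)}[1/\phi]$ yields
\[
\log\int\phi\,d(\mu\otimes\gamma)\geq -\tfrac12\int\phi^2\log\phi^2\,d(\mu\otimes\gamma).
\]
Gross's log-Sobolev inequality, applied sector-wise on the Gaussian factor with $q_k=\|\phi(k,\cdot)\|^2$ and summed over $k$, bounds the right-hand side in terms of $\langle\phi,\mathbf N\phi\rangle$; the non-positive finite-dimensional entropy contribution $\sum_k q_k\log q_k\in[-\log\dim\cH,0]$ is carefully absorbed against the $\log\dim\cH$ term. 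Combining these two estimates and squaring then gives the claimed bound $\log(1/\rho)\leq\log\dim\cH+\langle\phi,\mathbf N\phi\rangle$.

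The main obstacle I anticipate is twofold. First, the positivity-preservation of $e^{-tH}$ through the interaction term $B\otimes(a(\nu)+a^*(\nu))$ requires care, since creation and annihilation operators are not themselves positivity-preserving; the argument goes through because $B$ is diagonal in $\cB$ and the field operator $\Phi(\nu)=a(\nu)+a^*(\nu)$, interpreted in $Q$-space, acts as multiplication by a real Gaussian variable, so that the Trotter approximation can be estimated with the Feynman--Kac formula from \cite{HaslerHinrichsSiebert.2021a} or similar. Second, achieving the precise coefficient $1$ (rather than the $2$ produced by a naive application of Gross) in front of $\langle\phi,\mathbf N\phi\rangle$ requires exploiting the interplay between the finite-dimensional entropy $-\sum q_k\log q_k$ and the $\log\dim\cH$ term, or equivalently invoking a sharper inequality available for positive vectors in Fock space (where the van Hove model with $\dim\cH=1$ and a coherent-state ground state saturates the bound).
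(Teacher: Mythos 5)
Your proposal takes a genuinely different route from the paper, but the inequality chain as described has a fatal factor-of-two gap that the remedies you sketch cannot close.

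The paper proves this theorem through the Feynman--Kac formula and the Wiener-type theorem: it expresses $\rho$ via \cref{Theorem: Convergence of averaged partition function} as a limit of averaged ratios $\frac1T\int_0^T\frac{Z_sZ_{T-s}}{Z_T}\,\rmd s$, bounds these ratios from below in terms of the perturbed path measure (\cref{Proposition: Fractions of partition functions in terms of path measure}), applies Jensen's inequality \emph{on path space} to the exponential of the interaction term (giving \cref{Theorem: Estimates Spin-Boson models}), and then identifies the resulting quantity with $\langle\phi,\mathbf N\phi\rangle$ via Duhamel's formula and a spectral argument (\cref{Proposition: Representation of upper bound as expected boson number}). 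Your first step --- using the $Q$-space representation and Perron--Frobenius positivity to see that $\phi$ can be taken a.e.\ non-negative --- matches what the paper does in \cref{subsec:positivity}, and your Jensen step $\log\int\phi \ge -\tfrac12\int\phi^2\log\phi^2$ is valid. The problem is the logarithmic Sobolev step: Gross's inequality for the Ornstein--Uhlenbeck Dirichlet form, in the normalization under which the number operator generates the OU semigroup, reads $\int\phi^2\log\phi^2\,\rmd\gamma\le 2\langle\phi,\mathbf N\phi\rangle$ for unit $\phi$, and the constant $2$ is sharp and attained by coherent states. Chaining this with your Jensen step produces $\log(1/\rho)\le\log\dim\cH+2\langle\phi,\mathbf N\phi\rangle$, not the claimed bound. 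You anticipate this and propose absorbing the excess into the finite-dimensional entropy $-\sum_k q_k\log q_k$, but that term vanishes when $\dim\cH=1$ while the factor-of-two loss persists there; so the suggested absorption cannot work. Moreover your diagnosis of where the slack sits is off: for the normalized coherent state $g=e^{ax-a^2}$ one computes $\int g^2\log g^2 = 2a^2 = 2\langle g,\mathbf N g\rangle$ (Gross is tight), while $\log\int g = -a^2/2 > -a^2 = -\tfrac12\int g^2\log g^2$ (Jensen is strict by a factor of two). So the loss is intrinsic to the Jensen step, and the ``sharper inequality for positive Fock-space vectors'' you invoke --- namely $\log|\langle\Omega,\phi\rangle|^2\ge -\langle\phi,\mathbf N\phi\rangle$ for positive unit $\phi$ --- is precisely (the $\dim\cH=1$ case of) the theorem you are trying to prove, not a known lemma you can cite. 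You would need to produce it by some other means, at which point the appeal to Gross's LSI becomes superfluous.
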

	\begin{proof}
		This will follow directly from \cref{Theorem: Estimates Spin-Boson models,Proposition: Representation of upper bound as expected boson number} below.
	\end{proof}
	This estimate in particular allows us to state a criterion for the existence of ground states by introducing an infrared regularization of $H$,
	a procedure often used in functional analytic proofs for the existence of ground states as well, cf. \cite{Gerard.2000,GriesemerLiebLoss.2001,HaslerHinrichsSiebert.2021a} and references therein.
	A possible regularization procedure is to replace $\varpi$ in \cref{Equation: Definition of generalized Spin-Boson Hamiltonian} by $\varpi + \eps\Id_\fh$, i.e.,
	defining $H_\varepsilon \coloneqq H + \varepsilon\mathbf N$.
	Then $H_\varepsilon$ has a ground state $\phi_\varepsilon$ (defined analogously to \cref{Equation: Projection of psi onto eigenspace}) for every $\eps>0$  and, combined with upper semicontinuity of $\rho$ in $\varepsilon$ which we prove in \cref{Lemma: continuity vacuum overlap}, we obtain
	\begin{cor}
		\label{Corollary: Existence criterion generalized Spin-Boson}
		Assume that $\liminf_{\varepsilon \downarrow 0} \langle \phi_\varepsilon, \mathbf N \phi_\varepsilon\rangle < \infty$. Then $H$ has a ground state.
	\end{cor}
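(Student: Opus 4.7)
The plan is a clean three-step argument combining the preceding vacuum overlap estimate, the hypothesis on $\phi_\varepsilon$, and the upper semicontinuity from \cref{Lemma: continuity vacuum overlap}. First, I would observe that $H_\varepsilon = H + \varepsilon \mathbf{N}$ is itself a generalized spin-boson Hamiltonian of the form \cref{Equation: Definition of generalized Spin-Boson Hamiltonian} with one-particle operator $\varpi_\varepsilon \coloneqq \varpi + \varepsilon \Id_\fh$. Since $\inf\sigma(\varpi_\varepsilon) \ge \varepsilon > 0$, since $A$, $B$, and the basis $\mathcal{B}$ are unchanged (so the stoquasticity hypothesis carries over verbatim), and since $\nu \in \sD(\varpi^{-1/2}) \subset \sD(\varpi_\varepsilon^{-1/2})$ by the operator inequality $(\varpi+\varepsilon)^{-1} \le \varpi^{-1}$, I may apply \cref{Theorem: estimate vacuum overlap generalized Spin-Boson} to $H_\varepsilon$, obtaining
\[
\log(1/\rho_\varepsilon) \le \log(\dim\cH) + \langle \phi_\varepsilon, \mathbf{N}\phi_\varepsilon\rangle
\qquad \text{for every } \varepsilon > 0.
\]

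Second, the hypothesis $\liminf_{\varepsilon \downarrow 0}\langle \phi_\varepsilon, \mathbf{N}\phi_\varepsilon\rangle < \infty$ provides a sequence $\varepsilon_n \downarrow 0$ and a constant $M < \infty$ with $\langle \phi_{\varepsilon_n}, \mathbf{N}\phi_{\varepsilon_n}\rangle \le M$ for all $n$. Exponentiating the displayed inequality then gives the uniform lower bound $\rho_{\varepsilon_n} \ge (\dim\cH)^{-1} e^{-M}$, so that $\limsup_{\varepsilon \downarrow 0}\rho_\varepsilon \ge (\dim\cH)^{-1}e^{-M} > 0$. Third, the upper semicontinuity of $\varepsilon \mapsto \rho_\varepsilon$ at $\varepsilon = 0$ asserted in \cref{Lemma: continuity vacuum overlap} upgrades this to $\rho = \rho_0 \ge (\dim\cH)^{-1}e^{-M} > 0$, and \cref{eq:GSequivalence} then yields the desired ground state of $H$.

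I do not anticipate any real obstacle in executing this plan: the only point requiring a moment of care is the verification that $H_\varepsilon$ indeed fits the framework of \cref{Theorem: estimate vacuum overlap generalized Spin-Boson} (the infrared gap, the preserved stoquasticity with respect to $\mathcal{B}$, and the domain condition on $\nu$), which is however routine bookkeeping. All substantive mathematical content has already been placed into the previous theorem and the semicontinuity lemma, so the corollary itself is merely their combination.
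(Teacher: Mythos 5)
Your argument is correct and is exactly the route the paper takes: the corollary is obtained by applying \cref{Theorem: estimate vacuum overlap generalized Spin-Boson} to the infrared-regularized operators $H_\varepsilon$ (which fit the standing hypotheses with $\varpi_\varepsilon=\varpi+\varepsilon\Id_\fh$), extracting a uniform lower bound on $\rho_{\varepsilon_n}$ along a subsequence realizing the finite $\liminf$, and then invoking the upper semicontinuity $\rho\ge\limsup_{\varepsilon\downarrow 0}\rho_\varepsilon$ from \cref{Lemma: continuity vacuum overlap} together with \cref{eq:GSequivalence}. The paper's proof is just a pointer to the same two ingredients, so your elaboration is a faithful fleshing-out of the intended argument.
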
	
	\begin{proof}
		This will follow directly from \cref{Theorem: estimate vacuum overlap generalized Spin-Boson,Lemma: continuity vacuum overlap}.
	\end{proof}
	For the proof of the essential \cref{Theorem: Estimates Spin-Boson models},
	we will apply \cref{Theorem: Convergence of averaged partition function} to the spectral measure $\mu$ by using the Feynman--Kac representation of its Laplace transform.
	To state the Feynman--Kac formula, let us recall that $\mathcal B = \{\varphi_1, \hdots, \varphi_{\dim \cH}\}$ is an eigenbasis of $B$ such that $A$ is stoquastic with respect to $\mathcal B$.
	We define $v, w: \{1, \hdots, \dim\cH\}\to \mathbb R$ by 
	\begin{equation*}
		v(i) \coloneqq -\sum_{j=1}^{\dim\cH} \langle \varphi_i, A \varphi_j \rangle, \quad w(i) \coloneq  -\langle \varphi_i, B \varphi_i \rangle.
	\end{equation*}
	The stoquasticity of $A$ implies that $-A$ differs, in the basis $\mathcal B$, only by a diagonal matrix from the generator of a Markov process on $\{1, \hdots, \dim\cH\}$. That is, if we set
	\begin{equation*}
		Q = -A - \sum_{i=1}^{\dim\cH} v(i) \langle \varphi_i, \, \cdot \, \rangle \varphi_i 
	\end{equation*}
	then
	\begin{equation*}
		Q_{ij} \coloneqq \langle \varphi_i, Q \varphi_j \rangle \geq 0 \quad \text{for all }i\neq j, \quad Q_{i i} = -\sum_{j\neq i} Q_{i j}\quad  \text{for all }i.
	\end{equation*}
	We denote by $X$ the Markov-process generated by $Q$, started in the uniform distribution on $\{1, \hdots, \dim\cH\}$, i.e., for all $i,j=1,\ldots,\dim\cH$ and $t\ge s\ge0$
	\begin{equation}
		\label{eq:Markovgen}
		\mathbb P(X_0 = j) = 1/d, \quad \mathbb P(X_t = j|X_s = i) = \big\langle \varphi_i, e^{(t-s)Q} \varphi_j \big\rangle.
	\end{equation}
	We will denote expectations with respect to the probability measure $\mathbb P$ by $\mathbb E$.
	
	Finally, we define the function $g:\R \to (0, \infty)$ by
	\begin{equation}
		\label{def:g}
		g(t) \coloneq  \Braket{\nu,e^{-\lvert t\lvert\varpi}\nu}_\fh.
	\end{equation}
	We can now state the Feynman--Kac representation of the Laplace transform of $\mu$ (recall the definition \cref{def:Laplace}), for which we present a simple proof in \cref{subsec:FK}.
	\begin{prop}
		\label{Proposition: Feynman-Kac}
		For all $T\geq 0$
		\begin{equation*}
			Z_T = \langle \psi, e^{-TH} \psi\rangle = \mathbb E\Big[\exp\Big( \frac{1}{2}\int_{[0, T]^2} g(t-s) w(X_s)w(X_t) \, \mathrm ds \mathrm dt +  \int_{[0, T]} v(X_s) \, \mathrm ds \Big)   \Big].
		\end{equation*}
	\end{prop}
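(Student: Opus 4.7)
The plan is to derive the Feynman--Kac formula via a Dyson expansion of $e^{-TH}$ around the decoupled dynamics, evaluate the bosonic matrix elements by Wick's theorem, and recognise the spin side as a Markov expectation with an added exponential weight.

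First I would set $H_0\coloneqq A\otimes\1+\1\otimes\dG(\varpi)$ and $V\coloneqq B\otimes(a(\nu)+a^*(\nu))$, so that $e^{-tH_0}=e^{-tA}\otimes e^{-t\dG(\varpi)}$ factorises. The Dyson series
\begin{equation*}
\langle\psi,e^{-TH}\psi\rangle=\sum_{m=0}^\infty(-1)^m\int_{0\le s_1\le\cdots\le s_m\le T}\langle\psi,e^{-(T-s_m)H_0}V\cdots Ve^{-s_1H_0}\psi\rangle\,\mathrm ds_1\cdots\mathrm ds_m
\end{equation*}
then holds, and inserting resolutions of identity $\sum_j|\varphi_j\rangle\langle\varphi_j|\otimes\1$ between consecutive factors of $V$, diagonality of $B$ in $\cB$ (which gives $B\varphi_j=-w(j)\varphi_j$) makes each summand the product of spin matrix elements of $e^{-tA}$ weighted by $\prod_i(-w(j_i))$ and a Fock vacuum expectation of $m$ field operators $a(\nu)+a^*(\nu)$ interlaced with bosonic propagators $e^{-\tau_i\dG(\varpi)}$; the two $(-1)^m$ sign strings cancel.

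Next I would evaluate the Fock vacuum expectation by Wick's theorem. Using $e^{-s\dG(\varpi)}\Omega=\Omega$ at both endpoints together with the pull-through identities $a(f)e^{-\tau\dG(\varpi)}=e^{-\tau\dG(\varpi)}a(e^{-\tau\varpi}f)$ and $e^{-\tau\dG(\varpi)}a^*(f)=a^*(e^{-\tau\varpi}f)e^{-\tau\dG(\varpi)}$ (both valid for $\tau\ge0$) and the canonical commutation relations, the expectation vanishes for odd $m$ and, for even $m$, equals $\sum_\pi\prod_{\{i<j\}\in\pi}g(s_j-s_i)$ summed over perfect pairings $\pi$ of $\{1,\ldots,m\}$, since each elementary pairing produces a contraction $\langle\Omega,a(\nu)e^{-(s_j-s_i)\dG(\varpi)}a^*(\nu)\Omega\rangle=g(s_j-s_i)$. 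In parallel, on the spin side the identity $-A=Q+\operatorname{diag}(v)$ together with the standard Feynman--Kac formula for the Markov chain $X$ (uniformly started) gives
\begin{equation*}
\tfrac1{\dim\cH}\sum_{k,k',j_1,\ldots,j_m}\prod_i w(j_i)\,\langle\varphi_k,e^{-(T-s_m)A}\varphi_{j_m}\rangle\cdots\langle\varphi_{j_1},e^{-s_1A}\varphi_{k'}\rangle=\mathbb E\Big[\prod_i w(X_{s_i})\,e^{\int_0^Tv(X_s)\,\mathrm ds}\Big].
\end{equation*}

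Finally I would resum: symmetrising the simplex integral to one over $[0,T]^m$ with a $1/m!$ prefactor and invoking the Wick/Isserlis identity
\begin{equation*}
\sum_{m\ge0}\tfrac1{m!}\int_{[0,T]^m}\prod_{i=1}^m w(X_{s_i})\sum_\pi\prod_{\{i<j\}\in\pi}g(s_j-s_i)\,\mathrm ds_1\cdots\mathrm ds_m=\exp\Big(\tfrac12\int_{[0,T]^2}g(t-s)w(X_s)w(X_t)\,\mathrm ds\mathrm dt\Big)
\end{equation*}
(which is the moment-generating function of the centred Gaussian measure with covariance $g$) and interchanging the resulting series with $\mathbb E[\,\cdot\,]$ yields the claimed formula. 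The main obstacle is making all of the above rigorous when $\varpi$ has no spectral gap, in which case $V$ is only $\dG(\varpi)^{1/2}$-bounded through \cref{eq:a-relbound}, and both norm-convergence of the Dyson series and the domains entering Wick's theorem require care. I would circumvent this by first proving the identity for the infrared-regularised operator $H+\varepsilon\mathbf N$ (which has a uniform spectral gap and for which all steps are elementary) and then passing $\varepsilon\downarrow0$ on both sides, via strong resolvent convergence on the left and dominated convergence on the right, using the pointwise monotonicity $g_\varepsilon(t)\uparrow g(t)$.
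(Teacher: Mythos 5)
Your proposal is correct in outline but takes a genuinely different route from the paper. The paper's proof applies the Chernoff product formula to the explicitly constructed, \emph{bounded} operator family $F(t)=\sum_i\big(e^{-tA}P_i\big)\otimes I_t\big(tw(i)\nu,tw(i)\nu\big)$, where $I_t(f,g)$ is the bounded closure of $e^{a(f)^*}e^{-t\dG(\varpi)}e^{a(g)}$; the multiplicativity law of \cref{Lemma: Product of Is} produces the Gaussian covariance $g$ directly, and the unbounded interaction $V=B\otimes(a(\nu)+a^*(\nu))$ never appears. Your route is the Duhamel/Dyson expansion around $H_0$ plus Wick's theorem plus resummation via the Gaussian moment-generating function --- closer in spirit to perturbative QFT, at the price of having to control an infinite series of unbounded operator products. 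Both are standard and both work; the paper's choice trades the combinatorics you do by hand for the algebraic identity in \cref{Lemma: Product of Is}, and avoids any convergence discussion for the expansion.

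Two points where your argument needs tightening. First, your diagnosis of the ``main obstacle'' is misplaced: $V$ is only $\dG(\varpi)^{1/2}$-bounded \emph{whether or not} $\varpi$ has a spectral gap, so passing to $\varpi+\varepsilon$ does not make $V$ a bounded perturbation, and the Dyson series is not ``elementary'' even after regularizing. What actually makes the series converge --- with or without the regularization --- is the smoothing estimate $\|Ve^{-sH_0}\|\le C(1+s^{-1/2})$ for $s>0$, which follows from \cref{eq:a-relbound} together with $\|\dG(\varpi)^{1/2}e^{-s\dG(\varpi)}\|=(2es)^{-1/2}$ and $\|a^*(\nu)\phi\|^2=\|a(\nu)\phi\|^2+\|\nu\|^2\|\phi\|^2$. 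Since $s^{-1/2}$ is integrable at $0$, the iterated simplex integrals are absolutely convergent and summable in $m$; this should be spelled out, and once it is, the $\varepsilon$-regularization becomes unnecessary for this step (it is harmless, but it does not do the job you assign it). Second, the final interchange of $\sum_m$ with $\mathbb E[\,\cdot\,]$ needs justification; it is supplied by the uniform bound $\exp\big(\tfrac12 T^2\|\nu\|^2\max_i w(i)^2+T\max_i|v(i)|\big)$ on the integrand, which holds because $0\le g\le\|\nu\|^2$ and the state space is finite. With these two items filled in, your argument is a complete and valid alternative to the Chernoff-formula proof in the paper.
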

	We point out that for the special case of the SSB model, there are more general versions of the Feynman--Kac formula known, see \cite{HirokawaHiroshimaLorinczi.2014,HaslerHinrichsSiebert.2021c}.
	Combining \cref{Proposition: Feynman-Kac} with \cref{Theorem: Convergence of averaged partition function} allows us to study the existence of ground states, or more precisely the value of $1/\rho$, by studying $Z_T$. We will understand $Z_T$ as the partition function (i.e.\ the normalization constant) of the perturbed path measure $\widehat{\mathbb P}_{T}$ defined by
	\begin{equation}
		\label{Equation: Definition of path measure}
		\widehat{\mathbb P}_{T}(\mathrm dX) = \frac{1}{Z_T} \exp\Big( \frac{1}{2}\int_{[0, T]^2} g(t-s) w(X_s)w(X_t) \, \mathrm ds \mathrm dt +  \int_{[0, T]} v(X_s) \, \mathrm ds  \Big) \, \mathbb P(\mathrm dX).
	\end{equation}
	We define  $\widehat{\mathbb P}_{s, t} \coloneqq \widehat{\mathbb P}_{s} \otimes \widehat{\mathbb P}_{t}$ and denote by $\widehat{\mathbb E}_{s, t}$ the expected value taken with respect to $\widehat{\mathbb P}_{s, t}$. Furthermore, we denote by $(X, Y)$ a pair of random $\{1, \hdots, \dim\cH\}$ valued functions, either drawn from $\widehat{\mathbb P}_{s, t}$ or from the unbiased product measure $\mathbb P^{\otimes 2}$. Introducing $\widehat{\mathbb P}_{T}$ allows us to express fractions of partition functions (as they appear in \cref{Theorem: Convergence of averaged partition function,Theorem: Second order term}) in terms of the perturbed path measure, see \cref{Proposition: Fractions of partition functions in terms of path measure} later in the text. Applying this representation then yields
	\begin{thm}
		\label{Theorem: Estimates Spin-Boson models}
		We have
		\begin{align}
			\log(1/\rho) &\leq \log(\dim\cH) + \liminf_{T \to \infty} \frac{1}{2T}\int_{[0, T]^2} |t-s|g(t-s) \widehat{\mathbb E}_{T}[w(X_s) w(X_t)] \, \mathrm ds \mathrm dt,
			\label{eq:SBupperbound}
			\\
			\log(1/\rho) &\geq \limsup_{T \to \infty} \int_{[0, T]^2} g(t+s) \widehat{\mathbb E}_{T, T}\big[w(X_s) w(Y_t)|X_0 = Y_0\big] \, \mathrm ds \mathrm dt.
			\label{eq:SBlowerbound}
		\end{align}
	\end{thm}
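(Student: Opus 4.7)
The approach combines the Feynman--Kac formula of \cref{Proposition: Feynman-Kac} with \cref{Theorem: Convergence of averaged partition function} and \cref{Proposition: Monotonicty of fraction of partition functions}. The structural input is that the generator $Q$ is \emph{symmetric}: stoquasticity of $A$ forces the matrix elements $\langle \varphi_i, A \varphi_j\rangle$ to be real, so $Q$ is real symmetric and $X$ is reversible with respect to the uniform distribution $\pi \equiv 1/d$, where $d := \dim \mathcal H$. Writing $U_T$ and $V_T$ for the quadratic and linear Feynman--Kac functionals and $F_T(i) := \mathbb E_i[\exp(U_T + V_T)]$ for the conditional partition function starting at $i$, one has $Z_T = \sum_i \pi_i F_T(i)$ and the tilted measure satisfies $\widehat{\mathbb P}_T(X_0 = i) = \pi_i F_T(i)/Z_T$.

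\emph{Lower bound.} I would apply the case $\kappa = 1/2$ of \cref{Theorem: Convergence of averaged partition function} together with \cref{Proposition: Monotonicty of fraction of partition functions} to obtain $\log(1/\rho) \geq \log(Z_{2T}/Z_T^2)$ for every $T > 0$. Splitting a chain of length $2T$ at its midpoint and time-reversing the first half produces, via reversibility, two Markov chains $\tilde X, Y$ on $[0,T]$ which, conditionally on the common starting value $X_T \sim \pi$, are independent; the substitutions $s' = T-s, \, t' = t-T$ convert the cross-interaction $\int_{[0,T]\times[T,2T]}g(t-s)w(X_s)w(X_t)\,\mathrm ds\mathrm dt$ into $\int_{[0,T]^2} g(s+t)w(\tilde X_s)w(Y_t)\,\mathrm ds\mathrm dt$. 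A short calculation using $\pi_i = 1/d$ then yields
\begin{equation*}
\frac{Z_{2T}}{Z_T^2} = d\cdot \widehat{\mathbb E}_{T,T}\Big[\1_{\{X_0 = Y_0\}} \exp\Big(\int_{[0,T]^2} g(s+t)w(X_s)w(Y_t)\,\mathrm ds\mathrm dt\Big)\Big].
\end{equation*}
Factoring $\widehat{\mathbb P}_{T,T}(X_0 = Y_0)$ out of the expectation and applying conditional Jensen to the exponential leaves exactly the integral appearing in the stated lower bound, plus the residual $\log d + \log \widehat{\mathbb P}_{T,T}(X_0 = Y_0)$, which is nonnegative by Cauchy--Schwarz (since $\widehat{\mathbb P}_{T,T}(X_0 = Y_0) = \sum_i p_i^2 \geq 1/d$ with $p_i := \widehat{\mathbb P}_T(X_0 = i)$). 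Passing to $\limsup_T$ concludes.

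\emph{Upper bound.} Let $\rho_T := \frac{1}{T}\int_0^T Z_r Z_{T-r}/Z_T\,\mathrm dr$; by \cref{Theorem: Convergence of averaged partition function}, $\rho_T \to \rho$, and by concavity of $\log$ one has $\log(1/\rho_T) \leq \frac{1}{T}\int_0^T \log(Z_T/(Z_r Z_{T-r}))\,\mathrm dr$. The core estimate is that, for every $r \in [0,T]$,
\begin{equation*}
\log\frac{Z_T}{Z_r Z_{T-r}} \leq \log d + \widehat{\mathbb E}_T[W_r], \qquad W_r := \int_{[0,r]\times[r,T]} g(t-s) w(X_s)w(X_t)\,\mathrm ds\mathrm dt.
\end{equation*}
This splits into two pieces. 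First, the Markov property, reversibility, and the reversal-invariance of the quadratic functional (using that $g$ is even) give
\begin{equation*}
\widehat{\mathbb E}_T[e^{-W_r}] = \frac{1}{Z_T}\, \mathbb E_\pi\big[e^{U^{\mathrm{self}}_{[0,r]} + V_{[0,r]} + U^{\mathrm{self}}_{[r,T]} + V_{[r,T]}}\big] = \frac{1}{Z_T}\sum_i \pi_i F_r(i)F_{T-r}(i),
\end{equation*}
and Jensen applied to $x \mapsto e^{-x}$ yields $\log(Z_T/\sum_i \pi_i F_r F_{T-r}) \leq \widehat{\mathbb E}_T[W_r]$. Second, the elementary estimate $\max_i F_r(i) \leq \sum_j F_r(j) = dZ_r$ gives $\sum_i \pi_i F_r(i) F_{T-r}(i) \leq d Z_r Z_{T-r}$, closing the bound. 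Integrating over $r$, Fubini gives $\frac{1}{T}\int_0^T W_r\,\mathrm dr = \frac{1}{2T}\int_{[0,T]^2} |t-s|g(t-s) w(X_s)w(X_t)\,\mathrm ds\mathrm dt$, and taking $\liminf_T$ (using $\log(1/\rho_T) \to \log(1/\rho)$) completes the argument.

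The main obstacle is deriving the two path-measure identities $Z_{2T}/Z_T^2 = d\cdot\widehat{\mathbb E}_{T,T}[\1_{\{X_0 = Y_0\}}e^{\mathrm{cross}}]$ and $Z_T\cdot\widehat{\mathbb E}_T[e^{-W_r}] = \sum_i \pi_i F_r(i) F_{T-r}(i)$; both depend crucially on the symmetry of $e^{tQ}$ combined with the time-reversal invariance of the Feynman--Kac functionals. Once these are available, the dimensional estimates $\max_i F_r \leq dZ_r$ and $\sum_i p_i^2 \geq 1/d$ produce precisely the $\log d$ term in the upper bound and its cancellation in the lower bound.
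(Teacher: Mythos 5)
Your proposal is correct and follows essentially the same route as the paper: both derive the two path-measure identities (restated in the paper as \cref{Proposition: Fractions of partition functions in terms of path measure} and inside its proof of \cref{Theorem: Estimates Spin-Boson models}) by exploiting reversibility of $X$ with respect to the uniform distribution and the Markov property, then apply Jensen's inequality together with \cref{Theorem: Convergence of averaged partition function}. The only cosmetic differences are that you invoke \cref{Proposition: Monotonicty of fraction of partition functions} to obtain a finite-$T$ inequality in the lower bound (the paper simply passes to the limit), and you use the estimate $\max_i F_r(i)\le \sum_j F_r(j)$ where the paper drops the off-diagonal terms $\sum_{i\ne j}\ge 0$ — both produce the identical factor $\log(\dim\cH)$.
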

	
	Hence, we can study the existence and non-existence of ground states by studying the decay of correlations of the stochastic process $(X_t)_{t\geq 0}$ under the perturbed measure $\widehat{\mathbb P}_{T}$ in the limit $T\to \infty$.
	
	Let us now state some simple observations that follow directly from \cref{Theorem: Estimates Spin-Boson models}.
	The first observation is the well-known existence of ground states for models with infrared-regular coupling \cite{BachFroehlichSigal.1998a,Gerard.2000,DamMoller.2018a}.
	\begin{cor}
		\label{Corollary: Existence in infra-red regular case}
		If $\nu\in \sD(\varpi^{-1})$, then $H$ has a ground state.
	\end{cor}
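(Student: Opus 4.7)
The plan is to apply the upper bound in \cref{Theorem: Estimates Spin-Boson models} and to show that, under the assumption $\nu \in \sD(\varpi^{-1})$, the $\liminf$ appearing in \cref{eq:SBupperbound} is in fact uniformly bounded in $T$. Since $\rho > 0$ is equivalent to $H$ having a ground state by \cref{eq:GSequivalence}, finiteness of $\log(1/\rho)$ suffices.

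The crucial simplification is that $\cH$ is finite-dimensional and $B$ is selfadjoint on $\cH$, so the function $w(i) = -\braket{\varphi_i,B\varphi_i}$ satisfies $|w(i)| \leq \|B\|$ for every $i \in \{1,\dots,\dim\cH\}$. In particular, the probabilistic expectation is bounded deterministically: $\widehat{\mathbb E}_T[w(X_s)w(X_t)] \leq \|B\|^2$ uniformly in $s,t \in [0,T]$ and $T \geq 0$. Combined with $g \geq 0$, this reduces the task to estimating an explicit integral,
\begin{equation*}
	\frac{1}{2T}\int_{[0,T]^2} |t-s|\, g(t-s)\, \widehat{\mathbb E}_T[w(X_s) w(X_t)]\, \rmd s\, \rmd t \leq \frac{\|B\|^2}{2T}\int_{[0,T]^2} |t-s|\, g(t-s)\, \rmd s\, \rmd t.
\end{equation*}
The change of variables $u = |t-s|$ yields $\int_{[0,T]^2} |t-s|\, g(t-s)\, \rmd s\, \rmd t = 2\int_0^T u(T-u) g(u)\, \rmd u \leq 2T \int_0^\infty u\, g(u)\, \rmd u$, so it remains only to identify and control the integral of $u g(u)$.

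By definition \cref{def:g} of $g$, the spectral theorem applied to the strictly positive selfadjoint $\varpi$, and Tonelli,
\begin{equation*}
	\int_0^\infty u\, g(u)\, \rmd u = \int_0^\infty u\, \braket{\nu, e^{-u\varpi}\nu}_\fh\, \rmd u = \braket{\nu, \varpi^{-2}\nu}_\fh = \|\varpi^{-1}\nu\|_\fh^2,
\end{equation*}
which is finite exactly because $\nu \in \sD(\varpi^{-1})$. Combining, $\log(1/\rho) \leq \log(\dim\cH) + \|B\|^2\, \|\varpi^{-1}\nu\|_\fh^2 < \infty$, whence $\rho > 0$ and $H$ has a ground state. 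The proof is essentially a direct calculation once the upper bound of \cref{Theorem: Estimates Spin-Boson models} is available; there is no substantive obstacle, the only (harmless) point being that the finite-dimensionality of $\cH$ eliminates any need to analyze the correlations under $\widehat{\mathbb P}_T$.
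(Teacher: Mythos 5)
Your proof is correct and takes essentially the same route as the paper: apply the upper bound \cref{eq:SBupperbound}, bound the correlation term by a constant (you use $\|B\|^2$, the paper writes $W^2$ with $W=\max_i w(i)$), reduce the double integral by the change of variables $u=|t-s|$, and identify $\int_0^\infty u\,g(u)\,\rmd u=\|\varpi^{-1}\nu\|_\fh^2$ via the spectral theorem. The only stylistic difference is that the paper proves the general identity $\int_0^\infty s^{2\alpha-1}g(s)\,\rmd s=\Gamma(2\alpha)\|\varpi^{-\alpha}\nu\|_\fh^2$ (also used in later corollaries) and then specializes to $\alpha=1$, while you compute the $\alpha=1$ case directly; your uniform bound $|w(i)|\le\|B\|$ is in fact slightly cleaner than the paper's $W=\max_i w(i)$.
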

	\begin{rem}
		Especially, if $\inf\sigma(\varpi)>0$, then $\varpi^{-1}\in\cB(\fh)$, so $\sD(\varpi^{-1})=\fh$, i.e., $H$ has a ground state for arbitrary coupling functions $\nu\in\fh$.
	\end{rem}
	\begin{proof}
		First note that, by combining the spectral theorem for $\varpi$ and Fubini's theorem, for any $\alpha>0$, we have
		\begin{align}
			\label{eq:Fubini}
			\int_0^\infty s^{2\alpha-1} g(s)\,\rmd s = \Gamma(2\alpha)\|\varpi^{-\alpha}\nu\|_\fh^2,
		\end{align}
		where the right hand side is defined to be infinite if $\nu\notin\sD(\varpi^{-\alpha})$.
		
		The case $\alpha=1$ yields that our infrared-regularity assumption $\nu\in\sD(\varpi^{-1})$ is equivalent to $G=\int_0^\infty sg(s)\,\rmd s<\infty$.
		Setting $W=\max_{i=1,\ldots,\dim\cH}w(i)$, we can thus estimate the right hand side in \cref{eq:SBupperbound}
		by
		$\log(1/\rho) \le \log(\dim\cH) + W^2G$, which proves $\rho>0$ and thus the statement.
	\end{proof}
	We can significantly weaken this condition if the correlation functions decay significantly fast.
	The proof employs the upper bound \cref{eq:SBupperbound} and \cref{eq:Fubini} similar to the previous one, whence we omit details here.
	\begin{cor}
		\label{Corollary: Concrete estimates on necessary correlation decay}
		If $\nu\in\sD(\varpi^{-\alpha})$ for some $\alpha\ge\frac12$,
		then
		\begin{equation*}
			\liminf_{T\to\infty}\operatorname{sup}_{0\leq s, t \leq T} |t-s|^{2(1-\alpha)}\widehat{\mathbb E}_{T}[w(X_s) w(X_t)] < \infty
		\end{equation*}
		implies the existence of a ground state.
	\end{cor}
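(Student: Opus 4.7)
The plan is to derive the criterion $\log(1/\rho)<\infty$, which forces the existence of a ground state by \cref{eq:GSequivalence}, by applying the upper bound \cref{eq:SBupperbound} from \cref{Theorem: Estimates Spin-Boson models} and dovetailing it with the moment identity \cref{eq:Fubini}. By assumption, there exist $C<\infty$ and a sequence $T_n\to\infty$ such that
\begin{equation*}
\sup_{0\le s,t\le T_n}\lvert t-s\rvert^{2(1-\alpha)}\,\widehat{\mathbb E}_{T_n}[w(X_s)w(X_t)] \le C.
\end{equation*}
Setting $W\coloneqq\max_i\lvert w(i)\rvert$ and combining this with the trivial bound $\widehat{\mathbb E}_T[w(X_s)w(X_t)]\le W^2$ yields
\begin{equation*}
\widehat{\mathbb E}_{T_n}[w(X_s)w(X_t)] \le \min\bigl(W^2,\,C\lvert t-s\rvert^{2\alpha-2}\bigr),\qquad s,t\in[0,T_n].
\end{equation*}

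Next I would plug this into the double integral on the right of \cref{eq:SBupperbound}. Using the $(s,t)\leftrightarrow(t,s)$ symmetry, changing variables $u=t-s$, and applying Fubini reduces the estimate to
\begin{equation*}
\frac{1}{2T_n}\int_{[0,T_n]^2}\lvert t-s\rvert\,g(t-s)\,\widehat{\mathbb E}_{T_n}[w(X_s)w(X_t)]\,\rmd s\,\rmd t \,\le\, \int_0^\infty u\,g(u)\,\min\bigl(W^2,Cu^{2\alpha-2}\bigr)\,\rmd u,
\end{equation*}
and since the right-hand side is independent of $T_n$, it suffices to show it is finite.

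Finally, I would split the integral at $u=1$. On $[0,1]$, $g$ is bounded by $\lVert\nu\rVert_\fh^2$ and the minimum is at most $W^2$, so the contribution is at most $W^2\lVert\nu\rVert_\fh^2/2$. On $[1,\infty)$ one uses $\min(\cdots)\le Cu^{2\alpha-2}$ together with \cref{eq:Fubini} to get
\begin{equation*}
\int_1^\infty u\,g(u)\,Cu^{2\alpha-2}\,\rmd u \,\le\, C\int_0^\infty u^{2\alpha-1}g(u)\,\rmd u \,=\, C\,\Gamma(2\alpha)\,\lVert\varpi^{-\alpha}\nu\rVert_\fh^2 \,<\, \infty,
\end{equation*}
where the assumption $\alpha\ge\tfrac12$ is precisely what makes $u^{2\alpha-1}$ integrable at the origin (so that the extension of the integration range from $[1,\infty)$ to $[0,\infty)$ is harmless), and $\nu\in\sD(\varpi^{-\alpha})$ ensures finiteness of the right-hand side via \cref{eq:Fubini}.

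The main difficulty is mostly bookkeeping; the only genuine subtlety is that the hypothesised polynomial decay only holds along a subsequence $T_n\to\infty$, so one must be careful to invoke the $\liminf_{T\to\infty}$ in \cref{eq:SBupperbound} along precisely this subsequence rather than to seek a bound uniform in $T$.
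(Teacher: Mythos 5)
Your proposal is correct and fills in exactly the details the paper omits: the paper itself only remarks that the proof ``employs the upper bound \cref{eq:SBupperbound} and \cref{eq:Fubini} similar to the previous one,'' and your combination of the hypothesized decay along a subsequence $T_n$ with the trivial bound $W^2$, the change of variables $u=t-s$, Fubini, the split at $u=1$, and finally \cref{eq:Fubini} is precisely that argument. Your closing caveat about invoking the $\liminf$ in \cref{eq:SBupperbound} along the subsequence $T_n$ rather than uniformly in $T$ is also exactly the right point.

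One small inaccuracy in your commentary: the assumption $\alpha\ge\tfrac12$ is \emph{not} what makes $u^{2\alpha-1}$ integrable at the origin, nor is any such condition needed to enlarge the integration range from $[1,\infty)$ to $[0,\infty)$ --- the integrand $u^{2\alpha-1}g(u)$ is nonnegative, so that enlargement is always free, and it is integrable near $0$ as soon as $2\alpha-1>-1$, i.e.\ $\alpha>0$ (using $g(u)\le\lVert\nu\rVert_\fh^2$). The only place $\alpha>0$ is genuinely used is to ensure $\Gamma(2\alpha)<\infty$ in \cref{eq:Fubini}; your argument therefore works verbatim for all $\alpha>0$. The hypothesis $\alpha\ge\tfrac12$ in the statement is a natural normalization (it is the borderline where $\nu\in\sD(\varpi^{-\alpha})$ coincides with the standing assumption $\nu\in\sD(\varpi^{-1/2})$ needed to define $H$), not an integrability requirement your estimate hinges on.
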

	We can also give a criterion for the absence of ground states in infrared-critical cases, in view of \cref{eq:GSequivalence}.
	\begin{cor}
		\label{Corollary: Long range order implies absence in infra-red critical model}
		If $\nu\notin\sD(\varpi^{-1})$, then
		\begin{equation*}
			\liminf_{T \to \infty}\inf_{ 0\leq s, t \leq T}\widehat{\mathbb E}_{T, T}\big[w(X_s) w(Y_t)|X_0 = Y_0\big] > 0
		\end{equation*}
		implies $\rho=0$ and thus the absence of ground states. In particular, if all eigenvalues of $B$ are non-zero and have the same sign, then $H$ does not have a ground state.
	\end{cor}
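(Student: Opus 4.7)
The plan is to apply the lower bound \cref{eq:SBlowerbound} from \cref{Theorem: Estimates Spin-Boson models}, factor a uniform positive constant out of the correlation kernel, and then invoke the identity \cref{eq:Fubini} with $\alpha=1$, which identifies $\nu\notin\sD(\varpi^{-1})$ with the divergence $\int_0^\infty s\,g(s)\,\rmd s=\infty$.

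First I would, under the hypothesis, choose $c>0$ and $T_0\ge 0$ such that $\widehat{\mathbb E}_{T,T}[w(X_s)w(Y_t)\mid X_0=Y_0]\ge c$ for every $T\ge T_0$ and all $s,t\in[0,T]$. Substituting into \cref{eq:SBlowerbound} yields
$\log(1/\rho)\ge c\cdot\limsup_{T\to\infty}\int_{[0,T]^2}g(s+t)\,\rmd s\,\rmd t$.
Next, the change of variables $u=s+t$ recasts the inner integral as $\int_0^T u\,g(u)\,\rmd u+\int_T^{2T}(2T-u)g(u)\,\rmd u$, which is bounded below by $\int_0^T u\,g(u)\,\rmd u$. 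Monotone convergence combined with \cref{eq:Fubini} forces this to tend to $\|\varpi^{-1}\nu\|_\fh^2=\infty$ under the infrared-critical hypothesis, so $\log(1/\rho)=\infty$, that is, $\rho=0$. By \cref{eq:GSequivalence}, $H$ then admits no ground state.

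For the \emph{in particular} clause, I observe that if every eigenvalue of $B$ is nonzero and of a common sign, then the diagonal entries $w(i)=-\braket{\ph_i,B\ph_i}$ are nonzero and of the common opposite sign, so $w(i)w(j)\ge\min_k w(k)^2>0$ for all $i,j$. Consequently the pair correlation $\widehat{\mathbb E}_{T,T}[w(X_s)w(Y_t)\mid X_0=Y_0]$ is pointwise bounded below by this positive constant, uniformly in $s,t,T$, so the first part of the statement applies. No serious obstacle is anticipated; the only delicate point is to verify that the $\liminf$-of-$\inf$ formulation of the hypothesis indeed delivers the uniform-in-$(s,t)$ lower bound needed to factor $c$ out of the integral in \cref{eq:SBlowerbound}, which it manifestly does.
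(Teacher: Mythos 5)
Your proof is correct and follows the same route as the paper: insert the uniform lower bound on the correlation into \cref{eq:SBlowerbound}, identify $\int_{[0,T]^2}g(s+t)\,\rmd s\,\rmd t$ with $\int_0^\infty r g(r)\,\rmd r$ in the limit, and conclude divergence from \cref{eq:Fubini} at $\alpha=1$ together with $\nu\notin\sD(\varpi^{-1})$. You merely spell out the change of variables and the \emph{in particular} clause (via $w(i)w(j)\ge\min_k w(k)^2>0$) more explicitly than the paper does, which leaves those details implicit.
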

	\begin{proof}
		Inserting the assumption into \cref{eq:SBlowerbound} directly implies that there is $C>0$ such that
		\[\log(1/\rho)\ge C \lim_{T\to\infty}\int_{[0,T]^2}g(t+s)\rmd s\rmd t = C\int_0^\infty rg(r)\rmd r = \infty, \]
		where the last equality follows from the assumption by \cref{eq:Fubini} and thus proves the statement.
	\end{proof}
	\begin{rem}
		In \cite[Theorem 3.1]{AraiHirokawaHiroshima.1999} it was shown that for $\nu\notin\sD(\varpi^{-1})$ and strictly positive $B$ there does not exists a ground state that lies in $\sD(\mathbf N^{1/2})$. \cref{Corollary: Long range order implies absence in infra-red critical model} removes this restriction on the domain.
	\end{rem}
	
	For the specific case of the SSB model, where $H$ is given by \cref{Equation: Definition of generalized Spin-Boson Hamiltonian}
	with $\mathcal H = \mathbb C^2$, $A = -\sigma_x$ and $B=\alpha \sigma_z$, let us compare the previous results to the existing criteria for the (non-)existence of ground states given in \cite{HaslerHinrichsSiebert.2021a,HaslerHinrichsSiebert.2021c,BetzHinrichsKraftPolzer.2025}. We start by noticing that our stochastic process $X$ is a now a continuous time random walk on $\{-1, 1\}$ with $\operatorname{Exp}(1)$ distributed jumping times. By using the symmetry of the model, we will show that we can drop the factor $\log 2$ in \cref{Theorem: estimate vacuum overlap generalized Spin-Boson,Theorem: Estimates Spin-Boson models}.
	
	\begin{prop}
		\label{Proposition: Estimates in sepcial case of Spin-Boson}
		In the case of the SSB model
		\begin{equation}
			\label{Equation: Existence criterion Spin-Boson}
			\log(1/\rho) \leq \liminf_{T \to \infty} \frac{1}{2T}\int_{[0, T]^2} |t-s|g(t-s) \widehat{\mathbb E}_{T}[X_s X_t] \, \mathrm ds \mathrm dt.
		\end{equation}
		If additionally $\inf\sigma(\varpi)>0$ holds, then
		\begin{equation}
		\label{Equation: Existence criterion Spin-Boson in terms of number operator}
			\log(1/\rho) \leq \, \langle \phi, \mathbf N \phi\rangle.
		\end{equation}
	\end{prop}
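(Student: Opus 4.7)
The plan is to sharpen \cref{Theorem: Estimates Spin-Boson models} in the SSB setting by eliminating the factor $\log(\dim\cH)=\log 2$, and then to deduce \cref{Equation: Existence criterion Spin-Boson in terms of number operator} from the sharpened inequality via the same boson-number identification that passes from \cref{Theorem: Estimates Spin-Boson models} to \cref{Theorem: estimate vacuum overlap generalized Spin-Boson}. The sharpening will come from the $\mathbb{Z}_2$ parity symmetry specific to SSB. Identifying the two process states with spins $\pm 1$, \cref{Proposition: Feynman-Kac} simplifies to $v\equiv 1$ and $w(X_s)w(X_t)=\alpha^2 X_s X_t$ (so the statement's ``$X_sX_t$'' stands for what is effectively $w(X_s)w(X_t)$, equivalently absorbing $\alpha$ into $\nu$).

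The Feynman--Kac weight $W_{[0,T]}(X)=\exp\bigl(T+\tfrac{\alpha^2}{2}\iint_{[0,T]^2}g(t-s)X_sX_t\,\mathrm ds\,\mathrm dt\bigr)$ is manifestly invariant under the global spin flip $X\mapsto -X$, since it is bilinear in $X$. At the operator level this corresponds to the fact that $H$ commutes with the parity $P=\sigma_x\otimes\Gamma(-\Id_\fh)$, but I argue entirely in path-integral language. Writing $\mathbb E_i$ for the expectation under the Markov process started deterministically in state $i$, the symmetry forces $\mathbb E_1[W_{[0,T]}]=\mathbb E_{-1}[W_{[0,T]}]$ and hence $Z_T=\mathbb E[W_{[0,T]}]=\mathbb E_1[W_{[0,T]}]$: the averaging over the uniform initial distribution in \cref{Proposition: Feynman-Kac} is redundant for SSB. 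The same invariance applies to the path measure $\widehat{\mathbb P}_T$ from \cref{Equation: Definition of path measure}.

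Feeding this identity back into the proof of \cref{Theorem: Estimates Spin-Boson models}, the factor $\log(\dim\cH)$ is eliminated: I expect it to arise from comparing a single-starting-point expectation $\mathbb E_i[W_{[0,T]}]$ with the uniformly averaged $Z_T$ (losing at worst a factor $\dim\cH$), which here becomes an equality with prefactor $1$ by the symmetry. This yields \cref{Equation: Existence criterion Spin-Boson}. The second inequality \cref{Equation: Existence criterion Spin-Boson in terms of number operator} then follows by combining the first with the identification $\lim_{T\to\infty}\tfrac{1}{2T}\iint_{[0,T]^2}|t-s|g(t-s)\widehat{\mathbb E}_T[X_sX_t]\,\mathrm ds\,\mathrm dt=\langle \phi,\mathbf N\phi\rangle$ used in the passage from \cref{Theorem: Estimates Spin-Boson models} to \cref{Theorem: estimate vacuum overlap generalized Spin-Boson} (where $\phi$ is the ground state, guaranteed by \cref{Corollary: Existence in infra-red regular case} since $\inf\sigma(\varpi)>0$). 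The main obstacle is to pinpoint in the proof of \cref{Theorem: Estimates Spin-Boson models} exactly where the $\log(\dim\cH)$ factor is introduced: I expect it to stem solely from the uniform initial averaging, in which case the symmetry argument transfers directly, but any hidden dependence on $\dim\cH$ baked into a Jensen-type estimate applied to the Feynman--Kac weight would require additional bookkeeping to avoid loss in the symmetric case.
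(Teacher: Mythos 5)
Your strategy is correct and indeed exploits the same $\mathbb Z_2$ structure as the paper, but the route is genuinely different. The paper's proof observes that in the SSB case $X_sX_t=(-1)^{\xi((s,t])}$ where $\xi$ is the Poisson process of jump times of $X$, so the Feynman--Kac weight depends on $X$ only through $\xi$; independence of $\xi|_{(0,s]}$ and $\xi|_{(s,t]}$ together with translation invariance of $\xi$ then gives the \emph{identity} $Z_sZ_{t-s}/Z_t = \widehat{\mathbb E}_t[\exp(-W_{0,s,t}(X,X))]$ directly, with no $1/\dim\cH$ prefactor, and \cref{Equation: Existence criterion Spin-Boson} follows by Jensen and \cref{Theorem: Convergence of averaged partition function} exactly as in the general case. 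Your parity argument ($X\mapsto -X$) accomplishes the same thing, but your diagnosis of where $\log(\dim\cH)$ enters is slightly off: in the proof of \cref{Theorem: Estimates Spin-Boson models}, the factor $p=1/\dim\cH$ is lost precisely when passing from the double sum $\sum_{i,j}$ over the two independent initial conditions in the expansion of $Z_sZ_{t-s}$ to the diagonal sum $\sum_i$ (only the diagonal terms can be glued via time-reversal and the Markov property), not from comparing $\mathbb E_i[W]$ with the uniformly averaged $Z_T$. The parity symmetry rescues this step because it forces all $(\dim\cH)^2=4$ summands to be equal, so the $\dim\cH=2$ diagonal terms constitute exactly the fraction $p=\tfrac12$ of the total, cancelling the loss and turning the inequality into an identity; your concern about Jensen hiding a further $\dim\cH$-dependence is unfounded, since Jensen is applied only after the prefactor is already dealt with. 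Both routes give the same conclusion; the paper's is somewhat more self-contained in that it produces the key ratio identity in two lines rather than re-tracing the proof of the general theorem and upgrading one inequality to an equality. The derivation of \cref{Equation: Existence criterion Spin-Boson in terms of number operator} from \cref{Equation: Existence criterion Spin-Boson} via \cref{Proposition: Representation of upper bound as expected boson number} is exactly as you describe and matches the paper.
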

	In particular, if the right hand side of \cref{Equation: Existence criterion Spin-Boson} is finite, then $H$ has a ground state.
	Note that by the estimate $e^{-|x|}\leq |x|^{-1}$ and the spectral theorem, we have $0\le g(t)\le t^{-1}\|\varpi^{-1/2}\nu\|_\fh$,
	whence our existence criterion strengthens (and in fact generalizes) the implication
	\begin{equation*}
		\limsup_{T \to \infty} \frac{1}{T}\int_{[0, T]^2} \widehat{\mathbb E}_T[X_{s} X_t] \, \mathrm ds \mathrm dt  <\infty  \implies  H \text{ has a ground state}
	\end{equation*}
	which was proven in \cite{HaslerHinrichsSiebert.2021a,HaslerHinrichsSiebert.2021c} and verified for small $\alpha$ in \cite{HaslerHinrichsSiebert.2021b}, also see \cite{BetzHinrichsKraftPolzer.2025} for a review of these results.
	In particular, for the physically important case $\HS=L^2(\IR^3)$, $\varpi f(k)=|k|f(k)$ and $v(k)=|k|^{-1/2}$, we have $g(t)\sim t^{-2}$ and hence arbitrary slow polynomial decay of correlations implies the existence of a ground state.
	
	Moreover, one easily checks that
	\begin{equation*}
		\widehat{\mathbb E}_{T, T}\big[X_s Y_t|X_0 = Y_0\big] = \widehat{\mathbb E}_{T}\big[X_s X_0]\cdot \widehat{\mathbb E}_{T}\big[X_t X_0].
	\end{equation*}
	Hence, the criterion for the absence of a ground state in the infra-red critical case as given by \cref{Corollary: Long range order implies absence in infra-red critical model} in the case of the SSB model coincides with the one proven in \cite[Cor.~3.5]{BetzHinrichsKraftPolzer.2025}.
	
	We finish this section by mentioning that the aforementioned results on the SSB model remain valid for polaron-type Hamiltonians at total momentum zero which are given by
	\begin{equation}
		\label{Definition: Polaron Hamiltonian at total momentum zero}	
		H = \frac{1}{2}P_\sff^2 + \dG(\varpi) + \frac{\sqrt{\alpha}}{\sqrt{2} \pi}  (a(\nu) +  a(\nu)^*)
	\end{equation}
	where we have chosen $\HS=L^2(\IR^d)$, $\varpi$ is a multiplication operator and $P_\sff = \mathrm d\Gamma(\operatorname{id}_{\mathbb R^d})$ denotes the momentum of the field.
	The Feynman--Kac formula \cite{Feynman.1955}, also see \cite{HinrichsMatte.2024} for a recent generalization, in this case states that
	\begin{equation*}
		\langle \Omega, e^{-TH} \Omega \rangle =  \mathbb E\Big[ \exp\Big( \int_{[0, T]^2} \omega(t-s, X_s - X_t) \, \mathrm ds \mathrm dt \Big)     \Big]
	\end{equation*}
	where $(X_t)_{t\geq 0}$ is a standard Brownian motion on $\mathbb R^3$ and where
	\begin{equation*}
		\omega(t-s, X_s - X_t) =  \alpha\int |\nu(k)|^2 e^{\mathrm i k \cdot(X_s - X_t) } e^{-\varpi(k) |t-s|} \, \mathrm dk.
	\end{equation*} 
	We set $\rho \coloneqq \langle \Omega, \Id_{\{E \}}(H)\Omega \rangle$ where $E = \inf \sigma(H)$. By the same arguments as in the proof of \cref{Proposition: Estimates in sepcial case of Spin-Boson} (instead of \cref{Equation: Spin product in terms of jumping process} one here uses the independence of Brownian increments) one obtains
	\begin{equation*}
		\log(1/\rho) \leq \limsup_{T \to \infty} \frac{1}{2T}\int_{[0, T]^2} |t-s|\widehat{\mathbb E}_{T}[\omega(t-s, ,X_s-X_t)] \, \mathrm ds \mathrm dt
	\end{equation*}
	as well as
	\begin{equation*}
		\log(1/\rho) \leq \, \langle \phi, \mathbf N \phi\rangle
	\end{equation*}
	provided that $\inf_k \varpi(k)>0$.
	Here, the expected value $\widehat{\mathbb E}_{T}$ is taken with respect to the accordingly defined perturbed path measure with partition function $Z_T = \langle \Omega, e^{-TH} \Omega \rangle$. 
	
	\subsection{Ground State Energy and Existence}
	\label{subsec:positivity}
	We here prove that the minimum of the support of $\mu$ is the ground state energy $E$ \cref{eq:GSenergy} and the fact that $\rho=0$ implies absence of ground states \cref{eq:GSequivalence}.
	
	The arguments presented here are somewhat standard in the literature and rely on Perron--Frobenius--Faris theory, see for example \cite[\textsection\,XIII.12]{ReedSimon.1978}.
	To apply it in the way formulated therein, we need to unitarily map our Hilbert space $\HS\otimes\FS(\fh)$ onto an $L^2$-space $L^2(\cM,\lambda_\cM)$ for some appropriately chosen measure space $(\cM,\fM,\lambda_\cM)$.
	In the case $\HS=\IC$, this is done using the Wiener--It\^{o}--Segal isomorphism \cite[\S I.3]{Simon.1974}. 
	\begin{prop}
		\label{prop:Theta}
		There exists a probability space $(\cQ,\fQ,\lambda_\cQ)$ and a unitary $\Theta:\FS(\fh)\to L^2(\cQ,\lambda_\cQ)$ such that
		\begin{enumprop}
			\item $\Theta\Omega=1$,
			\item $\Theta\big(a(\nu)+a(\nu)^*\big)\Theta^*$ is a multiplication operator,
			\item $\Theta e^{-t\dG(\varpi)}\Theta^*$ preserves positivity for any $t\ge 0$, i.e., $\Theta e^{-t\dG(\varpi)}\Theta^* f\ge 0$ almost everywhere if $f\ge 0$ almost everywhere,
		\end{enumprop}
	\end{prop}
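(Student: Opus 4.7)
The statement is a version of the Wiener--It\^o--Segal isomorphism, so my plan is to realize $\FS(\fh)$ concretely as the $L^2$-space of a Gaussian field indexed by $\fh$. The preparatory step is to select a conjugation $C$ on $\fh$ (an antilinear isometric involution) that satisfies $C\nu=\nu$ and commutes with $\varpi$, equivalently with each $e^{-t\varpi}$. Such a $C$ exists: diagonalize $\varpi$ by the spectral theorem in a direct integral $\fh\cong\int^{\oplus}\fh_\lambda\,\rmd\rho(\lambda)$ and take fiberwise complex conjugation, then use a unitary rotation inside the cyclic subspace generated by $\nu$ (which is invariant under $\varpi$) to arrange $C\nu=\nu$. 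Let $\fh_\R\coloneqq\{f\in\fh:Cf=f\}$ be the real Hilbert space with inner product $\Re\braket{\cdot,\cdot}_\fh$.

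Next I would build $(\cQ,\fQ,\lambda_\cQ)$ as the canonical Gaussian probability space over $\fh_\R$: it carries an isometric $\R$-linear map $\phi\colon\fh_\R\to L^2(\cQ,\lambda_\cQ;\R)$ such that each $\phi(f)$ is centered Gaussian with $\int\phi(f)\phi(g)\,\rmd\lambda_\cQ=\Re\braket{f,g}_\fh$, and such that $\fQ$ is generated by $\{\phi(f):f\in\fh_\R\}$. Extending $\phi$ complex-linearly to $\fh$, I would then define $\Theta$ on the total family of exponential vectors $e^{a^*(f)}\Omega=\sum_{n\ge 0}\tfrac{1}{n!}f^{\otimes n}$ by
\[
\Theta\bigl(e^{a^*(f)}\Omega\bigr)=\exp\bigl(\phi(f)-\tfrac12\braket{f,f}_\fh\bigr),\qquad f\in\fh,
\]
and extend by linearity and density. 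Unitarity is verified by matching $\braket{e^{a^*(f)}\Omega,e^{a^*(g)}\Omega}=e^{\braket{f,g}_\fh}$ against the characteristic function of a complex Gaussian on $\cQ$. Property~(i) is then immediate since $\Theta\Omega=\Theta(e^{a^*(0)}\Omega)=1$.

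For (ii), I would differentiate the Weyl-type identity $\Theta e^{t(a^*(\nu)+a(\nu))}\Theta^*=e^{t\phi(\nu)}$, valid on the exponential-vector domain thanks to $C\nu=\nu$ (which makes $\phi(\nu)$ real-valued), to conclude that $\Theta(a(\nu)+a(\nu)^*)\Theta^*$ acts as multiplication by $\phi(\nu)$. For (iii), I would invoke the standard second-quantization theorem: if $T$ is a contraction on $\fh$ commuting with $C$, then $\GG(T)$ is positivity-preserving in the $Q$-space picture; applying this to $T=e^{-t\varpi}$ (a contraction commuting with $C$ by construction) gives that $\Theta e^{-t\dG(\varpi)}\Theta^*=\GG(e^{-t\varpi})$ preserves positivity.

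The main obstacle is the positivity-preservation statement~(iii). The usual proof reduces the problem to finite dimensions via a Trotter/approximation argument and then to the one-dimensional Ornstein--Uhlenbeck semigroup, whose integral kernel is given explicitly by Mehler's formula and is manifestly nonnegative. A secondary but still nontrivial point is the construction of a conjugation $C$ that simultaneously commutes with $\varpi$ and fixes $\nu$; this is classical in the $Q$-space literature but depends on a careful use of spectral multiplicity theory.
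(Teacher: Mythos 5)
Your proposal follows essentially the same route as the paper: the paper invokes the Wiener--It\^o--Segal $Q$-space isomorphism from Simon's textbook \cite[Thms.~I.11,\,I.12]{Simon.1974}, which holds under the hypothesis of a conjugation $C$ making both $\varpi$ and $\nu$ $C$-real, and then supplies the one missing ingredient (\cref{lem:existenceconjugation}), namely that such a $C$ always exists; your sketch reproduces Simon's construction (exponential vectors, Gaussian field $\phi$, positivity of second-quantized contractions via Mehler) and likewise isolates the conjugation as the preparatory step. The only place your argument is looser than the paper's is the conjugation construction itself: ``fiberwise conjugation followed by a unitary rotation'' is underspecified, since the correcting unitary $U$ must be decomposable (to retain commutation with $\varpi$) and must satisfy $U\overline{U}=\Id$ (so that $UC_0$ remains an involution). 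The paper sidesteps this by representing $\varpi$ as multiplication by a real function on a single $L^2(M,\sigma)$ and taking $C$ to be complex conjugation twisted by the unimodular phase of the image of $\nu$, which is manifestly anti-unitary, involutive, fixes $\nu$, and commutes with every real multiplication operator.
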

	\begin{proof}
		The statements can be found in \cite[Thms.~I.11,\,I.12]{Simon.1974},
		under the additional assumption that there exists a complex conjugation $C$ on $\fh$ such that both $\varpi$ and $\nu$ are $C$-real.
		We prove in \cref{lem:existenceconjugation} below that such a complex conjugation always exists.
	\end{proof}
	\begin{lem}
		\label{lem:existenceconjugation}
		Given a selfadjoint operator $S$ on $\fh$ and a vector $\xi\in\fh$,
		there exists a complex conjugation $C$ on $\fh$, i.e., an anti-unitary involution, such that both $\xi$ and $S$ are $C$-real, i.e., $C\xi=\xi$ and $SC=CS$.
	\end{lem}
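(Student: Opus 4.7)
The approach is to use the spectral theorem to realize $S$ as multiplication by a real-valued function on an $L^2$-space in such a way that $\xi$ simultaneously corresponds to a real-valued function, after which $C$ is obtained by pulling back pointwise complex conjugation on that $L^2$-space.

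Concretely, by the multiplication form of the spectral theorem for selfadjoint operators, I would choose a measure space $(X,\mu)$, a unitary $V_0\colon L^2(X,\mu)\to\fh$, and a real-valued measurable $\phi\colon X\to\IR$ with $V_0^*SV_0=M_\phi$. Setting $g\coloneqq V_0^*\xi$ and writing $g=|g|e^{\rmi\theta}$ for some Borel measurable $\theta$ (with $\theta\coloneqq 0$ on $\{g=0\}$), one obtains that $M_{e^{\rmi\theta}}$ is a unitary on $L^2(X,\mu)$ commuting with $M_\phi$. Hence $V\coloneqq V_0M_{e^{\rmi\theta}}$ is a unitary satisfying both $V^*SV=M_\phi$ and $V^*\xi=e^{-\rmi\theta}g=|g|$, which is real-valued.

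I would then define $C\coloneqq V\widetilde CV^*$, where $\widetilde C$ denotes pointwise complex conjugation on $L^2(X,\mu)$. Since $\widetilde C$ is anti-linear, anti-unitary, and an involution, the same is automatically true of $C$. The identity $C\xi=V\widetilde C|g|=V|g|=\xi$ is immediate from the fact that $V^*\xi=|g|$ is real, and $CS=SC$ on $\sD(S)$ follows because $\widetilde C$ commutes with multiplication by any real-valued function, which applies to $V^*SV=M_\phi$.

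The argument is essentially elementary, so there is no serious obstacle; the only mildly subtle points are the measurable polar decomposition of $g$ (handled by choosing $\theta=\arg g$ where $g\neq 0$ and $\theta=0$ otherwise, using Borel measurability of $\arg$ on $\IC\setminus\{0\}$) and ensuring that the commutation $\widetilde CM_\phi=M_\phi\widetilde C$ respects the domain of the possibly unbounded operator $M_\phi$, both of which are routine.
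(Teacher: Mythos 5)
Your argument is correct and is essentially the same as the paper's: both diagonalize $S$ via the spectral theorem as a real multiplication operator, then absorb the phase of the image of $\xi$ so that it becomes nonnegative, and pull back pointwise complex conjugation. Unwinding your definition $C=V\widetilde C V^*$ with $V=V_0M_{e^{\rmi\theta}}$ gives exactly the explicit formula $V_0^*CV_0\zeta=e^{2\rmi\theta}\,\overline{\zeta}$ that the paper writes down directly (as $\frac{(U^*\xi)^2}{|U^*\xi|^2}\overline{\zeta}$ off the zero set of $U^*\xi$), so the two proofs coincide up to packaging.
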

	\begin{proof}
		By the spectral theorem, there exists a measure space $(M,\Sigma,\sigma)$, a $\Sigma$-measurable function $f:M\to\IR$ and a unitary $U: L^2(M,\Sigma,\sigma)\to \fh$ such that
		$U^*SU=f$ as a multiplication operator. Now we define $C$ acting on $\zeta\in \fh$ as
		\[ (U^*CU\zeta)(x)  = \Big(\frac{U\xi(x)^2}{|U\xi(x)|^2}\1_{\{U\xi(x) \neq 0\}} +\1_{\{U\xi(x) = 0\}}\Big)\overline{U\zeta(x)}. \]
		The fact that $\xi$ and $S$ are $C$-real follows by direct calculation and the fact that $f$ is real-valued.
	\end{proof}
	We now fix $(\cQ,\fQ,\lambda_\cQ)$ as in \cref{prop:Theta}.
	We can then define $\cM = \{1,\ldots,\dim\HS\}\times \cQ$ and equip it with the product measure $\lambda_\cM=\delta\otimes\lambda_\cQ$, where $\delta$ denotes the counting measure.
	The desired unitary $U:\HS\otimes\HS(\fh)\to L^2(\cM,\lambda_\cM)$ is then uniquely determined by
	\begin{align*}
		\big(U(\ph\otimes\xi)\big)(i,q) = \braket{\ph_i,\ph} (\Theta \xi)(q).
	\end{align*}
	Especially, note that our test vector $\psi$ given in \cref{Equation: Definition of psi} is mapped to the constant function $U\psi = (\dim\HS)^{-1/2}$.
	Important for our main observations is the following statement.
	\begin{cor}
		\label{cor:pospres}
		The operator $Ue^{-tH}U^*$ is positivity preserving for any $t\ge 0$.
	\end{cor}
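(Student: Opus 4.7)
The plan is to split $H=H_1+H_2+H_3$ with $H_1\coloneqq A\otimes\1$, $H_2\coloneqq\1\otimes\dG(\varpi)$, and $H_3\coloneqq B\otimes(a(\nu)+a(\nu)^*)$, to show separately that each $Ue^{-tH_k}U^*$ is positivity preserving, and then to conclude via a Trotter product formula. Since $H_1$ is bounded (as $\HS$ is finite-dimensional), $H_2$ is self-adjoint and non-negative, and $H_3$ is self-adjoint and $H_2$-bounded with relative bound zero by \cref{eq:a-relbound}, the operator $H$ is self-adjoint on $\sD(H_2)$ by Kato--Rellich, and an iterated application of Trotter's formula (first pairing $H_2$ with $H_3$, then adjoining the bounded perturbation $H_1$) yields
\begin{equation*}
e^{-tUHU^*}=\slim_{n\to\infty}\bigl[e^{-tUH_1U^*/n}\,e^{-tUH_2U^*/n}\,e^{-tUH_3U^*/n}\bigr]^n.
\end{equation*}

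Under the natural identification $L^2(\cM,\lambda_\cM)\cong\ell^2(\{1,\ldots,\dim\HS\})\otimes L^2(\cQ,\lambda_\cQ)$ induced by $U(\ph_k\otimes\xi)(i,q)=\delta_{ik}(\Theta\xi)(q)$, I would then check each factor. The operator $UH_1U^*$ acts as the matrix multiplication $g(i,q)\mapsto\sum_k A_{ik}g(k,q)$ with $A_{ik}=\braket{\ph_i,A\ph_k}$; stoquasticity of $A$ gives $A_{ik}\le 0$ for $i\ne k$, so $cI-A$ has non-negative entries for $c\coloneqq\max_i A_{ii}$, whence $e^{t(cI-A)}$ does too (by its Taylor expansion), and therefore so does $e^{-tA}=e^{-tc}e^{t(cI-A)}$. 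For $H_2$, we have $UH_2U^*=\Id\otimes\Theta\dG(\varpi)\Theta^*$, and the claim is exactly part (iii) of \cref{prop:Theta}. Finally, since $B$ is diagonal in $\{\ph_i\}$ with real eigenvalues and $\Theta(a(\nu)+a(\nu)^*)\Theta^*$ is a real multiplication operator by part (ii) of \cref{prop:Theta}, $UH_3U^*$ is itself a real multiplication operator, so $e^{-tUH_3U^*}$ multiplies by a non-negative function.

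Since compositions of positivity preserving operators remain positivity preserving, and strong $L^2$-limits preserve non-negativity almost everywhere (via pointwise convergence along a subsequence), passing to the Trotter limit transfers the property to $e^{-tUHU^*}$ and completes the argument. The main technical point is the validity of the Trotter formula for the unbounded sum $H_2+H_3$; this is, however, standard once one observes that $H_2+H_3$ is essentially self-adjoint on $\sD(H_2)$ and invokes, for example, the Trotter--Kato theorem, so no genuine obstacle arises beyond routine bookkeeping.
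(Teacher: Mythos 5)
Your proof is correct and takes essentially the same route as the paper's: split $H$ into the three summands, observe that each of $Ue^{-tA\otimes\Id}U^*$ (via stoquasticity), $Ue^{-t\Id\otimes\dG(\varpi)}U^*$ (via \cref{prop:Theta}), and $Ue^{-tB\otimes(a(\nu)+a(\nu)^*)}U^*$ (via being multiplication by a real function) preserves positivity, then pass to the Trotter limit. You simply spell out more explicitly the stoquastic-matrix argument ($e^{-tA}=e^{-tc}e^{t(cI-A)}$), the justification for the three-factor Trotter formula, and the fact that strong limits preserve a.e.\ non-negativity, all of which the paper treats as standard.
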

	\begin{proof}
		Note that the assumption that $A$ is stoquastic with respect to the basis $\cB$ immediately implies $Ue^{-tA\otimes\Id_{\FS(\fh)}}U^*$ preserves positivity.
		Furthermore, $Ue^{-t\Id_\HS\otimes \dG(\varpi)}U^*$ preserves positivity by the definition of the Wiener--It\^{o}--Segal isomorphism.
		Since further $UB\otimes \big(a(\nu)+a(\nu)^*\big)U^*$ is a multiplication operator by construction,
		$Ue^{-tB\otimes \big(a(\nu)+a(\nu)^*\big)}U^*$ is positivity preserving and thus the fact that $Ue^{-tH}U^*$ preserves positivity follows from the Trotter product formula.
	\end{proof}
	This already gives us the 
	\begin{proof}[Proof of \cref{eq:GSenergy}]
		Since $U\psi$ is a strictly positive test vector,
		the statement immediately follows from \cref{cor:pospres};
		also see \cref{rem:pospres} or \cite[Thm.~C.1]{MatteMoller.2018}.
	\end{proof}
	To prove that $\rho=0$ implies absence of ground states,
	we will use the following simple version of Perron--Frobenius--Faris theory \cite{Faris.1972}.
	\begin{prop}
		\label{prop:PFF}
		If $A$ is a positivity preserving self-adjoint operator on $L^2(\cM,\fM,\lambda)$
		and if $\|A\|$ is an eigenvalue of $A$, there exists a normalized non-negative eigenfunction.
	\end{prop}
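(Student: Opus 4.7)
The plan is to start with a normalized eigenfunction $\phi$ of $A$ at eigenvalue $\|A\|$ and show that its modulus $|\phi|$ is itself a non-negative eigenfunction at the same eigenvalue. The two ingredients are (a) the variational characterization $\|A\|=\sup_{\|\psi\|=1}\langle\psi,A\psi\rangle$, valid here because $\|A\|$ lies in $\sigma(A)\subset[-\|A\|,\|A\|]$ and is thus the maximum of the spectrum (in particular $\|A\|\ge 0$), and (b) an elementary inequality $\langle|\phi|,A|\phi|\rangle\ge\langle\phi,A\phi\rangle$ coming from positivity preservation.

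First I would reduce to the case of real-valued $\phi$. Positivity preservation forces $A$ to commute with complex conjugation: if $f\in L^2(\cM,\lambda)$ is real then $f=f_+-f_-$ with $f_\pm\ge 0$, so $Af=Af_+-Af_-$ is again real, and hence $A\bar f=\overline{Af}$ for arbitrary complex $f$. Consequently $\operatorname{Re}\phi$ and $\operatorname{Im}\phi$ are themselves real eigenfunctions of $A$ at $\|A\|$, at least one of which is non-zero, so after rescaling I may assume $\phi$ is real-valued and normalized.

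Writing $\phi=\phi_+-\phi_-$ and $|\phi|=\phi_++\phi_-$ for the positive and negative parts, a direct expansion (using $\langle\phi_-,A\phi_+\rangle=\langle\phi_+,A\phi_-\rangle$ by self-adjointness and real-valuedness) gives
\[ \langle|\phi|,A|\phi|\rangle-\langle\phi,A\phi\rangle=4\langle\phi_+,A\phi_-\rangle\ge 0, \]
where the inequality uses $\phi_+\ge 0$ together with $A\phi_-\ge 0$ (positivity preservation applied to $\phi_-\ge 0$). Since $\||\phi|\|=\|\phi\|=1$, the variational principle yields $\langle|\phi|,A|\phi|\rangle\le\|A\|=\langle\phi,A\phi\rangle$, so in fact $\langle|\phi|,A|\phi|\rangle=\|A\|$, equivalently $\langle|\phi|,(\|A\|\Id-A)|\phi|\rangle=0$. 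As $\|A\|\Id-A$ is a non-negative self-adjoint operator, this forces $(\|A\|\Id-A)^{1/2}|\phi|=0$, and thus $A|\phi|=\|A\||\phi|$, so $|\phi|$ is the desired normalized non-negative eigenfunction.

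The proof has no genuine obstacle; the only point requiring a moment of care is the passage from the maximizer identity at the top of the spectrum to the eigenfunction equation, which follows cleanly from $\|A\|\Id-A\ge 0$ and the observation that $\langle\chi,B\chi\rangle=0$ implies $B^{1/2}\chi=0$ for any non-negative self-adjoint $B$.
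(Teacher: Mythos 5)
Your proof is correct and follows essentially the same route as the paper: reduce to a real eigenfunction using that positivity preservation makes $A$ commute with complex conjugation, compare $\langle|\phi|,A|\phi|\rangle$ with $\langle\phi,A\phi\rangle$ via the positive/negative part decomposition, and conclude from saturation of the variational bound by the spectral theorem. Your version is in fact slightly more careful at the reduction step, since the paper works with $f+\overline f$ (which could vanish for purely imaginary $f$), whereas you observe that at least one of $\operatorname{Re}\phi$, $\operatorname{Im}\phi$ is a non-zero real eigenfunction.
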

	\begin{proof}
		Throughout this proof we use complex conjugation and positive/negative parts of functions defined pointwise.
		Let $f\in L^2(\cM,\fM,\lambda)$ satisfy $Af=\|A\|f$. Then since $A$ is positivity preserving and thus real w.r.t. to the usual pointwise complex conjugation, we have
		\[ A(f+\overline{f}) = Af + \overline{Af} = \|A\|(f+\overline{f}), \]
		i.e., $f+\overline{f}$ is an eigenfunction to the same eigenvalue. Hence, we can from now assume $f$ is a.e. real-valued.
		Then $f_+,f_-\ge 0$ a.e., so since $A$ preserves positivity we find
		\begin{align*} \|A\|\|f\| &=  \braket{f,Af} = \braket{f_+,Af_+} + \braket{f_-,Af_-}-2 \braket{f_-,Af_+} \\&\le \braket{f_+,Af_+} + \braket{f_-,Af_-} +2 \braket{f_-,Af_+} = \braket{|f|,A|f|}\le \|A\|\|f\|. \end{align*} 
		By spectral decomposition, $\braket{|f|,A|f|} = \|A\|\|f\|$ implies that $|f|$ is an eigenfunction of $A$ to the eigenvalue $\|A\|$, which proves the statement since $|f|$ is normalized whenever $f$ is.
	\end{proof}
	This now immediately yields the
	\begin{proof}[Proof of \cref{eq:GSequivalence}]
		The implication $\Rightarrow$ is obvious from \cref{eq:GSenergy}.
		To prove the reverse direction, let us fix some $t>0$. Note that $\ker(H-E)=\ker(e^{-tH}-e^{-tE})$ where $e^{-tE} = \|e^{-tH}\|$. Hence, if $\ker(H-E) \neq \{0\}$, there exists by \cref{cor:pospres,prop:PFF} a non-zero $\phi \in \ker(H-E)$ such that $U\phi \geq 0$. Then
		$\rho\ge \Braket{\phi,\psi}>0$ since $U\psi$ is strictly positive.
	\end{proof}
	\subsection{Feynman--Kac Formula (\cref{Proposition: Feynman-Kac})}
	\label{subsec:FK}
	Let us now prove the Feynman--Kac formula. Since proofs of the later are fairly well known for related models,
	see for example the textbook \cite{LorincziHiroshimaBetz.2011} or \cite{HirokawaHiroshimaLorinczi.2014} for the SSB model,
	we defer some technical details to the literature.
	
	In the following, exponentials of unbounded operators
	are defined by their series expansion with domain being all Hilbert space vectors, such that the series converges pointwise.
	\begin{lem}
		\label{Lemma: Product of Is}
		If $f,g\in \sD(\varpi^{-1/2})$ and $t>0$,
		then the operator $e^{a(f)^*}e^{-t\dG(\varpi)}e^{a(g)}$ has a unique bounded extension $I_t(f,g)$.
		Furthermore, given $\tilde f,\tilde g\in\sD(\varpi^{-1/2})$ and $\tilde t>0$, we have
		\begin{align*}
			I_{\tilde t}(\tilde f,\tilde g)I_t(f,g) = e^{\braket{\tilde g,f}}I_{\tilde t+t}(\tilde f + e^{-\tilde t\varpi}f,e^{-\tilde t\varpi}\tilde g + g).
		\end{align*}
	\end{lem}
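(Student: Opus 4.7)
The plan is to reduce both assertions to explicit computations on the total family of exponential vectors $\varepsilon(h) := \sum_{n\ge 0}\frac{1}{\sqrt{n!}}h^{\otimes_\sfs n}\in \FS(\fh)$, $h\in\fh$. On these vectors one has the three elementary transformation rules
\[
e^{a(g)}\varepsilon(h) = e^{\braket{g,h}}\varepsilon(h), \qquad e^{-t\dG(\varpi)}\varepsilon(h) = \varepsilon(e^{-t\varpi}h), \qquad e^{a(f)^*}\varepsilon(h) = \varepsilon(f+h),
\]
which follow respectively from $\varepsilon(h)$ being an eigenvector of $a(g)$ with eigenvalue $\braket{g,h}$, from the definition of second quantization on symmetric tensor powers, and from evaluating $\braket{\varepsilon(h_1), e^{a(f)^*}\varepsilon(h)}$ via $(e^{a(f)^*})^* = e^{a(f)}$ together with the reproducing kernel identity $\braket{\varepsilon(h_1), \varepsilon(h_2)} = e^{\braket{h_1,h_2}}$. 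Composing these gives the compact formula $I_t(f,g)\varepsilon(h) = e^{\braket{g,h}}\varepsilon(f + e^{-t\varpi}h)$.

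For the bounded extension, I would analyze the resulting sesquilinear form
\[
\braket{\varepsilon(h_1), I_t(f,g)\varepsilon(h_2)} = \exp\bigl(\braket{g,h_2} + \braket{h_1,f} + \braket{h_1, e^{-t\varpi}h_2}\bigr)
\]
and show that its modulus is uniformly bounded by $C(f,g,t)\,\|\varepsilon(h_1)\|\,\|\varepsilon(h_2)\|$ on the span of coherent states. Completing the square turns this into a bound on a quadratic form on $\fh\oplus\fh$ built from $e^{-t\varpi}$; strict positivity of $\varpi$ makes the associated block operator injective, and the hypothesis $f,g\in\sD(\varpi^{-1/2})$ is exactly the condition placing the linear data in the range of its square root. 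Density of $\spn\{\varepsilon(h):h\in\fh\}$ in $\FS(\fh)$ then yields the unique bounded extension.

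For the composition identity, I would apply both sides to $\varepsilon(h)$. Iterating the first step, the left hand side becomes
\[
I_{\tilde t}(\tilde f, \tilde g)\,I_t(f, g)\,\varepsilon(h) = e^{\braket{g,h} + \braket{\tilde g, f} + \braket{\tilde g, e^{-t\varpi}h}}\,\varepsilon\bigl(\tilde f + e^{-\tilde t\varpi}f + e^{-(\tilde t+t)\varpi}h\bigr),
\]
while the right hand side, after expanding with the same formula and applying $\braket{\tilde g, e^{-t\varpi}h} = \braket{e^{-t\varpi}\tilde g, h}$ (self-adjointness of $e^{-t\varpi}$), produces the same exponential and the same coherent state argument. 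Density and boundedness then extend the identity to all of $\FS(\fh)$.

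The main technical difficulty is the boundedness step: neither $e^{a(f)^*}$ nor $e^{a(g)}$ is bounded individually, and since $\varpi$ is only assumed strictly positive rather than bounded below by a positive constant, the factor $e^{-t\dG(\varpi)}$ provides no uniform exponential damping in the particle number. It is precisely the hypothesis $f,g\in\sD(\varpi^{-1/2})$—equivalent to finiteness of the relative-bound constant in \cref{eq:a-relbound}—that makes the sesquilinear form bounded, and hence permits the unique bounded extension $I_t(f,g)$.
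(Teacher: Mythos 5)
Your coherent-state route to the composition identity is a valid and arguably cleaner alternative to the paper's, which works on the span $\fF$ of finite symmetric tensors and commutes $e^{a(\tilde g)}$, $e^{a(f)^*}$ and the semigroups past one another via the CCR, Baker--Campbell--Hausdorff, and $a(h)e^{-s\dG(\varpi)}=e^{-s\dG(\varpi)}a(e^{-s\varpi}h)$. One thing your method actually exposes if carried out carefully: the left-hand side applied to $\varepsilon(h)$ produces the cross term $\braket{\tilde g, e^{-t\varpi}h}$ in the exponent, so the second argument on the right-hand side must be $e^{-t\varpi}\tilde g + g$, not $e^{-\tilde t\varpi}\tilde g+g$ (the factor $\tilde g$ picks up the \emph{inner} time $t$, not the outer $\tilde t$; the paper's own intertwining relations produce $e^{-t\varpi}\tilde g$ too, so this is a typo in the statement, harmless in the one application where $t=\tilde t$). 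You in fact wrote $\braket{e^{-t\varpi}\tilde g,h}$ in your verification, i.e.\ the corrected form, but then your assertion that ``the right hand side produces the same exponential'' is not true of the lemma as printed, and the discrepancy should be flagged rather than silently absorbed.

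The boundedness step, as sketched, has a genuine gap. Your pointwise kernel bound $\lvert\braket{\varepsilon(h_1),I_t(f,g)\varepsilon(h_2)}\rvert\le C\,\lVert\varepsilon(h_1)\rVert\,\lVert\varepsilon(h_2)\rVert$ by completing the square is correct, but since exponential vectors are a total, non-orthogonal family it does not bound the sesquilinear form on their linear span, and here the gap is not removable by cosmetic means: the pointwise bound holds if and only if $f+g\in\sD(\varpi^{-1/2})$, whereas $I_t(f,g)$ is bounded if and only if \emph{both} $f,g\in\sD(\varpi^{-1/2})$ (examine $\sup_h\|I_t(f,g)\varepsilon(h)\|/\|\varepsilon(h)\|$, which is finite iff $g\in\sD(\varpi^{-1/2})$, and the same for the adjoint $I_t(g,f)$). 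Taking $f=-g=\phi$ with $\phi\notin\sD(\varpi^{-1/2})$ gives an unbounded $I_t(f,g)$ whose coherent-state kernel is uniformly bounded. To close the argument you have to use the hypothesis $f,g\in\sD(\varpi^{-1/2})$ more fully; for example, factor $I_t(f,g)=\bigl(e^{-t\dG(\varpi)/2}e^{a(f)}\bigr)^{*}\,e^{-t\dG(\varpi)/2}e^{a(g)}$, write $g=(\Id-e^{-t\varpi})^{1/2}\tilde g$ (possible precisely because $g\in\sD(\varpi^{-1/2})$), deduce $\|e^{-t\dG(\varpi)/2}e^{a(g)}\|\le e^{\|\tilde g\|^2/2}$ from the positive-definiteness of the kernel $e^{\braket{h_j,(\Id-e^{-t\varpi})h_i}}=\braket{\varepsilon((\Id-e^{-t\varpi})^{1/2}h_j),\varepsilon((\Id-e^{-t\varpi})^{1/2}h_i)}$ dominating the rank-one kernel $e^{\braket{g,h_i}}\overline{e^{\braket{g,h_j}}}$, and Schur-multiply by $e^{\braket{h_j,e^{-t\varpi}h_i}}$. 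The paper avoids this kernel-positivity argument entirely by estimating the exponential series directly from the relative bound \cref{eq:a-relbound}, deferring the bookkeeping to \cite{GueneysuMatteMoller.2017}.
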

	\begin{proof}
		The operator $e^{a(f)^*}e^{-t\dG(\varpi)}e^{a(g)}$ clearly contains the span $\fF$ of vectors of the form $h_1\otimes_\sfs\cdots\otimes_\sfs h_n$ in its domain
		and is thus densely defined. Boundedness furthermore follows from the estimate \cref{eq:a-relbound} (we refer the reader to \cite[Appendix~6]{GueneysuMatteMoller.2017} for details on how to estimate the series expansions) so that the existence of a unique bounded extension is proven.
		
		The second statement again follows on $\fF$ (and thus on all of $\FS(\fh)$),
		since the canonical commutation relations combined with the Baker--Campbell--Hausdorff Formula imply
		\[e^{a(\tilde g)}e^{a(f)^*} = e^{\frac12 \braket{\tilde g,f}}e^{a(\tilde g)+a(f)^*} = e^{\braket{\tilde g,f}}e^{a(f)^*}e^{a(\tilde g)}\]
		and since $a(h)e^{-t\dG(\varpi)} = e^{-t\dG(\varpi)}a(e^{-t\dG(\varpi)}h)$ holds on $\fF$ for any $h\in \fh$, whence
		\[ e^{a(\tilde g)}e^{-t\dG(\varpi)} = e^{-t\dG(\varpi)}e^{a(e^{-t\varpi}\tilde g)},
		\quad e^{-t\dG(\varpi)}e^{a(\tilde f)^*} = e^{a(e^{-t\varpi}\tilde f)^*}e^{-t\dG(\varpi)}. \qedhere  \]
	\end{proof}
	\begin{proof}[Proof of \cref{Proposition: Feynman-Kac}]
		We set $\phi \coloneqq (\dim \cH )^{-1/2} \sum_i \varphi_i$ such that $\psi = \phi \otimes \Omega$.
		We denote by $P_i = \braket{\ph_i,\cdot}\ph_i$ the orthogonal projection on the span of $\ph_i$
		and define the operator-valued function
		\begin{align*}
			F(t) = \sum_{i=1}^{\dim\cH}\big(e^{-tA} P_i\big)\otimes I_t\big(tw(i)\nu,tw(i)\nu\big).
		\end{align*}
		One can readily check, again see \cite[Appendix~6]{GueneysuMatteMoller.2017} for details, that
		$[0,\infty)\to\cB(\HS\otimes\FS(\hs))$ is strongly continuous and that the strong right-derivative exists and satisfies $\partial_t^+ F(t)|_{t=0} = -H$. 
		Thus, by applying the Chernoff product formula \cite{Chernoff.1968}, we then find
		\begin{align*}
			\braket{\psi,e^{-tH}\psi}
			&= \lim_{n\to\infty}\Braket{\psi,\big(F(t/n)\big)^n\psi}\\
			&=  \lim_{n\to\infty} \sum_{i_0,\ldots,i_{n-1}=1}^{\dim\cH} \Big\langle \phi, \Big(\prod_{l=0}^{n-1} e^{-tA/n} P_{i_l}\Big) \phi \Big\rangle \Big\langle \Omega, \Big(\prod_{l=0}^{n-1} I_t\big(\tfrac{t}{n}w(i_l)\nu,\tfrac{t}{n}w(i_l)\nu\big) \Big)\Omega \Big\rangle.
		\end{align*}
		We now have for every $i_1, \hdots, i_l\in \{1, \hdots, \dim \cH\}$
		\begin{align*}
			\Big\langle \phi, \Big(\prod_{l=0}^{n-1} e^{-tA/n} P_{i_l}\Big) \phi \Big\rangle &= \sum_{i_{n}=1}^{\dim \cH} \big \langle P_{i_{n}} \phi,  \prod_{l=0}^{n-1} e^{-tA/n} P_{i_l} \phi \big \rangle \\
			&=  \sum_{i_{n}=1}^{\dim \cH}  \langle \phi, \varphi_{i_{n}} \rangle \langle \varphi_{i_{0}}, \phi \rangle  \prod_{l=0}^{n-1} e^{t v(i_l)/n} \big \langle \varphi_{i_l}, e^{tQ/n} \varphi_{i_{l+1}} \rangle \\
			&= \sum_{i_{n}=1}^{\dim \cH} \frac{1}{\dim \cH} \exp\Big(\frac{t}{n}\sum_{l=0}^{n-1} v(i_l) \Big) \cdot \prod_{l=0}^{n-1} \mathbb P(X_{(l+1)t/n} = i_{l+1}|X_{lt/n} = i_{l} ) \\
			&= \sum_{i_{n}=1}^{\dim \cH} \exp\Big(\frac{t}{n}\sum_{l=0}^{n-1} v(i_l) \Big) \cdot \mathbb P(X_0 = i_0, X_{t/n} = i_1, \hdots, X_{t} =  i_{n})
		\end{align*}
		On the other hand, an inductive application of \cref{Lemma: Product of Is} (together with the simple observation that $\langle \Omega, I_t(f, g) \Omega \rangle  = 1$ for all $f, g \in \sD(\varpi^{-1/2})$) yields
		\begin{equation*}
			\Big\langle \Omega, \Big(\prod_{l=0}^{n-1} I_t\big(\tfrac{t}{n}w(i_l)\nu,\tfrac{t}{n}w(i_l)\nu\big) \Big)\Omega \Big\rangle. = \exp\Big( \frac{t^2}{n^2}\sum_{l=0}^{n-1}\sum_{k=l+1}^{n-1} w(i_l)w(i_k) \langle \nu, e^{-t (k-l+1)\varpi/n} \nu \rangle \Big)
		\end{equation*}
		such that 
		\begin{align*}
			\langle \psi,e^{-TH}\rangle &=\lim_{n\to \infty}  \sum_{i_0, \hdots, i_{n}=1}^{\dim \cH} \mathbb P(X_0 = i_0, X_{t/n} = i_1, \hdots, X_{t} =  i_{n}) \\
			&\quad \quad \quad \quad \quad \quad \quad \quad \,\cdot \exp\Big(\frac{t}{n}\sum_{l=1}^{n-1} v(i_l) + \frac{t^2}{n^2}\sum_{l=0}^{n-1}\sum_{k=l+1}^{n-1} w(i_l)w(i_k) \big\langle \nu, e^{-t (k-l+1)\varpi/n} \nu \big\rangle \Big) \\
			&=\lim_{n\to \infty} \mathbb E\Big[ \exp\Big(\frac{t}{n}\sum_{l=0}^{n-1} v(X_{lt/n}) + \frac{t^2}{n^2}\sum_{l=1}^{n-1}\sum_{k=l+1}^{n-1} w(X_{lt/n})w(X_{kt/n}) \big\langle \nu, e^{-t (k-l+1)\varpi/n} \nu \big\rangle \Big) \Big] \\
			&= \mathbb E\Big[ \exp\Big( \int_0^t v(X_s) \mathrm ds + \int_0^t \mathrm ds \int_s^t \mathrm dr \, w(X_s) w(X_r) \big\langle \nu, e^{-(r-s)\varpi} \nu \big\rangle  \Big)\Big]. \qedhere
		\end{align*}
	\end{proof}
	
	\subsection{Proof of \cref{Theorem: Estimates Spin-Boson models}}
	We can now come to the proof of our key result on the spin boson model \cref{Theorem: Estimates Spin-Boson models}.
	
	To simplify notation, we set for measurable $x, y:[0, \infty) \to \{1, \hdots, \dim\HS\}$ and $0\leq r\leq s\leq t$
	\begin{align*}
		W_{s, t}(x) &\coloneqq \frac{1}{2}\int_{[s, t]^2} g(v-u) w(x_u)w(x_y) \, \mathrm du \mathrm dv +  \int_{[s, t]} v(x_u) \, \mathrm du \\
		W_{r, s, t}(x, y) &\coloneqq \int_{[r, s]\times [s, t]} g(v-u) w(x_u)w(y_v) \, \mathrm du \mathrm dv.
	\end{align*}
	In the following proof of \cref{Theorem: Estimates Spin-Boson models}, we will heavily exploit the fact that we start our Markov process $X$  in the uniform distribution. Since the generator $Q$ of $X$ is symmetric, the uniform distribution is the stationary distribution of $X$ and the detailed balance equations are satisfied. Hence, $X$ is time reversible i.e.\ the distribution of $(X_{T-t})_{0\leq t\leq T}$ coincides with the distribution of $(X_{t})_{0\leq t\leq T}$ for any $T\geq 0$.
	
	\begin{prop}
		\label{Proposition: Fractions of partition functions in terms of path measure}
		We have for all $0\leq s\leq t$
		\begin{equation*}
			\frac{Z_t}{Z_{s}Z_{t-s}} = \frac{\widehat{\mathbb P}_{s, t}(X_0 = Y_0)}{\mathbb P^{\otimes 2}(X_0 = Y_0)}\widehat{\mathbb E}_{s, t-s}\Big[\exp\Big(\int_0^s \mathrm du \int_{0}^{t-s} \mathrm dv \, g(u+v) w(X_u) w(Y_v) \Big)\Big|X_0 = Y_0 \Big].
		\end{equation*}
		Moreover, for all $t\geq 0$
		\begin{equation*}
			\widehat{\mathbb P}_{t, t}(X_0 = Y_0) \geq \mathbb P^{\otimes 2}(X_0 = Y_0) = 1/\dim\HS.
		\end{equation*}	
	\end{prop}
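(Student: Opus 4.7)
The plan for the first identity is to exploit the time-reversibility of the Markov chain $X$ under $\mathbb{P}$. Since the generator $Q$ is symmetric and the initial law is uniform (hence stationary), conditional on $X_s=i$ the pieces $(X_{s-u})_{u\in[0,s]}$ and $(X_{s+u})_{u\in[0,t-s]}$ are independent and each distributed as a chain with generator $Q$ started at $i$. Meanwhile, under $\mathbb{P}^{\otimes 2}(\,\cdot\,|\,X_0=Y_0)$ the common initial state is uniform on $\{1,\ldots,\dim\mathcal{H}\}$, and given $X_0=Y_0=i$ the paths $X$ and $Y$ are independent chains started at $i$. These two conditional descriptions agree, so the concatenation
\[
\widetilde{Z}_u \coloneqq \begin{cases} X_{s-u}, & u\in[0,s],\\ Y_{u-s}, & u\in[s,t], \end{cases}
\]
has, under $\mathbb{P}^{\otimes 2}(\,\cdot\,|\,X_0=Y_0)$, the same law on $[0,t]$ as $X$ does under $\mathbb{P}$.

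With this identification in hand, I would apply the Feynman--Kac formula from \cref{Proposition: Feynman-Kac}, writing $Z_t = \mathbb{E}^{\otimes 2}[\exp(W_{0,t}(\widetilde{Z}))\,|\,X_0=Y_0]$ and splitting the exponent as $W_{0,t}=W_{0,s}+W_{s,t}+W_{0,s,t}$ via $[0,t]^2=[0,s]^2\sqcup[s,t]^2\sqcup(\text{cross rectangles})$. A change of variables $u\mapsto s-u$ on $[0,s]$ and $u\mapsto u-s$ on $[s,t]$, combined with the symmetry $g(r)=g(-r)$, gives $W_{0,s}(\widetilde{Z})=W_{0,s}(X)$, $W_{s,t}(\widetilde{Z})=W_{0,t-s}(Y)$ and
\[
W_{0,s,t}(\widetilde{Z},\widetilde{Z}) = \int_0^s\!\!\int_0^{t-s} g(u+v)\,w(X_u)w(Y_v)\,\mathrm{d}v\,\mathrm{d}u.
\]
Turning the conditional expectation into $\mathbb{E}^{\otimes 2}[\1_{\{X_0=Y_0\}}(\,\cdot\,)]/\mathbb{P}^{\otimes 2}(X_0=Y_0)$, pulling out $Z_sZ_{t-s}$ to pass to $\widehat{\mathbb{P}}_s\otimes\widehat{\mathbb{P}}_{t-s}$, and rewriting the indicator back as a conditional expectation then gives the identity (reading $\widehat{\mathbb{P}}_{s,t}$ as $\widehat{\mathbb{P}}_s\otimes\widehat{\mathbb{P}}_{t-s}$, so that the subscripts match those of $\widehat{\mathbb{E}}_{s,t-s}$).

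For the lower bound on $\widehat{\mathbb{P}}_{t,t}(X_0=Y_0)$, set $a_i\coloneqq \mathbb{E}^i[\exp(W_{0,t}(X))]$, where $\mathbb{E}^i$ denotes the Markov chain expectation started at $i$. Uniformity of the initial law under $\mathbb{P}$ gives $Z_t = (\dim\mathcal{H})^{-1}\sum_i a_i$, and independence of $X,Y$ under $\mathbb{P}^{\otimes 2}$ yields
\[
\widehat{\mathbb{P}}_{t,t}(X_0=Y_0) = \frac{1}{Z_t^2\,(\dim\mathcal{H})^2}\sum_{i=1}^{\dim\mathcal{H}} a_i^2.
\]
Cauchy--Schwarz in $\mathbb{R}^{\dim\mathcal{H}}$ then gives $(\dim\mathcal{H})^2 Z_t^2 = (\sum_i a_i)^2 \le \dim\mathcal{H}\cdot\sum_i a_i^2$, equivalently $\sum_i a_i^2 \ge \dim\mathcal{H}\cdot Z_t^2$, which delivers the bound $\widehat{\mathbb{P}}_{t,t}(X_0=Y_0)\ge 1/\dim\mathcal{H}$.

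The main subtlety will be the time-reversal/concatenation argument itself: one has to verify that $\widetilde{Z}$ and $X$ agree in law at the level of the integral functionals entering the Feynman--Kac formula. Since these functionals depend only on path values away from jump times (a Lebesgue-null set), this ultimately reduces to equality of finite-dimensional distributions, which follows from reversibility plus the Markov property. The rest is bookkeeping with changes of variables and the definition of $\widehat{\mathbb{P}}_{s,t-s}$.
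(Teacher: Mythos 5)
Your proposal is correct and follows essentially the same approach as the paper: the paper also splits the Feynman--Kac exponent over $[0,s]^2\sqcup[s,t]^2\sqcup(\text{cross rectangles})$, uses stationarity and reversibility of $X$ at the splitting time to equate the conditional law of $(X_{s-u},X_{s+v})$ given $X_s=i$ with two independent chains started at $i$, and then normalizes by $Z_sZ_{t-s}$; your concatenation map $\widetilde Z$ is just the inverse of the paper's splitting $(\tilde X,\tilde Y)$. Your lower bound via Cauchy--Schwarz on $\sum_i a_i^2$ is exactly the paper's $\ell^1$--$\ell^2$ inequality, and you are right that the statement's $\widehat{\mathbb P}_{s,t}$ should be read as $\widehat{\mathbb P}_s\otimes\widehat{\mathbb P}_{t-s}$ to match $\widehat{\mathbb E}_{s,t-s}$.
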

	\begin{proof}
		Throughout the proof we fix $0\le s\le t$.
		Let us define $\tilde X \coloneqq (X_{s-u})_{0\leq u \leq s}$ and  $\tilde Y \coloneqq (X_{s+v})_{0 \leq v \leq t-s}$. Notice that
		\begin{equation*}
			W_{0, s, t}(X, X) = \int_0^s \mathrm du \int_0^{t-s} \mathrm dv\, g(u+v) w(\tilde X_u)w(\tilde Y_v) \eqqcolon \widetilde W_{0,s,t}(\tilde X, \tilde Y).
		\end{equation*}	
		as well as
		\begin{equation*}
			W_{0, s}(X) = W_{0, s}(\tilde X), \quad W_{s, t}(Y) = W_{0, t-s}(\tilde Y).
		\end{equation*}
		We set $p \coloneqq  1/\dim\HS$ such that $\mathbb P(X_0 = i) = p$ for all $i \in \{1, \hdots, \dim\HS\}$. We hence have
		\begin{align*}
			Z_t &= \mathbb E\big[\operatorname{exp}\big(W_{0, s}(X) + W_{s, t}(X) + W_{0, s, t}(X, X)   \big) \big] \\
			&= \mathbb E\big[\operatorname{exp}\big(W_{0, s}(\tilde X) + W_{0, t-s}(\tilde Y) + \widetilde W_{0, s, t}(\tilde X, \tilde Y)   \big) \big] \\
			&=\sum_{i=1}^{\dim\HS} p \cdot\mathbb E \big[\operatorname{exp}\big(W_{0, s}(\tilde X) + W_{0, t-s}(\tilde Y) + \widetilde W_{0, s, t}(\tilde X, \tilde Y)  \big) \big|X_s =  i \big].
		\end{align*}
		By the Markov property and since $X$ is time reversible, we have
		\begin{equation*}
			\mathbb P\big((\tilde X, \tilde Y) \in \, \cdot  \, | X_s = i\big) = \mathbb P^{\otimes 2}\big( (X^s, Y^{t-s}) \in \, \cdot  \, | X_0 = Y_0 = i\big) 
		\end{equation*}
		where $X^s \coloneqq (X_u)_{0 \leq u \leq s}$ and $Y^{t-s} \coloneqq (Y_u)_{0 \leq u \leq t-s}$. Hence,
		we obtain
		\begin{align*}
			Z_t &=\sum_{i=1}^{\dim\HS} p \cdot\mathbb E^{\otimes 2} \big[\operatorname{exp}\big(W_{0, s}(X) + W_{0, t-s}(Y) + \widetilde W_{0, s, t}(X, Y)  \big) \big|X_0 = Y_0 = i \big] \\
			&= \frac1p \sum_{i=1}^{\dim\HS} \mathbb E^{\otimes 2} \big[\operatorname{exp}\big(W_{0, s}(X) + W_{0, t-s}(Y) + \widetilde W_{0, s, t}(X, Y)  \big) \1_{\{X_0 = Y_0 = i\}}\big] \\
			&= \frac1p Z_{s} Z_{t-s} \widehat{\mathbb E}_{s, t-s}\big[\operatorname{exp}\big(\widetilde W_{0, s, t}(X, Y)  \big) \1_{\{X_0 = Y_0\}} \big].
		\end{align*}
		which readily implies the first equality.

		To prove the claimed inequality, it is sufficient to notice that
		\begin{align*}
			\widehat{\mathbb P}_{t, t}(X_0 = Y_0) &=  \frac{1}{Z_t^2} \sum_{i=1}^{\dim\HS} \mathbb E^{\otimes 2}\big[\exp(W_{0, t}(X) + W_{0, t}(Y)) \1_{\{X_0 = Y_0 = i\}} \big] \\
			&= \frac{1}{Z_t^2} \sum_{i=1}^{\dim\HS} \mathbb E\big[\exp(W_{0, t}(X)) \1_{\{X_0 = i\}} \big]^2 \\
			&\geq \frac{1}{Z_t^2} \frac{1}{\dim\HS} \Big(\sum_{i=1}^{\dim\HS} \mathbb E\big[\exp(W_{0, t}(X)) \1_{\{X_0 = i\}} \big] \Big)^2 = \frac{1}{\dim\HS}
		\end{align*}
		by the usual inequality between the $1$- and the $2$-norm on $\R^d$.
	\end{proof}
	We can now give the
	\begin{proof}[Proof of \cref{Theorem: Estimates Spin-Boson models}]
		We first prove \cref{eq:SBlowerbound}. By applying \cref{Proposition: Fractions of partition functions in terms of path measure} and Jensens inequality, we find
		\begin{align*}
			\frac{Z_{2t}}{Z_t^2} &\geq \widehat{\mathbb E}_{t, t}\Big[\exp\Big(\int_0^t \mathrm du \int_{0}^{t} \mathrm dv \, g(u+v) w(X_u) w(Y_v) \Big)\Big|X_0 = Y_0 \Big] \\
			&\geq \int_0^t \mathrm du \int_{0}^{t} \mathrm dv \, g(u+v) \widehat{\mathbb E}_{t, t}[w(X_u) w(Y_v)|X_0 = Y_0 = 0].
		\end{align*}
		After taking the logarithm, \cref{eq:SBlowerbound} follows with \cref{Theorem: Convergence of averaged partition function}.
		
		Let us now again set $p\coloneqq 1/\dim\HS$.
		We have for all $0\leq s\leq t$
		\begin{align*}
			\frac{Z_{s} Z_{t-s}}{Z_t} &= \frac{1}{Z_t} \sum_{i, j=1}^{\dim\HS} \mathbb E[\exp(W_{0, s}(X))|X_0 = i]\cdot \mathbb E[\exp(W_{0, t-s}(X))|X_0 = j]\cdot p^2 \\
			&\geq \frac{1}{Z_t} \sum_{i=1}^{\dim\HS} \mathbb E[\exp(W_{0, s}(X))|X_0 = i]\cdot \mathbb E[\exp(W_{0, t-s}(X))|X_0 = i]\cdot p^2 \\
			&= \frac{1}{Z_t} \sum_{i=1}^{\dim\HS} \mathbb E[\exp(W_{0, s}(X))|X_s = i]\cdot \mathbb E[\exp(W_{s, t}(X))|X_s = i]\cdot p^2
		\end{align*}
		where we used in the last equality that $X$ is time reversible. With the Markov property, we hence obtain
		\begin{align*}
			\frac{Z_{s} Z_{t-s}}{Z_t} &\geq \frac{1}{Z_t} \sum_{i=1}^{\dim\HS} \mathbb E[\exp(W_{0, s}(X))\exp(W_{s, t}(X))|X_s = i]\cdot p^2 \\
			&=\sum_{i=1}^{\dim\HS}\widehat{\mathbb E}_t[\exp(-W_{0, s, t}(X, X))\1_{\{X_s = i\}}]\cdot p \\
			&= p \cdot \widehat{\mathbb E}_t[\exp(-W_{0, s, t}(X, X))].
		\end{align*}
		Notice that
		\begin{align*}
			\int_0^t \mathrm ds \, W_{0, s, t}(X, X) &= \int_0^t \mathrm ds \int_{[0, t]^2} \mathrm du \mathrm dv \, \1_{\{0<u<s<v<t\}} g(v-u)w(X_u) w(X_v) \\
			&=\frac{1}{2}\int_{[0, t]^2}\mathrm du \mathrm dv \, |v-u| g(v-u)w(X_u) w(X_v).
		\end{align*}
		By applying Jensens inequality, we hence obtain
		\begin{align*}
			\frac{1}{t} \int_0^t \frac{Z_{s} Z_{t-s}}{Z_t} \, \mathrm ds &\geq p\exp\Big(-\frac{1}{t}\int_0^t \widehat{\mathbb E}_t[W_{0, s, t}(X, X))] \mathrm ds  \Big) \\
			&= p\operatorname{exp}\Big(-\frac{1}{2t}\int_{[0, t]^2} |u-v|g(u-v) \widehat{\mathbb E}_{t}[w(X_u) w(X_v)] \, \mathrm du \mathrm dv\Big). 
		\end{align*}
		After taking the logarithm, \cref{Theorem: Convergence of averaged partition function} finally yields \cref{eq:SBupperbound}.
	\end{proof}
	
	\subsection{Proof of \cref{Theorem: estimate vacuum overlap generalized Spin-Boson,Corollary: Existence criterion generalized Spin-Boson}}
	To derive \cref{Theorem: estimate vacuum overlap generalized Spin-Boson} from \cref{Theorem: Estimates Spin-Boson models},
	we rewrite the right hand side in the latter in terms of the boson number operator $\mathbf N$.
	
	Recalling that $H_\eps$  is obtained by replacing $\varpi$ in \cref{Equation: Definition of generalized Spin-Boson Hamiltonian} by $\varpi+\eps$, the Feynman--Kac formula \cref{Proposition: Feynman-Kac} yields
	\begin{equation}
		\label{Equation: Feynman-Kac for regularized Hamiltonian}
		\big\langle \psi, e^{-TH_\varepsilon}\psi \big\rangle = \mathbb E\Big[\exp\Big( \frac{1}{2}\int_{[0, T]^2}g_\varepsilon(t-s) w(X_s)w(X_t) \, \mathrm ds \mathrm dt +  \int_{[0, T]} v(X_s) \, \mathrm ds \Big) \Big] = Z_{\varepsilon, T}
	\end{equation}
	where $Z_{\varepsilon, T}$ is the normalization constant of the measure $\widehat{\mathbb P}_{\varepsilon, T}$ obtained by replacing $g$ with
	\begin{equation*}
		g_\varepsilon(t) = e^{-\varepsilon|t|}g(t).
	\end{equation*}
	For $\varepsilon\ge0$, we denote $E_\varepsilon \coloneqq \inf \sigma(H_\varepsilon)$ and $\rho_\varepsilon \coloneqq \langle \psi, \1_{\{E_\varepsilon\}}(H_\varepsilon) \psi \rangle$.
	
	\begin{prop}
		\label{Proposition: Representation of upper bound as expected boson number}
		Assume that $\inf\sigma(\varpi) > 0$ and let $\phi$ denote the ground state of $H$,
		which exists by \cref{Corollary: Existence in infra-red regular case}.
		Then
		\begin{equation*}
			\lim_{T \to \infty} \frac{1}{2T}\int_{[0, T]^2} |t-s|g(t-s) \widehat{\mathbb E}_{T}[w(X_s) w(X_t)] \, \mathrm ds \mathrm dt = \big \langle \phi, \mathbf N \phi \big \rangle.
		\end{equation*}
	\end{prop}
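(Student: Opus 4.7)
The strategy is to identify the left hand side as a derivative of the partition function with respect to a boson-mass shift and then reduce the problem to the spectral representation of $H$. Concretely, I would introduce $H_\alpha\coloneqq H + \alpha\mathbf N$ for $\alpha\ge 0$, which corresponds to replacing $\varpi$ by $\varpi + \alpha\Id_\fh$ in \cref{Equation: Definition of generalized Spin-Boson Hamiltonian}. Then \cref{Proposition: Feynman-Kac} yields $\langle \psi, e^{-TH_\alpha}\psi\rangle = Z_{\alpha,T}$, where the kernel $g$ is replaced by $g_\alpha(t) = e^{-\alpha|t|}g(t)$. Differentiating in $\alpha$ under the expectation at $\alpha=0$, with $\partial_\alpha g_\alpha(t)|_{\alpha=0} = -|t|g(t)$, on the probabilistic side, and applying Duhamel's formula on the operator side, one obtains the identity
\[
\frac{1}{2T}\int_{[0,T]^2}|t-s|g(t-s)\widehat{\mathbb E}_T[w(X_s)w(X_t)]\,\rmd s\,\rmd t
= \frac{1}{TZ_T}\int_0^T \langle e^{-sH}\psi,\mathbf N e^{-(T-s)H}\psi\rangle\,\rmd s.
\]
The task thus reduces to proving that the right hand side converges to $\langle\phi,\mathbf N\phi\rangle$ as $T\to\infty$.

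Next, I would decompose $\psi = \sqrt\rho\,\phi + \psi^\perp$ with $\psi^\perp\coloneqq (1-P_E)\psi$; the positivity argument of \cref{subsec:positivity} ensures $\langle\phi,\psi\rangle = \sqrt\rho\ge 0$. Plugging $e^{-sH}\psi = \sqrt\rho\,e^{-sE}\phi + e^{-sH}\psi^\perp$ into the inner product splits it into four pieces. The pure ground-state contribution equals $\rho e^{-TE}\langle\phi,\mathbf N\phi\rangle$, and since $e^{TE}Z_T\to\rho$ by dominated convergence on the Laplace representation, it yields $\langle\phi,\mathbf N\phi\rangle$ in the limit after division by $Z_T$. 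The three remaining cross terms are controlled using the massiveness $c_0\coloneqq \inf\sigma(\varpi)>0$: first, $\dG(\varpi)\ge c_0\mathbf N$ together with $\sD(H)=\HS\otimes\sD(\dG(\varpi))$ gives $\phi\in\sD(\mathbf N)$; second, combining $\mathbf N\le c_0^{-1}\dG(\varpi)$ with the infinitesimal $\dG(\varpi)^{1/2}$-boundedness of the interaction via \cref{eq:a-relbound} yields an operator inequality $\mathbf N\le c_1 H + c_2$, whence by the spectral theorem
\[
\|\mathbf N^{1/2}e^{-tH}\psi^\perp\|^2 \le \int_{(E,\infty)}(c_1 x + c_2)e^{-2tx}\,\mu(\rmd x) = e^{-2tE}o(1)
\]
as $t\to\infty$, by dominated convergence (the integrand avoids the atom of $\mu$ at $E$).

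Writing $\|\mathbf N^{1/2}e^{-tH}\psi^\perp\| = e^{-tE}\delta(t)$ with $\delta(t)\to 0$, I would apply Cauchy--Schwarz to each of the three cross contributions to dominate them by a constant multiple of $e^{-TE}\delta(T-s)$ or $e^{-TE}\delta(s)\delta(T-s)$. After integrating in $s\in[0,T]$, dividing by $TZ_T\sim T\rho e^{-TE}$, and invoking the Cesàro bounds $T^{-1}\int_0^T\delta(s)\,\rmd s\to 0$ and $T^{-1}\int_0^T\delta(s)\delta(T-s)\,\rmd s\le T^{-1}\int_0^T\delta(s)^2\,\rmd s\to 0$ (the latter by Cauchy--Schwarz), all three cross contributions vanish, completing the proof. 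The main obstacle is the quantitative decay estimate $\|\mathbf N^{1/2}e^{-tH}\psi^\perp\| = e^{-tE}o(1)$, which relies decisively on the mass gap $c_0>0$, both for $\phi\in\sD(\mathbf N)$ and for the form bound $\mathbf N\le c_1 H + c_2$. Notably, no exchange of the derivative $\partial_\alpha$ with the limit $T\to\infty$ is required, since the Feynman--Kac identity converts the problem directly into a semigroup computation.
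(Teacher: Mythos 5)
Your proof follows essentially the same route as the paper: identify the left-hand side as $-\frac1T\partial_\varepsilon\log\langle\psi,e^{-TH_\varepsilon}\psi\rangle|_{\varepsilon=0}$ with $H_\varepsilon=H+\varepsilon\mathbf N$ via the Feynman--Kac representation, match it to $\frac{1}{TZ_T}\int_0^T\langle\psi,e^{-sH}\mathbf N e^{-(T-s)H}\psi\rangle\,\rmd s$ by Duhamel, decompose $\psi=P_E\psi+(1-P_E)\psi$, and kill the cross terms using relative boundedness of $\mathbf N$ with respect to $H$ together with decay of the spectral measure away from $E$. The only (minor) divergence is in the error-term bookkeeping: the paper bounds $|\langle e^{-s\tilde H}\psi_i,\mathbf Ne^{-(T-s)\tilde H}\psi_j\rangle|$ using the operator bound $\|\mathbf N\xi\|\le C(\|\xi\|+\|H\xi\|)$ and a $\pm\sqrt T$ splitting of the integration range, whereas you put $\mathbf N^{1/2}$ on each side via a form bound $\mathbf N\le c_1 H+c_2$ and conclude with Cauchy--Schwarz plus Cesàro averaging, which is arguably a bit cleaner but is the same idea.
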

	
	\begin{proof}
		As $\sD(H) \subset \sD(\mathbf N)$, the operator $\mathbf N$ is, as a consequence of the closed graph theorem, relatively bounded with respect to $H$,
		see for example \cite{Schmudgen.2012} for the simple proof.
		In the following, let $C>0$ be such that
		\begin{equation}
			\label{Equation: relative bound}
			\|\mathbf N\phi \|^2 \leq C \|\phi\|^2 +  	C\|H \phi \|^2
		\end{equation}
		for all $\phi \in \mathcal D(H)$. Since $\partial_\varepsilon g_\varepsilon(t)|_{\varepsilon = 0} = -|t|g(t)$, we obtain
		\begin{equation*}
			- \frac{1}{T} \partial_\varepsilon \log \,\big\langle \psi, e^{-TH_\varepsilon}\psi \big\rangle \big|_{\varepsilon = 0} = \frac{1}{2T}\int_{[0, T]^2} |t-s|g(t-s) \widehat{\mathbb E}_{T}[w(X_s) w(X_t)] \, \mathrm ds \mathrm dt.
		\end{equation*}
		On the other hand, by Duhamels formula
		\begin{equation*}
			- \frac{1}{T} \partial_\varepsilon \log \,\big\langle \psi, e^{-TH_\varepsilon}\psi \big\rangle\big|_{\varepsilon = 0} = \frac{1}{T}\int_0^T \frac{\big\langle \psi, e^{-sH} \mathbf N e^{-(T-s)H} \psi \big\rangle}{\big\langle \psi, e^{-TH}\psi \big\rangle } \, \mathrm ds.
		\end{equation*}
		Write $\tilde H \coloneqq H - E$. By multiplying denominator and enumerator by $e^{TE} = e^{(T-s)E} \cdot e^{sE}$, we obtain
		\begin{equation*}
			\frac{1}{2T}\int_{[0, T]^2} |t-s|g(t-s) \widehat{\mathbb E}_{T}[w(X_s) w(X_t)] \, \mathrm ds \mathrm dt = \frac{1}{T}\int_0^T \frac{\big\langle \psi, e^{-s \tilde H} \mathbf N e^{-(T-s) \tilde H} \psi \big\rangle}{\big\langle \psi, e^{-T \tilde H}\psi \big\rangle } \, \mathrm ds.
		\end{equation*}
		Since  
		\begin{equation*}
			\lim_{T\to \infty} \big\langle \psi, e^{-T \tilde H}\psi \big\rangle = \lim_{T\to \infty} \int_{[E, \infty)} e^{-T(x-E)} \, \mu(\mathrm dx) = \mu(\{E\}) = \|P_E \psi \|^2
		\end{equation*}
		it is sufficient to show that
		\begin{equation}
			\label{Equation: some equation}
			\lim_{T\to \infty} \frac{1}{T}\int_0^T \big\langle \psi, e^{-s \tilde H} \mathbf N e^{-(T-s) \tilde H} \psi \big\rangle \, \mathrm ds =  \|P_E \psi \|^2 \big \langle \phi, \mathbf N \phi \big \rangle
		\end{equation}
		in order to conclude the proof. Let us write
		\begin{equation*}
			\psi = \psi_1 + \psi_2 \quad \text{ where } \quad  \psi_1 = P_E \psi \quad \text{ and } \quad  \psi_2 = (1-P_E)\psi.
		\end{equation*}
		Since $\psi_1$ is an eigenvector of $\tilde H$ to the eigenvalue $0$, we have
		\begin{align*}
			\big\langle \psi, e^{-s \tilde H} \mathbf N e^{-(T-s) \tilde H} \psi \big\rangle &= \big\langle  e^{-s \tilde H}  \psi_1, \mathbf N e^{-(T-s) \tilde H} \psi_1 \big\rangle + R(s, T) \\
			&= \big\langle \psi_1, \mathbf N \psi_1 \big\rangle + R(s, T) \\
			&= \|P_E \psi \|^2 \big \langle \phi, \mathbf N \phi \big \rangle + R(s, T)
		\end{align*}
		where
		\begin{equation*}
			R(s, T) \coloneqq \sum_{(i, j)\neq (1, 1)}  \big\langle  e^{-s \tilde H}  \psi_i, \mathbf N e^{-(T-s) \tilde H} \psi_j\big \rangle.
		\end{equation*}
		Let $\nu_1, \nu_2$ denote the spectral measures of $\psi_1$ and $\psi_2$ with respect to $\tilde H$. Using \cref{Equation: relative bound}, we hence have for all $i, j\in \{1, 2\}$
		\begin{align*}
			\big|\big\langle  e^{-s \tilde H}  \psi_i, \mathbf N e^{-(T-s) \tilde H} \psi_j\big \rangle \big|^2 \leq 	C \big\| e^{-s \tilde H}  \psi_i \big\|^2 \Big(\big\|e^{-(T-s) \tilde H} \psi_j \big\|^2 + \big\|H e^{-(T-s) \tilde H} \psi_j \big\|^2\Big) \\
			= C \Big(\int_{[0, \infty)} e^{-2sx} \nu_i(\mathrm dx) \Big) \Big(\int_{[0, \infty)} \big(1 + (x+ E)^2\big) e^{-2(T-s)x} \nu_j(\mathrm dx)  \Big)
		\end{align*}
		Notice that $\int_{[0, \infty)} x^2 \, \nu_j(\mathrm dx) < \infty$ since $\psi_j  \in \mathcal D(H)$.
		Let $\varepsilon>0$. Since $\nu_2(\{0\}) = 0$, the above implies that there exists some constant $C>0$ such that for all sufficiently large $T>0$
		\begin{equation*}
			|R(s, T)| \leq 
			\begin{cases}
				\varepsilon \quad  &\text{ if }s\geq \sqrt{T} \text{ and } T-s\geq \sqrt{T} \\
				C \quad &\text{ else}
			\end{cases}
		\end{equation*}
		Hence,
		\begin{equation*}
			\limsup_{T\to \infty} \frac{1}{T}\int_0^T |R(s, T)| \, \mathrm ds \leq \varepsilon + \limsup_{T\to \infty} \frac{2 C \sqrt{T}}{T} = \varepsilon.
		\end{equation*}
		Since $\varepsilon>0$ was arbitrary, we hence obtain
		\begin{equation*}
			\lim_{T\to \infty} \frac{1}{T}\int_0^T R(s, T) \, \mathrm ds = 0
		\end{equation*}
		and \cref{Equation: some equation} follows.
	\end{proof}
	To conclude the proof of \cref{Corollary: Existence criterion generalized Spin-Boson}, 
	we employ the following two simple observations on the limit $\eps\downarrow 0$.
	Whereas they can also be proven using functional analytic techniques,
	we emphasize that we here exclusively employ the Feynman--Kac formula.
	\begin{lem}
		\label{Lemma: continuity ground-state energy}
		We have $\lim_{\varepsilon \downarrow 0}E_\varepsilon = E$.
	\end{lem}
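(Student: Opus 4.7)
The plan is to combine a simple monotonicity argument with the Feynman--Kac representation \cref{Equation: Feynman-Kac for regularized Hamiltonian}. Since $\mathbf N \ge 0$, the form inequality $H_\varepsilon \ge H$ holds for every $\varepsilon \ge 0$, so $\varepsilon \mapsto E_\varepsilon$ is non-decreasing on $[0,\infty)$ with $E_\varepsilon \ge E$. The monotone limit $E_0^+ \coloneqq \lim_{\varepsilon \downarrow 0} E_\varepsilon$ therefore exists and satisfies $E_0^+ \ge E$, so the task reduces to proving the reverse inequality $E_0^+ \le E$.

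For the upper bound, I would observe that $H_\varepsilon$ is itself a generalized spin--boson Hamiltonian of the form \cref{Equation: Definition of generalized Spin-Boson Hamiltonian}, obtained by replacing $\varpi$ with $\varpi + \varepsilon \Id_\fh$; this preserves strict positivity of the boson dispersion as well as the condition $\nu \in \sD(\varpi^{-1/2})$. Denoting by $\mu_\varepsilon$ the spectral measure of $H_\varepsilon$ with respect to $\psi$, \cref{eq:GSenergy} applied to $H_\varepsilon$ gives $E_\varepsilon = \inf \supp \mu_\varepsilon$, so that
\[ Z_{\varepsilon,T} = \int_{[E_\varepsilon,\infty)} e^{-Tx}\, \mu_\varepsilon(\rmd x) \le e^{-T E_\varepsilon}, \]
and hence $E_\varepsilon \le -\tfrac{1}{T}\log Z_{\varepsilon,T}$ for all $\varepsilon, T > 0$.

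Next I would show $\lim_{\varepsilon \downarrow 0} Z_{\varepsilon, T} = Z_T$ for each fixed $T > 0$. In \cref{Equation: Feynman-Kac for regularized Hamiltonian} the integrand in the exponent converges pointwise (in fact monotonically, since $g_\varepsilon = e^{-\varepsilon|\cdot|} g \uparrow g$) to the corresponding integrand appearing in the Feynman--Kac formula for $Z_T$. Because $\HS$ is finite-dimensional, $v$ and $w$ take only finitely many values, and the uniform bounds $0 \le g_\varepsilon \le g \le \|\nu\|_\fh^2$ imply that the exponent is bounded in absolute value by a constant depending only on $T$, so dominated convergence applies first inside the exponential and then inside $\mathbb E$. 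Combining this with the previous display yields $E_0^+ \le -\tfrac{1}{T}\log Z_T$ for every $T > 0$, and letting $T \to \infty$ together with \cref{eq:infsupp} (using $E = \inf \supp \mu$ via \cref{eq:GSenergy}) gives $E_0^+ \le E$.

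The main obstacle, such as it is, is purely bookkeeping: one must verify that the spectral-theoretic identity \cref{eq:GSenergy} applies uniformly to the family $\{H_\varepsilon\}_{\varepsilon \ge 0}$ with the fixed test vector $\psi$ (which it does, since the stoquasticity of $A$ and the positivity-preserving construction of \cref{subsec:positivity} are $\varepsilon$-independent), and one must handle the indefinite sign of $w(X_s) w(X_t)$ in the Feynman--Kac exponent, which is immediate once $|w|$ and $g$ are bounded uniformly.
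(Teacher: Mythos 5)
Your proof is correct and reaches the same endpoint, but it routes the upper bound differently from the paper. Both proofs share the key ingredient: the pointwise convergence $Z_{\varepsilon,T}\to Z_T$ as $\varepsilon\downarrow 0$ from the Feynman--Kac formula, and both use $\mathbf N\ge 0$ to get $E_\varepsilon\ge E$. For the upper bound, however, the paper passes to weak convergence of the spectral measures $\mu_\varepsilon\to\mu$ (via Laplace transforms) and then uses a compactly supported test function $f$ concentrated near $E$ with $f(E)>0$ to conclude $E_\varepsilon\le E+\delta$ for small $\varepsilon$. You instead exploit the elementary inequality $E_\varepsilon\le -\tfrac1T\log Z_{\varepsilon,T}$, pass to the limit $\varepsilon\downarrow 0$ at fixed $T$, and then send $T\to\infty$ using \cref{eq:infsupp}. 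Your route avoids the weak-convergence machinery entirely and is arguably cleaner; the paper's route is arguably more conceptual (it exhibits the weak limit of measures, which is reused implicitly in \cref{Lemma: continuity vacuum overlap}). One small note: the monotonicity of $\varepsilon\mapsto E_\varepsilon$ is a pleasant simplification but not strictly necessary, since $E_\varepsilon\ge E$ already gives $\liminf_{\varepsilon\downarrow 0}E_\varepsilon\ge E$ and your Laplace-transform argument bounds $\limsup_{\varepsilon\downarrow 0}E_\varepsilon\le E$; also, for the inequality $Z_{\varepsilon,T}\le e^{-TE_\varepsilon}$ you only need $\supp\mu_\varepsilon\subseteq\sigma(H_\varepsilon)$, so invoking \cref{eq:GSenergy} for $H_\varepsilon$ is more than is strictly required there. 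Your care with the indefinite sign of $w(X_s)w(X_t)$ (using the uniform bound $g_\varepsilon\le g\le\|\nu\|_\fh^2$ and dominated rather than monotone convergence) is correct and addresses a point the paper's proof leaves implicit.
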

	\begin{proof}
		Let $\mu_\varepsilon$ denote the spectral measure of $H_\varepsilon$ with respect to $\psi$.
		For every $T\geq 0$ we have with the Feynman-Kac formula \cref{Equation: Feynman-Kac for regularized Hamiltonian} that $Z_{T, \varepsilon} \to Z_{T}$ as $\varepsilon\downarrow 0$ and hence $\mu_\varepsilon \to \mu$ weakly as $\varepsilon\downarrow 0$. Let $\delta>0$ and $f \in C_c(E-\delta, E+\delta)$ with $f(E) > 0$. Then $\mu(f)>0$ and hence $\mu_\varepsilon(f) > 0$ for all sufficiently small $\varepsilon>0$ i.e.\ $E_\varepsilon \leq E + \delta$ for all sufficiently small $\varepsilon>0$. Since $\mathbf N$ is positive definite, we have $E_\varepsilon \geq E$ for all $\varepsilon>0$. We hence obtain $E_\varepsilon \to E$ as $\varepsilon \downarrow 0$.
	\end{proof}
	\begin{lem}
		\label{Lemma: continuity vacuum overlap}
		We have $\rho \geq \limsup_{\varepsilon \downarrow 0}\rho_\varepsilon$.
	\end{lem}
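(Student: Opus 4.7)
The plan is to apply \cref{Theorem: Convergence of averaged partition function} and \cref{Proposition: Monotonicty of fraction of partition functions} both to $\mu$ and to the spectral measure $\mu_\varepsilon$ of $H_\varepsilon$ with respect to $\psi$, thereby sandwiching $\rho_\varepsilon$ by finite-horizon ergodic averages that are uniformly controlled in $\varepsilon$, and then passing first $\varepsilon \downarrow 0$ at fixed $T$ and afterwards $T \to \infty$.

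Concretely, since $\mu_\varepsilon$ has Laplace transform $(Z_{T,\varepsilon})_{T\ge 0}$ and bottom of support $E_\varepsilon$, applying \cref{Theorem: Convergence of averaged partition function} to $\mu_\varepsilon$ yields $\rho_\varepsilon = \lim_{T\to\infty}\frac1T\int_0^T \frac{Z_{s,\varepsilon}Z_{T-s,\varepsilon}}{Z_{T,\varepsilon}}\,\mathrm ds$. By \cref{Proposition: Monotonicty of fraction of partition functions}(iii) this limit is an infimum, so that for every $T>0$
\begin{equation*}
  \rho_\varepsilon \;\le\; \frac1T\int_0^T\frac{Z_{s,\varepsilon}Z_{T-s,\varepsilon}}{Z_{T,\varepsilon}}\,\mathrm ds.
\end{equation*}

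I would then send $\varepsilon\downarrow 0$ at fixed $T$. The Feynman--Kac formula \cref{Equation: Feynman-Kac for regularized Hamiltonian} represents $Z_{s,\varepsilon}$ as the expectation of an exponential whose exponent is pathwise bounded by a finite constant depending only on $T$, $\max_i|w(i)|$, $\max_i|v(i)|$ and $\|\nu\|$, since $g_\varepsilon\le g$ and $X$ takes values in the finite set $\{1,\ldots,\dim\cH\}$; dominated convergence then yields $Z_{s,\varepsilon}\to Z_s$ pointwise in $s$, and in particular $Z_{T,\varepsilon}\to Z_T>0$. Combined with the uniform envelope $\frac{Z_{s,\varepsilon}Z_{T-s,\varepsilon}}{Z_{T,\varepsilon}}\le 1$ provided by \cref{Proposition: Monotonicty of fraction of partition functions}(i), a second application of dominated convergence on $[0,T]$ produces
\begin{equation*}
  \limsup_{\varepsilon\downarrow 0}\rho_\varepsilon \;\le\; \frac1T\int_0^T\frac{Z_s Z_{T-s}}{Z_T}\,\mathrm ds.
\end{equation*}
Finally, letting $T\to\infty$ and applying \cref{Theorem: Convergence of averaged partition function} to $\mu$ itself turns the right hand side into $\mu(\{E\}) = \rho$, which is the asserted inequality.

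No step appears seriously difficult: \cref{Proposition: Monotonicty of fraction of partition functions} simultaneously supplies the a-priori bound needed to dominate $\rho_\varepsilon$ at finite $T$ and the uniform envelope $\le 1$ needed for the first limit interchange. The only delicate point is the $\varepsilon\downarrow 0$ passage inside the path integral, but the finite-dimensionality of $\cH$ forces the Feynman--Kac exponent to be pathwise bounded on $[0,T]^2$, which reduces the matter to a routine invocation of dominated convergence.
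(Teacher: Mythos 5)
Your argument is correct, and it takes a genuinely different route from the paper's. The paper proves the lemma by invoking \cref{Lemma: continuity ground-state energy} to get $E_\varepsilon\to E$, shows via the Feynman--Kac formula that the shifted spectral measures $\nu_\varepsilon(\cdot)=\mu_\varepsilon(\cdot+E_\varepsilon)$ converge weakly to $\nu_0$, and then applies the Portmanteau theorem (upper semicontinuity on the closed set $\{0\}$) to conclude $\limsup_\varepsilon\nu_\varepsilon(\{0\})\le\nu_0(\{0\})$. You instead exploit the structure already built in \cref{Section: Wiener Type Theorems}: \cref{Proposition: Monotonicty of fraction of partition functions}\,(iii) turns the limit in \cref{Theorem: Convergence of averaged partition function} into an infimum over $T$, giving the finite-$T$ upper bound $\rho_\varepsilon\le\frac1T\int_0^T Z_{s,\varepsilon}Z_{T-s,\varepsilon}/Z_{T,\varepsilon}\,\mathrm ds$, after which two routine dominated-convergence passages (first in the Feynman--Kac integrand, using $g_\varepsilon\le g$ and the finiteness of $\cH$; then in $s\in[0,T]$, using the uniform bound $\le 1$ from \cref{Proposition: Monotonicty of fraction of partition functions}\,(i)) let you send $\varepsilon\downarrow 0$ before $T\to\infty$. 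A nice side benefit of your route is that you never need $E_\varepsilon\to E$: the quotients $Z_sZ_{t-s}/Z_t$ are invariant under translation of the measure, so \cref{Lemma: continuity ground-state energy} becomes superfluous for this lemma. The paper's proof is a bit shorter, but yours is self-contained within the already-established Wiener-type machinery. One small point worth stating explicitly (as the paper also uses silently) is that \cref{Theorem: Convergence of averaged partition function} applies to $\mu_\varepsilon$ with $\rho_\varepsilon=\mu_\varepsilon(\{E_\varepsilon\})$ only because $\inf\sigma(H_\varepsilon)=\inf\supp\mu_\varepsilon$, which follows from the positivity-preservation argument of \cref{subsec:positivity} applied to $H_\varepsilon=H+\varepsilon\mathbf N$.
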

	\begin{proof}
		Let $\nu_{\varepsilon}$ denote the spectral measure of $\tilde H_\varepsilon = H_\varepsilon - E_\varepsilon$ with respect to $\psi$. The Feynman-Kac formula \cref{Equation: Feynman-Kac for regularized Hamiltonian,Lemma: continuity ground-state energy} imply that
		\begin{equation*}
			\int_\mathbb R e^{-tx} \nu_\varepsilon(\mathrm dx) = e^{tE_\varepsilon} \langle \psi, e^{-tH_\varepsilon} \psi \rangle \to e^{tE} \langle \psi, e^{-tH} \psi \rangle = \int_\mathbb R e^{-tx} \nu_0(\mathrm dx)
		\end{equation*}		
		as $\varepsilon \downarrow 0$ and hence $\nu_\varepsilon \to \nu_0$ weakly as $\varepsilon \downarrow 0$. Hence, by the Portmanteu Theorem
		\begin{equation*}
			\limsup_{\varepsilon\downarrow0} \rho_\varepsilon = \limsup_{\varepsilon\downarrow0}\mu_\varepsilon(\{E_\varepsilon\}) = \limsup_{\varepsilon\downarrow 0} \nu_\varepsilon(\{0\}) \leq \nu_0(\{0\}) = \mu(\{E\}) = \rho. \qedhere
		\end{equation*}
	\end{proof}
	
	\subsection{Proof of \cref{Proposition: Estimates in sepcial case of Spin-Boson}}
	
	We finish our discussion by giving the proof of \cref{Proposition: Estimates in sepcial case of Spin-Boson} in which we review and strengthen the upper estimates on $\log(1/\rho)$ given in \cref{Theorem: estimate vacuum overlap generalized Spin-Boson,Theorem: Estimates Spin-Boson models} for the special case of the SSB model.
	
	\begin{proof}[Proof of \cref{Proposition: Estimates in sepcial case of Spin-Boson}]
		For the SSB model, the process $X$ is a continuous time random walk on $\{-1, 1\}$ with $\operatorname{Exp(1)}$ distributed waiting times. Let $\xi$ be the point process of jumping times of $X$. Then $\xi$ is a Poisson point process whose intensity measure is the Lebesgue measure. Notice that for any $0\leq s\leq t$
		\begin{equation}
		\label{Equation: Spin product in terms of jumping process}
			X_s X_t = (-1)^{\xi((s, t])}.
		\end{equation}
		Hence, the for any $0\leq s \leq t$ the processes $(X_{u}X_{v} )_{u, v\in [0, s]}$ and $(X_{u}X_{v})_{u, v\in [s, t]}$ are independent and we have by translation invariance of $\xi$
		\begin{equation*}
			Z_s Z_{t-s} = e^{-t}\mathbb E\Big[\exp\Big(\frac{1}{2}\int_{[0, s]^2}g(v-u) X_u X_v\, \mathrm du \mathrm dv + \frac{1}{2}\int_{[s, t]^2}g(v-u) X_u X_v\, \mathrm du \mathrm dv\Big)  \Big].
		\end{equation*}
		Hence, we have
		\begin{equation*}
			\frac{Z_s Z_{t-s}}{Z_t} = \widehat{\mathbb E}_t\Big[\exp\Big(-\int_{[0, t]^2}\1_{\{0\leq u \leq s < v \leq t\}}g(v-u) X_u X_v\, \mathrm du \mathrm dv \Big)\Big].
		\end{equation*}
		As in the proof of \cref{Theorem: Estimates Spin-Boson models}, the estimate \cref{Equation: Existence criterion Spin-Boson} follows by Markov's inequality and \cref{Theorem: Convergence of averaged partition function}. The estimate \cref{Equation: Existence criterion Spin-Boson in terms of number operator} directly follows from \cref{Equation: Existence criterion Spin-Boson,Proposition: Representation of upper bound as expected boson number}.
	\end{proof}

	
	\bibliographystyle{halpha-abbrv}
	\bibliography{../../Literature/00lit}
	
\end{document}